\newcommand{\Vars}{\ensuremath{\mathcal{V}}}
\newcommand{\vars}[1]{\ensuremath{\Vars(#1)} }
\newcommand{\getcls}[1]{\ensuremath{\Vars_{sym}(#1)}}
\newcommand{\getidx}[1]{\ensuremath{\Vars_{\mathbb{N}}(#1)}}
\newcommand{\Tcal}{\ensuremath{{\cal T}}}
\newcommand{\shiftd}[2]{\ensuremath{\mathit{sh}_{#1}(#2)}}
\newcommand{\irr}[1]{\ensuremath{\mathcal{I}_{#1}} }
\newcommand{\irrV}[1]{\ensuremath{\mathcal{I}^{\Vars}_{#1}}}
\newcommand{\irrSub}[1]{\ensuremath{\sigma^{\mathcal{I}}_{#1}}}
\newcommand{\dom}[1]{\ensuremath{\mathit{dom}(#1)}}
\newcommand{\ran}[1]{\ensuremath{\mathit{ran}(#1)}}
\newcommand{\dep}[1]{\ensuremath{\mathit{dep}(#1)}}
\newcommand{\pos}[1]{\ensuremath{\mathit{pos}(#1)}}
\newcommand{\sub}[1]{\ensuremath{\mathit{sub}(#1)}}
\newcommand{\hd}[1]{\ensuremath{\mathit{head}(#1)}}
\newcommand{\unif}{\ensuremath{\stackrel{{\scriptscriptstyle ?}}{=}}}
\newcommand{\subEQ}[1]{\ensuremath{\mathit{\sigma}_{#1}^{EQ}}}
\newcommand{\storeC}[2]{\ensuremath{\mathit{S}_{#1}(#2)}}
\newcommand{\NstoreC}[2]{\ensuremath{\overline{\mathit{S}_{#1}(#2)}}}
\newcommand{\FuR}[1]{\ensuremath{\mathit{F}^{#1}}}
\newcommand{\stepsubs}[2]{\ensuremath{\Theta^{#1}_{\mathcal{#2}}}}
\newcommand{\useq}[2]{\ensuremath{\mathcal{U}_{#1}^{#2}}}
\newcommand{\uProb}[1]{\ensuremath{\mathcal{U}_{#1}}}
\newcommand{\finCon}[2]{\ensuremath{\mathit{fin}_{#1}(#2)}}
\newcommand{\maxbnd}[2]{\ensuremath{\mathit{maxI}(#1,#2)}}
\newcommand{\minbnd}[2]{\ensuremath{\mathit{minI}(#1,#2)}}
\newcommand{\depbnd}[2]{\ensuremath{\mathit{db}_{#1}(#2)}}
\newcommand{\config}[3]{\ensuremath{\USlangle #1,#2\USrangle_{#3}}}
\newcommand{\USlangle}{\ensuremath{\pmb{\langle}}}
\newcommand{\USrangle}{\ensuremath{\pmb{\rangle}}}
\newcommand{\Rec}[2]{\ensuremath{\mathit{R}_{#1}(#2)}}
\newcommand{\base}[2]{\ensuremath{\mathit{B}_{#1}(#2)}}
\newcommand{\baseSet}[1]{\ensuremath{\mathit{B}_{#1}}}
\newcommand{\depb}{\depbnd{\Theta}{\useq{\Theta}{0}\stepsubs{0}{U}}}
\begin{document}
\title{Schematic Unification}
%
%\titlerunning{Abbreviated paper title}
% If the paper title is too long for the running head, you can set
% an abbreviated paper title here
%
\author{David M. Cerna\thanks{This work was supported by the Czech Science Foundation Grant 22-06414L and Cost Action CA20111 EuroProofNet.}\orcidID{0000-0002-6352-603X}}
\authorrunning{D. M. Cerna}
% First names are abbreviated in the running head.
% If there are more than two authors, 'et al.' is used.
%
\institute{Czech Academy of Sciences Institute of Computer Science \\
\email{dcerna@cs.cas.cz}}
\maketitle              % typeset the header of the contribution
\begin{abstract}
We present a generalization of first-order unification to a term algebra where variable indexing is part of the object language. We exploit variable indexing by associating some sequences of variables ($X_0,\ X_1,\ X_2,\dots$) with a mapping $\sigma$ whose domain is the variable sequence and whose range consist of terms that may contain variables from the sequence. From a given term $t$, an infinite sequence of terms may be produced by iterative application of $\sigma$. Given a unification problem $U$ and mapping $\sigma$, the \textit{schematic unification problem} asks whether all unification problems $U$, $\sigma(U)$, $\sigma(\sigma(U))$, $\dots$ are unifiable. We provide a terminating and sound algorithm. Our algorithm is \textit{complete} if we further restrict ourselves to so-called $\infty$-stable problems. We conjecture that this additional requirement is unnecessary for completeness. Schematic unification is related to methods of inductive proof transformation by resolution and inductive reasoning. 

\keywords{Unification\and Induction\and Subsumption}
\end{abstract}
\section{Introduction}
Consider a sequence of clause sets of the form: $C_n=\{ x\leq x,\  max(x,y)\not \leq z \vee x\leq z,\
 max(x,y)\not \leq z \vee y\leq z\}\cup 
\{ f(x)\not = i \vee \ f(y)\not = i\vee s(x)\not \leq y\mid 0\geq i\geq n\}\cup 
\{ f(x)= 0\vee \dots \vee f(x)=n\}$ where $n\geq 0$ and $x\leq y \equiv x=y \vee x<y$\footnote{Extracted from a sequent calculus proof of the infinitary pigeonhole principle. The full refutation $C_n$ is described in Chapter 6 of David Cerna's Dissertation \href{https://doi.org/10.34726/hss.2015.25063}{doi.org/10.34726/hss.2015.25063}.}. A refutation of any member of this sequence is (not so easily) handled by modern theorem provers; however, refutating $C_n$ as a whole and finitely representing the result is beyond their scope. An important milestone towards developing automated reasoning methods for such clause sets is the development of unification methods for \emph{schema of terms}, a problem we tackle in this paper. 

The origin of such clause sets begins with \emph{cut-elimination}~\cite{Gentzen34} within the classical sequent calculus. Cut-elimination provides the foundation of automated reasoning through \emph{Herbrand's Theorem} and provides a method for \emph{formal proof analysis}~\cite{girard87}. Inversely, methods of cut-elimination based on resolution (CERES) ~\cite{DBLP:journals/jsc/BaazL00} benefit from the state-of-the-art developments in the field of automated reasoning, providing an efficient approach to cut-elimination, and hinting at the possibility of a framework for inductive proof analysis~\cite{DBLP:journals/tcs/BaazHLRS08}. While CERES has been extended, allowing for simple inductive proof analysis together with an effective extension of \emph{Herbrand's Theorem}~\cite{DBLP:journals/logcom/LeitschPW17}, the limiting factor is the state-of-the-art in inductive theorem proving~\cite{DBLP:journals/fuin/AravantinosEP13,DBLP:journals/jar/BensaidP14,DBLP:books/daglib/0013358}. Inductive proof analysis results in clause sets similar to $C_n$ and requires a form of \emph{schematic unification} to refute.

The first formal description of schematic unification was presented in~\cite{DBLP:conf/cade/CernaL16,DBLP:journals/jar/CernaLL21}, and a special case was handled in~\cite{DBLP:conf/synasc/Cerna21}. While the proof-theoretic origins of this problem are far removed from practice, one can imagine applying this type of unification within areas concerned with stream reasoning and inductive/co-inductive types and structures. We present an algorithm for the unifiability of uniform schematic unification problems (Definition~\ref{def.classmapping}) that is both sound and terminating and is complete when the unification problem is restricted to being  $\infty$-stable (Definition~\ref{def:infstable}). We conjecture that every uniform schematic unification problem is  $\infty$-stable. 

The paper is as follows:  in Section~\ref{sec:prelims}, we introduce the basic notation; in Section~\ref{sec:schematicUnif}, we introduce the schematic unification problem, in Section~\ref{sec:findepunif}, we introduce $\Theta$-unification, in Section~\ref{sec:propTheta}, we discuss properties of $\Theta$-unification, in Section~\ref{sec:decide}, we
present our algorithm for the unifiability of uniform schematic unification, and in Section~\ref{sec:conclud}, we discuss open problems and future work.

\subsection{Related work}\label{subsec:relatedwork}
 One notable example of a related investigation is the unification of \emph{Primal Grammars}~\cite{HERMANN1997111}. However, unlike the problem investigated in this work, unification of primal grammars only considers a finite number of existential (unifiable) variables, which may occur infinitely often in the schematization. Instead, we consider an infinite number of existential variables, each of which may occur only a finite number of times. This issue extends to \emph{P-grammars}~\cite{AravantinosCP10}, an extension of primal grammars. 

The \textit{Schematic Substitution} (Definition~\ref{def.classmapping}) is essentially a set of rewrite rules, and thus, is related to \emph{E-unification} and \emph{narrowing}~\cite{ESCOBAR2009103}. However, unlike narrowing methods, \emph{(i)} our term rewriting system is non-terminating (but confluent), \emph{(ii)} we apply rewrite rules to variables, not terms, \emph{(iii)} rule applications may introduce variables which are not fresh (so-called {\em extra variables}) implying that we would need to consider narrowing over a conditional rewrite system~\cite{DBLP:journals/scp/CholewaEM15}. As with primal grammars, restricting the variables occurring on the right-hand side of rewrite rules is essential to the decidability of its unification problem.

Our setting is similar to first-order cyclic \emph{term graphs}~\cite{DBLP:journals/mscs/BaldanBCK07,HABEL1995110}. For example, unification of cyclic term graphs in the nominal setting is presented in~\cite{DBLP:journals/fuin/Schmidt-Schauss22}. However, we handle variables differently from these earlier investigations. Anti-unification over first-order cyclic term graphs~\cite{DBLP:conf/rta/BaumgartnerKLV18} has also been studied. However, anti-unification typically introduces variables rather than solving for the appropriate substitution~\cite{DBLP:journals/corr/abs-2302-00277}. Thus, the infinite depth of the terms involved plays a less significant role than it does when unifying such structures.  
Similarly, there have been a few investigations into unification of \textit{rational trees}~\cite{DBLP:journals/constraints/ColmerauerD03,DBLP:conf/fsttcs/RamachandranH93}, i.g. infinite trees with a finite number of subtrees. Our problem considers a more general class of infinite trees: (i) may contain infinitely many variables, each occurring a finite number of times, (ii) containing infinitely many subtrees but only finitely many of them differ modulo variable renaming.  

Another related unification problem where decidability is contingent on strict conditions on variable occurrence is \emph{cycle unification}~\cite{DBLP:conf/cade/BibelHW92}, unification between self-referencing \emph{Horn clauses}. This unification problem is undecidable and thus differs from our work as we provide a decision procedure.

\section{Preliminaries}
\label{sec:prelims}
Let $\Vars_{sym}$ be a countably infinite set of variable symbols, variables are elements of $\Vars= \{ X_i\mid X\in \Vars_{sym}\wedge i\in\mathbb{N}\}$, $\Sigma$  a first-order signature such that each symbol $f\in \Sigma$ is associated with an arity $n\geq 0$ (symbols with arity $0$ are referred to as constants). By $\Tcal$, we denote the term algebra constructed from $\Vars$ and $\Sigma$. The head of a term $t\in \Tcal$, denoted $\hd{t}$, is defined as $\hd{f(t_1,\cdots, t_n)} =f$ and $\hd{x} = x$ for $x\in \Vars$.  When dealing with sequences of terms $t_1,\cdots, t_n$, we abbreviate this sequence as $\overline{t_n}$.
%{\color{red} David: Our distinction is similar to the distinction made in the theory of \textit{infinitary rewriting}~\cite{KennawayKSV95,KennawaySSV05,Kahrs10}. (I need to read a bit more about this)}
We will denote variable symbols using uppercase letters,  $X, Y, Z, X^1, Y^2, Z^3$, variables using lowercase letters when the index is not relevant, $x, y, z, x^1, y^2, z^3$, and using uppercase letters when the index is relevant $X_i, Y_i, Z_i, X^1_i, Y^2_i, Z^3_i$ where $i\in\mathbb{N}$. Variable symbols are always denoted using uppercase letters unless otherwise specified. For $X_i\in \Vars$, $\getcls{X_i}= \{X\}$ and  $\getidx{X_i}= \{i\}$. We extend these functions to  $t\in \Tcal$ as follows: $\getcls{f(\overline{t_m})}= \bigcup_{i=1}^{m} \getcls{t_i}$,\ $\getidx{f(\overline{t_m})}= \bigcup_{i=1}^{m} \getidx{t_i}$. The set of variables occurring in a term $t$ is denoted by $\vars{t}$. %When it is clear from context that $x\in \Vars$, we use $\getcls{x}$ and $\getidx{x}$ without the set notation.   

The set of \emph{positions} of a term $t\in \Tcal$, 
denoted by $\pos{t}$, is the set of strings of 
finitely many positive integers, defined as 
$\pos{x}=\{\epsilon \}$ and 
$\pos{f(\overline{t_n})}=\{\epsilon \}\cup  
\bigcup_{i=1}^n \{i. p \mid  p\in \pos{t_i}\wedge 
1\leq i \leq n\}$, where $\epsilon$ denotes the 
empty string, $f\in \Sigma$ has arity $n$, and 
$x\in \Vars$. For example, the term at position 
$\epsilon$ of $f(x,a)$ is itself, and at position 
$2$ is $a$. 

%Given a $p\in \pos{t}$, the length of $p$, denoted $\len{p}$, is  the length of the sequence of positive integers.

%and by $t[s]_p$ we denote the term resulting from replacing the term at $t\vert_p$ by $s$ within $t$
%For $p,q\in \pos{t}$ such that $p\not = q$, if neither $p<q$ nor $q<p$ holds, then we say p,q are \emph{incomparable}. For $p\in \pos{t}$ and $P\subset \pos{t}$, we say $p$ is incomparable to $P$ if $p$ is incomparable to each position $q\in P$ extended to sets of positions as expected.
%Furthermore, we define the length of a position recursively as follows: $\len{t}{\epsilon} = 0$, and $\len{t}{n.p} = 1+\len{t}{p} $ where $n\geq 1$.
%Given $t\in \Tcal$ and $p,q\in \pos{t}$, we say $p<q$, if $q=p.r$.

Let $p\in \pos{t}$, then by $t\vert_p$ we denote the term at position $p$ of $t$. 
We define the \textit{depth of a term $t$} recursively as follows:  $\dep{x}= 1$ where $x\in \Vars$, and $\dep{f(t_1,\cdots, t_n)}= 1 + \max\{\dep{t_i}\mid 1\leq i \leq n\}$.   We can use the set of positions of a term $t$ to construct the subterms of $t$, $\sub{t} = \{ t\vert_p\mid p\in \pos{t}\}$. 

%Also we define the number of occurrences of a subterm $s$ within the term $t$, $\Occ{s}{t} = \{p\vert p\in \pos{t} \wedge t\vert_p = s\}$ . Variables occurring in a term $t$ is denoted by $\vars{t} = \{t\vert_p\ \vert\ p\in\pos{t}\wedge t\vert_p\in\Vars\}$. 

A \emph{substitution} is a mapping from $\Vars$ to $\Tcal$ such that all but finitely many variables are mapped to themselves. Lowercase Greek letters are used to denote them unless otherwise stated. Substitutions are extended to terms in the usual way using postfix notation, writing $t\sigma$ for an \emph{instance} of a term $t$ under a substitution $\sigma$. The \emph{composition} of substitutions $\sigma$ and $\vartheta$, written as juxtaposition $\sigma\vartheta$, is the substitution defined as $x(\sigma\vartheta) = (x\sigma)\vartheta$ for all variables $x$. The \textit{domain} of a substitution $\sigma$,  denoted $\dom{\sigma}$, is defined as the set of variables $\{x\mid x\sigma \not =x\}$. The \textit{range} of a substitution $\sigma$,  denoted $\ran{\sigma}$, is defined as  $\{ x\sigma\mid x\in \dom{\sigma}\}$.

We assume familiarity with the basics of unification theory, see~\cite{BaaderS01}. For example, $s\unif t$, where $s,t\in\Tcal$, denotes a \textit{unification equations}. We refer to $x\unif t$ as a \textit{binding} if $x\in \Vars$, i.e. $x$ binds to $t$ (also written $x\mapsto t$). A  unification problem is a set of \textit{unification equations}. Given a unification problem $U$,  $\mathit{unif}(U)$ denotes the \textit{most general unifier (mgu)} of $U$ (if it exists)  computed by a first-order unification algorithm applied to $U$ or $\bot$ if $U$ is not unifiable (i.g.  Martelli-Montanari algorithm~\cite{MartelliM82}, or Robinson~\cite{Robinson1965}).

\section{Schematic Unification}
\label{sec:schematicUnif}
In this section, we present the construction of schematic unification problems through the variable indexing mechanism introduced in the previous section.
\begin{definition}[Index Shift]\label{def.subsshift}
Let $t\in \Tcal$ and $d\geq 0$. Then $\shiftd{d}{t}$ is defined inductively as follows: $\shiftd{d}{X_j}=X_{j+d}$ and $\shiftd{d}{f(\overline{t_m})}=f(\shiftd{d}{t_1},\cdots,\shiftd{d}{t_m)}$.
\end{definition} 
\begin{example}
    Let $t = f(X_1,g(Y_3,X_4))$. Then $\shiftd{0}{t}=f(X_1,g(Y_3,X_4))$, $\shiftd{2}{t}=f(X_3,g(Y_5,X_6))$, and $\shiftd{5}{t}=f(X_6,g(Y_8,X_9))$
    % \begin{align*}
    %     \shiftd{0}{t}=f(X_1,g(Y_3,X_4)) \quad 
    %     \shiftd{2}{t}=f(X_3,g(Y_5,X_6)) \quad 
    %     \shiftd{5}{t}=f(X_6,g(Y_8,X_9))
    % \end{align*}
\end{example}

\begin{definition}[Substitution Schema]\label{def.classmapping}
Let  $\overline{X^n}\in \Vars_{sym}$ be pairwise distinct and $\overline{t_n} \in \Tcal$. Then $\Theta = \bigcup_{i=1}^{n} \{ X_j^i\mapsto  \shiftd{j}{t_i}\mid j\geq 0\}$ is a \emph{substitution schema} with $\dom{\Theta}=\{\overline{X^n}\}$ and $\base{\Theta}{X^i}=t_i$ for $1\leq i\leq n$. Furthermore, $\baseSet{\Theta} = \bigcup_{X\in \dom{\Theta}}\base{\Theta}{X}$ is the \emph{base} of $\Theta$, and for $X\in \dom{\Theta}$,
\begin{itemize}
    \item  $R_{\Theta}(X)= \{i\mid x\in\vars{\base{\Theta}{X}}\ \& \ \getcls{x} = \{X\}\ \& \ \getidx{x} = \{i\}\}$.
    \end{itemize}
We define the following categories of substitution schema:
\begin{itemize}
    \item $\Theta$ is \emph{simple} if 
for every $t\in \baseSet{\Theta}$, $|\getcls{t}\cap\dom{\Theta}|\leq 1$
\item $\Theta$ is \emph{uniform} if \emph{(i)} it's simple, and \emph{(ii)} for $X\in \dom{\Theta}$,
$\vert R_{\Theta}(X)\vert \leq 1$.
\item $\Theta$ is  \emph{primitive} if \emph{(i)} it's uniform and \emph{(ii)} for $X\in \dom{\Theta}$,
$R_{\Theta}(X) \subset\{0,1\}$.
\end{itemize}
\end{definition} 
Given a variable $x\in \Vars$ and schematic substitution $\Theta$, we write $x\in \Theta$ ($x\not \in \Theta$) when $\getcls{x} = \{X\}$ and $X\in \dom{\Theta}$ ($X\not \in \dom{\Theta}$).
\begin{example}
Consider the following Substitution Schema (\textbf{bold} highlights why the substitution schema is not simple, uniform, or primitive):
\begin{itemize}
    \item $\Theta_1 = \{L_i \mapsto h(h(X_i,h(\textbf{R}_{i+4},X_{i})),\textbf{L}_{i+1})\mid i\geq 0\}\cup  \{R_i \mapsto h(h(h(R_{i+2},X_{i}),$ $R_{i+1})\mid i\geq 0\}$ is \textbf{not} \emph{simple}.
\item $\Theta_2 = \{L_i \mapsto h(h(X_i,h(\textbf{L}_{i+4},X_{i})),\textbf{L}_{i+1})\mid i\geq 0\}\cup  \{R_i \mapsto h(h(\textbf{R}_{i+2},X_{i}),$ $\textbf{R}_{i+1})\mid i\geq 0\}$ is \emph{simple} but \textbf{not} \emph{uniform}. 
\item $\Theta_3 = \{L_i \mapsto h(h(X_i,h(\textbf{L}_{i+4},X_{i})),\textbf{L}_{i+4})\mid i\geq 0\}\cup  \{R_i \mapsto h(h(\textbf{R}_{i+2},X_{i}),$ $\textbf{R}_{i+2})\mid i\geq 0\}$ is \emph{uniform} but \textbf{not} \emph{primitive}. 
\item $\Theta_4 = \{L_i \mapsto h(h(X_i,h(L_{i+1},X_{i})),L_{i+1})\mid i\geq 0\}\cup  \{R_i \mapsto h(h(R_{i},X_{i}),$ $R_{i})\mid i\geq 0\}$ is \emph{primitive}, i.e. $R_{\Theta}(L) =\{1\}$ and $R_{\Theta}(R)=\{0\}$. 
\end{itemize}

\end{example}

While \textit{uniform} substitution schemata seem more expressive than \textit{primitive}, the following Theorem shows that  \textit{uniform} substitution schemata can be transformed into primitive substitution schemata. 
\begin{theorem}\label{thm:primitiveEnough}
Let $\Theta$ be a uniform substitution schema such that $\vert \dom{\Theta} \vert = n$. Then there exists substitutions $\sigma_1,\cdots, \sigma_n$ and  primitive substitution schema $\Xi$ such that \emph{(i)} $\dom{\Xi} = \dom{\Theta}$ and \emph{(ii)} for all $X\in \dom{\Xi}$, $\base{\Xi}{X} = \base{\Theta}{X}\sigma_1\cdots \sigma_n$.
\end{theorem}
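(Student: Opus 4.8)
The plan is to leave the tree shape of every base term untouched and simply \emph{reindex} the unique self-reference of each domain symbol down to index $1$, since that index condition is the only way primitivity can fail once we already know $\Theta$ is uniform. Indeed, a uniform schema is by definition simple and satisfies $|R_{\Theta}(X^i)|\le 1$, so for each $1\le i\le n$ the set $R_{\Theta}(X^i)$ is either empty or a singleton $\{k_i\}$: the base term $\base{\Theta}{X^i}$ carries at most one domain symbol, and when that symbol is $X^i$ itself it occurs there only at the single index $k_i$. Primitivity then asks exactly that every such $k_i$ lie in $\{0,1\}$, so the whole problem is to push each $k_i\ge 2$ down to $1$.

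First I would set, for each $i$, $\sigma_i=\{X^i_{k_i}\mapsto X^i_1\}$ whenever $R_{\Theta}(X^i)=\{k_i\}$ with $k_i\ge 2$, and let $\sigma_i$ be the identity substitution otherwise. I would then \emph{define} $\Xi$ to be the schema with $\dom{\Xi}=\dom{\Theta}$ and $\base{\Xi}{X^i}=\base{\Theta}{X^i}\sigma_1\cdots\sigma_n$; a substitution schema is completely determined by its finitely many base terms, so this is well defined, and both conclusions (i) and (ii) hold immediately by construction. Hence the entire content of the theorem reduces to checking that $\Xi$ is primitive.

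The verification rests on the fact that the $\sigma_i$ act \emph{independently}. Because $X^1,\dots,X^n$ are pairwise distinct symbols, the nonempty domains $\{X^i_{k_i}\}$ are pairwise disjoint, and none of the replacement variables $X^i_1$ lies in any domain $\dom{\sigma_j}=\{X^j_{k_j}\}$, since $X^i_1=X^j_{k_j}$ would force $j=i$ (distinct symbols) and hence $k_i=1$, contradicting $k_i\ge 2$. Thus $\sigma_1\cdots\sigma_n$ behaves as the single simultaneous replacement sending each $X^i_{k_i}\mapsto X^i_1$ and fixing everything else. In particular no symbol is ever added to or deleted from a base term --- only certain indices change --- so $\getcls{\base{\Xi}{X^i}}=\getcls{\base{\Theta}{X^i}}$ and simplicity is inherited. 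Moreover the symbol $X^i$ cannot be introduced into $\base{\Xi}{X^i}$ by any $\sigma_j$ with $j\ne i$, so the occurrences of $X^i$ in $\base{\Xi}{X^i}$ are precisely the $\sigma_i$-images of its occurrences in $\base{\Theta}{X^i}$; by uniformity these all carried the single index $k_i$, so they now all carry index $1$ (or are unchanged when $k_i\in\{0,1\}$, or absent when $R_{\Theta}(X^i)=\emptyset$). Therefore $R_{\Xi}(X^i)\subseteq\{0,1\}$ and $|R_{\Xi}(X^i)|\le 1$ for every $i$, and $\Xi$ is primitive.

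I expect the main obstacle to be exactly this non-interference bookkeeping rather than any deep construction. Two points must be argued carefully: that the reindexing $X^i_{k_i}\mapsto X^i_1$ never collides with a preexisting occurrence of $X^i_1$, which is excluded by uniformity because $X^i$ occupies only the index $k_i$ in its own base; and that applying $\sigma_j$ with $j\ne i$ to $\base{\Theta}{X^i}$ can at worst shift a \emph{cross}-reference (an occurrence of a symbol $X^j\ne X^i$), which touches neither the single-domain-symbol count nor the self-reference set $R_{\Xi}(X^i)$. These two facts, which use simplicity together with the distinctness of the domain symbols, are what let a single pass of the $n$ reindexing substitutions repair all base terms at once without destroying simplicity or uniformity.
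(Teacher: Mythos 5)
Your construction does verify the letter of the statement --- the non-interference bookkeeping, the preservation of $\getcls{\base{\Theta}{X^i}}$, and the conclusion $R_{\Xi}(X^i)\subseteq\{0,1\}$ are all fine --- but it proves a much weaker fact than the one the paper needs, and the gap lies in what the primitive schema $\Xi$ is for. The theorem is invoked by $\mathit{makePrimitive}$ to replace a uniform schematic unification problem by an \emph{equivalent} primitive one, so the transformed schema must generate, under iterated unfolding, the same pattern of variable coincidences as $\Theta$ (up to a renaming that is also applied to $\mathcal{U}$). Merely sending $X^i_{k_i}\mapsto X^i_1$ destroys that. Concretely, for $R_i\mapsto f(f(W_i,W_{i+4}),R_{i+7})$ your $\Xi$ is $R_i\mapsto f(f(W_i,W_{i+4}),R_{i+1})$: unfolding $R_0$ in $\Theta$ produces the pairwise distinct variables $W_0,W_4,W_7,W_{11},W_{14},\dots$, whereas unfolding $R_0$ in your $\Xi$ produces $W_0,W_4,W_1,W_5,\dots,W_3,W_7,W_4,W_8,\dots$, in which $W_4$ occurs at two different unfolding depths. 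Two positions that are independent (hence freely unifiable) in every instance $\useq{\Theta}{i}$ become occurrences of one and the same variable in the instances of $\Xi$, so unifiability is not preserved. (As literally stated, with no equivalence requirement, the theorem is nearly vacuous --- one could even map the self-reference to a fresh symbol outside $\dom{\Theta}$ --- which is a sign that the intended content is the stronger, equivalence-preserving claim.)

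This is exactly why the paper's proof is more elaborate: when the recursion step of $X$ is $m=k_i$, it does not only replace $X_{i+m}$ by $X_{i+1}$ but simultaneously re-partitions every other symbol $Y$ occurring in $\base{\Theta}{X}$ into the residue classes $g_m(k,Y)$ modulo $m$, introduces a fresh symbol $X^{(k,Y)}$ per class, and maps $Y_{i+m\cdot l+(k\bmod m)}$ to $X^{(k,Y)}_{i+l}$ --- that is, it divides the indices of the accompanying variables by $m$ so that shrinking the recursion step from $m$ to $1$ reproduces the original identification pattern. It also proceeds by induction on the number of non-primitive symbols, treating the symbol with the largest recursion step first, rather than in a single independent pass. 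To repair your argument you would need to adopt (or reinvent) this index-compression step and then check that the resulting $\Xi$ is both primitive and faithful to $\Theta$; the single-variable reindexing cannot be patched locally.
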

\begin{proof}
To simplify the argumentation below let $$\mathit{NP}_{\Theta} = \{X\mid X\in \dom{\Theta} \ \& \ R_{\Theta}(X)= \{j\} \ \& \ j>1\}.$$ If  $\mathit{NP}_{\Theta}=\emptyset$, then $\Theta$ is already primitive and $\sigma_1,\cdots \sigma_n$ are identity substitutions. Let us assume, for the induction hypothesis, that when $|\mathit{NP}_{\Theta}| =m$ for  $0\leq m<n$, there exists substitutions $\sigma_1,\cdots \sigma_n$ such that $\Theta'$ is primitive. We show that this statement holds for $|\mathit{NP}_{\Theta}| =m+1$. 

Let $X\in \mathit{NP}_{\Theta}$ such that $R_{\Theta}(X) =\{m\}$ and for all $Y\in \mathit{NP}_{\Theta}$, $R_{\Theta}(Y) =\{j\}$ and $m\geq j$. Now for each $0 \leq k < m$ and $Y\in \getcls{\base{\Theta}{X}}$ such that $Y\not = X$, we define 
$$g_m(k,Y) = \{Y_{i+j}\mid j= m\cdot l+(k\mod m) \ \&\ l\geq 0 \ \&\ Y_{i+j}\in \vars{\baseSet{\Theta}}\}.$$ 
For each $g_m(k,Y)$ we associate an $X^{(k,Y)}\in \Vars_{sym}$ such that (i) $X^{(k,Y)} \not \in \getcls{\baseSet{\Theta}}$ and (ii) $X^{(k,Y)} = X^{(k',Y')}$ iff $k = k'$ and $Y=Y'$. 

% Before we construct the substitution, we need to find the max index of variables with symbol $Y$ in $\Theta$:

% $$m_{Y} = \max \{ \getidx{x} \mid x\in \vars{\baseSet{\Theta}}  \ \& \ \getcls{x} = Y\}$$

Now we construct a substitution $\sigma$ as follows: (i) $X_{i+m}\sigma = X_{i+1}$, (ii) for every $1 \leq k< m$, $Z\in\getcls{\base{\Theta}{X}}$ such that $Z\not = X$, and $Z_{i+w}\in g_m(k,Z)$ such that  $ w = m\cdot l+(k\mod m)$, $(Z_{i+w})\sigma = X^{(k,Z)}_{i+l}$, and (iii) $\sigma$ maps all other variables to themselves. 

Now let $\Theta'$  be a substitution schema such that $\dom{\Theta'} = \dom{\Theta}$ and \emph{(ii)} for all $X\in \dom{\Theta'}$, $\base{\Theta'}{X} = \base{\Theta}{X}\sigma$.

Note that $|\mathit{NP}_{\Theta'}| =m$ and thus by the induction hypothesis there exists substitutions $\sigma_1,\cdots, \sigma_n$ and primitive substitution schema  $\Xi$ such that \emph{(i)} $\dom{\Xi} = \dom{\Theta}$ and \emph{(ii)} for all $X\in \dom{\Xi}$, $\base{\Xi}{X} = \base{\Theta}{X}\sigma_1\cdots \sigma_n$.
\end{proof}

\begin{example}
    Consider the following schematic substitution $\Theta =$

    $$\begin{array}{l} \left\lbrace\begin{array}{l|c}
          L_i \Leftarrow f(f(X_{i},f(Z_{i+1},f(E_{i+3},f(X_{i+1},f(W_{i+11},X_{i+1}))))),L_{i+1})\ & \ i\geq 0\end{array}\right\rbrace \cup \\
          \left\lbrace \begin{array}{l|c} S_i \Leftarrow  f(f(E_{i},f(W_{i+3},E_{i+2})),S_{i+4})\ & \ i\geq 0\end{array}\right\rbrace \cup \\
         \left\lbrace \begin{array}{l|c} R_i \Leftarrow  f(f(W_{i},W_{i+4}),R_{i+7})\ & \ i\geq 0\end{array}\right\rbrace 
    \end{array}$$

We start by converting $R_i \Leftarrow  f(f(W_{i},W_{i+4}),R_{i+7})$. 
Observe that $g_7(0,W) = \{W_i\}$, $g_7(3,W) = \{W_{i+3}\}$,  $g_7(4,W) = \{W_{i+4},W_{i+11}\}$, and the rest are empty. Thus, we derive the following substitution 
$$\sigma_1 = \left\lbrace\begin{array}{l}
  R_{i+7}\mapsto R_{i+1}, \ W_i\mapsto X^{(0,W)}_{i},  \ W_{i+3}\mapsto X^{(3,W)}_{i}, \ W_{i+4}\mapsto X^{(4,W)}_{i},  \\
  W_{i+11}\mapsto X^{(4,W)}_{i+1}
\end{array}\right\rbrace$$

Applying $\sigma_1$ to $\Theta$ results in $\Theta' =$
 $$ \begin{array}{l} \left\lbrace\begin{array}{l|c}
          L_i \Leftarrow f(f(X_{i},f(Z_{i+1},f(E_{i+3},f(X_{i+1},f(X^{(4,W)}_{i+1},X_{i+1}))))),L_{i+1})\ & \ i\geq 0\end{array}\right\rbrace \cup \\
         \left\lbrace \begin{array}{l|c} S_i \Leftarrow  f(f(E_{i},f(X^{(3,W)}_{i},E_{i+2})),S_{i+4})\ & \ i\geq 0\end{array}\right\rbrace \cup \\
        \left\lbrace \begin{array}{l|c}  R_i \Leftarrow  f(f(X^{(0,W)}_{i},X^{(4,W)}_{i}),R_{i+1})\ & \ i\geq 0\end{array}\right\rbrace 
    \end{array}$$
Now we convert $S_i \Leftarrow  f(f(E_{i},f(X^{(3,W)}_{i},E_{i+2})),S_{i+4})$. Observe that $g_4(0,E) = \{E_i\}$, $g_4(3,E) = \{E_{i+3}\}$,  $g_4(2,W) = \{E_{i+2}\}$,  $g_4(0,X^{(3,W)}) = \{X^{(3,W)}_{i}\}$, and the rest are empty. Thus, we derive the following substitution 
$$\sigma_2 = \left\lbrace\begin{array}{l}
  S_{i+4}\mapsto S_{i+1}, \ E_i\mapsto X^{(0,E)}_{i},  \ E_{i+3}\mapsto X^{(3,E)}_{i}, \ E_{i+2}\mapsto X^{(2,E)}_{i},  \\
  X^{(3,W)}_i\mapsto X^{(0,X^{(3,W)})}_{i}
\end{array}\right\rbrace$$

Applying $\sigma_2$ to $\Theta$ results in $\Theta'' =$
 $$ \begin{array}{l}
        \left\lbrace \begin{array}{l|c}  L_i \Leftarrow f(f(X_{i},f(Z_{i+1},f( X^{(3,E)}_{i},f(X_{i+1},f(X^{(4,W)}_{i+1},X_{i+1}))))),L_{i+1})\ & \ i\geq 0\end{array}\right\rbrace \cup \\
         \left\lbrace \begin{array}{l|c} S_i \Leftarrow  f(f(X^{(4,E)}_{i},f( X^{(4,X^{(3,W)})}_{i},X^{(2,E)}_{i})),S_{i+1})\ & \ i\geq 0\end{array}\right\rbrace \cup \\
        \left\lbrace \begin{array}{l|c}  R_i \Leftarrow  f(f(X^{(7,W)}_{i},X^{(4,W)}_{i}),R_{i+1})\ & \ i\geq 0\end{array}\right\rbrace \end{array}$$
Observe that $\Theta''$ is a primitive schematic substitution. 
\end{example}
For the rest of the paper, when considering a uniform substitution schema, we assume it is primitive. We build substitutions from substitution schema by constructing a restriction based on the variables in a given term. 
\begin{definition}[$t$-substitution]\label{def.tsub}
 Let $t\in \Tcal$ and $\Theta$ a substitution schema. Then $\Theta^t$ is a substitution satisfying: for all $x\in \Vars$, if  $x\in \vars{t}$ and  $\{x\mapsto s\} \in \Theta$, then $x\Theta^t = s $, otherwise $x\Theta^t =x$.
\end{definition}

\begin{example}\label{ex:topstep2}
Consider the term $t= f(f(L_2,X_0),L_1)$ and the substitution schema  $\Theta =\{L_i\mapsto f(f(X_{i},X_{i}),L_{i+1})\mid i\geq 0\}$. Note, $\Theta$ is primitive as $R_{\Theta}(L)= \{1\}$. The $t$-substitution of $\Theta$ is as follows:  $\Theta^t= \{L_1\mapsto f(f(X_1,X_1),L_{2}),L_2$ $\mapsto f(f(X_2,X_2),L_{3})\}$.

%and $\sigma$ remains the same. Observe that $\sigma, f(a,f(L_0,b))\models f(a,f(t,b))$. 
\end{example}

\begin{definition}[$\Theta$-instances]\label{def.sequence}
Let $i\geq 0$, $t\in \Tcal$, and $\Theta$ be a substitution schema. Then the \emph{i$^{th}$ $\Theta$-instance of $t$}, denoted $\Theta^t(i)$, is defined inductively as follows: $\Theta^t(0)=t$ and  $\Theta^t(i+1)=s\Theta^s$ where $s= \Theta^t(i)$.
\end{definition}

\begin{example}
Continuing with Example~\ref{ex:topstep2}, Some $\Theta$-instances of $f(f(L_2,X_0),L_1)$ are as follows: $\Theta^t(0)=f(f(L_2,X_0),L_1)$, $\Theta^t(1)= f(f(L_2\Theta^{L_2},X_0), L_1\Theta^{L_1}) = f(f(f(f(X_2,X_2),L_3),X_0), f(f(X_1,X_1),L_2))$. Continuing the process, $\Theta^t(2)=$  $f(f(f($ $f(X_2,X_2),f(f(X_3,X_3),L_4)),X_0), f(f(X_1,X_1),f(f(X_2,X_2),L_3)))$.
\end{example}

\begin{definition}[Schematic Unification Problem]\label{def.lup}
Let $\mathcal{U}$ be a set of unification equations and $\Theta$  substitution schema. Then the pair $(\mathcal{U},\Theta)$ form a \emph{schematic unification problem}, denoted $\uProb{\Theta}$. We refer to $\uProb{\Theta}$ as \emph{simple}, \emph{uniform}, or \emph{primitive} if $\Theta$ is \emph{simple}, \emph{uniform}, or \emph{primitive}.

%We refer to a schematic unification problem $\uProb{\Theta}$  as \emph{basic} if \emph{(i)} $\Theta$ is primitive \emph{(ii)} for all $x\in \vars{\uProb{\Theta}}$,  $\getcls{X}\in \dom{\Theta}$ iff $\getidx{x}=0$.

\end{definition} 
\begin{example}
     Continuing with Example~\ref{ex:topstep2}, $\uProb{\Theta}=\{L_0 \unif f(Y_0,Y_0), f(f(L_2,X_0),$ $ L_1)\unif f(L_0,Y_1), f(f(L_2,X_0),L_1)\unif f(L_0,h(Y_0))\}$ is primitive as $\Theta$ is primitive.
\end{example}

\begin{definition}[Schematically Unifiable]\label{def.lupunif}
Let $\uProb{\Theta}$ be a schematic unification problem. Then $\mathcal{U}$ is \emph{schematically unifiable} iff for all $i\geq 0$, $\useq{\Theta}{i}=\{\Theta^t(i) \unif \Theta^s(i)\mid t\unif s\in \mathcal{U}\}$ is unifiable. 
\end{definition} 
Note that the unifier of a schematic unification problem may have an infinite domain. Thus, it must be described as a substitution schema. We focus on developing a procedure for unifiability of \emph{ uniform schematic unification problems} and leave a finite representation of the unifier to future work. However, our algorithm (Algorithm~\ref{alg:procedure}) returns the objects required to construct a unifier.

\begin{definition}[Schematic Unifier]
Let $\uProb{\Theta}$ be a schematic unification problem, $\Xi$ a substitution schema, and $\sigma$ a substitution. Then  $(\Xi,\sigma)$ is a \emph{schematic unifier} of $\uProb{\Theta}$ if for all $t \unif s\in \uProb{\Theta}$, $i\geq 0$,  $v\in \{s,t\}$, there exists $m\geq i$ such that   for all $k\geq m$, $\Theta^v(i)\mu$ subsumes $\Xi^{v\sigma}(k)$, where $\mu_{i}$ is the mgu of $\Theta^t(i) \unif \Theta^s(i)$ and $\mu = \{ x\mapsto t\mid x \not \in \Theta \ \& \ x\mu_i=t\}$. 
\end{definition}

    \begin{example}\label{ex:lupunif}
       Continuing with the substitution schema of example~\ref{ex:topstep2}, consider the schematic unification problem $\{L_0 \unif f(Y_0,Y_0)\}$. Observe that for all $i\geq 0$, $\Theta^s(i) = s$ where $s=f(Y_0,Y_0)$ and all the instance problems are unifiable, i.e. $ L_0 \unif f(Y_0,Y_0)$, $ f(f(X_0,X_0),L_1) \unif f(Y_0,Y_0)$, etc., and thus by Definition~\ref{def.lupunif} $L_0 \unif f(Y_0,Y_0)$ is schematically unifiable. A  primitive schematic unifier is $(\{L_i\mapsto f(L_{i+1},L_{i+1})\mid i\geq 0\},\{Y_0\mapsto L_1,\ L_0\mapsto f(L_1,L_1)\})$. 
    \end{example}

    \begin{example}\label{ex:lupunif2}
        For a more complex example, consider the schematic unification problem $L_0 \unif f(Y_0,f(Y_1,Y_0))$ where $\Theta = \{ L_i\mapsto f(f(X_{i},f(X_{i+1},X_{i})),L_{i+1})\mid i\geq 0\}$. This problem is also schematically unifiable and a schematic unifier is $(\{L_i\mapsto f(L_{i+1},f(L_{i+2},L_{i+1}))\mid i\geq 0\},\{Y_0\mapsto L_1,\ Y_1\mapsto L_2,\ L_0 \mapsto f(L_{1},f(L_{2},$ $L_{1}))\})$. Observe that the schematic unifier is simple but not primitive. 
    \end{example}

\section{$\Theta$-Unification}
\label{sec:findepunif}
We now introduce $\Theta$-Unification, which computes the transitive-symmetric closure of unification problems decompositionally derivable from a set of unification problems $\mathcal{U}$. Observe that $\Theta$-Unification keeps, for each variable $x$, all terms $t$ such that $x\unif t$ occurs during the unification process; this is necessary as some variables may occur in both the current instance of a schematic unification problem as well as future instances and, thus, influence the unifier. Our goal is to isolate a set of bindings which captures the recursive structure of the sequence of unification problems.  

In essence, our algorithm  constructs a set of bindings $\mathcal{S}$ such that for all $\{x\unif t\} \in \mathcal{S}$, one of the following holds: (i) $x\in 
\Theta$ (\textbf{Store}, \textit{condition 1}), (ii) there exists $\{y\unif s\}\in \mathcal{S}$ such that $y,t\in \Vars$ and $t=y$ or $x=y$ (\textbf{Store}, \textit{condition 2}), (iii) there exists $\{y\unif s\}\in \mathcal{S}$ such that $x\in \vars{s}$ (\textbf{Store}, \textit{condition 2}), or (iv) there exists $z\in \vars{\mathcal{S}}$ and $k\geq 0$ such that $z\in \Theta$ and $x\in \Theta^{z}(k)$ (\textbf{Store}, \textit{condition 3}). A \textit{configuration} is a pair of sets of the form $\config{\mathcal{S}}{\mathcal{U}}{\Theta}$ where $\mathcal{S}$ and $\mathcal{U}$ contain unification equations. Given a set $\mathcal{U}$, the  \textit{initial configuration} is of the form  $\config{\emptyset}{\mathcal{U}}{\Theta}$. The rules are presented in Table~\ref{tab:infrules}. 
\begin{table}
    \centering
\begin{itemize}
\item \textbf{Decomposition}
        $\config{\mathcal{S}}{\mathcal{U} \cup \{f(\overline{r_n})\unif f(\overline{s_n})\}}{\Theta}   \Rightarrow
\config{\mathcal{S}}{\mathcal{U}\cup \{  r_i\unif s_i \mid  1\leq i \leq n\}}{\Theta}$

\item \textbf{Orient-1} $\config{\mathcal{S}}{\mathcal{U} \cup \{ r \unif x\}}{\Theta}   \Rightarrow \config{\mathcal{S}}{\mathcal{U} \cup \{x \unif r\}}{\Theta}$ where $x\in \Vars$, $r\not \in \Vars$.
\item \textbf{Orient-2} $\config{\mathcal{S}}{\mathcal{U} \cup \{ y \unif x \}}{\Theta}   \Rightarrow \config{\mathcal{S}}{\mathcal{U} \cup \{y\unif x,x \unif y\}}{\Theta}$ where  $x\unif y \not \in \mathcal{U}$ and $x,y\in \Vars$.

\item \textbf{Transitive} $\config{\mathcal{S}}{\mathcal{U} \cup \{ x\unif r, x\unif s\}}{\Theta}    \Rightarrow \config{\mathcal{S}}{\mathcal{U} \cup \{ x \unif r, x\unif s,r\unif s\}}{\Theta}  $ where  $r\not = s$, $x\in \Vars$

\item \textbf{Reflexive} $\config{\mathcal{S}}{\mathcal{U} \cup \{ t\unif t\}}{\Theta}    \Rightarrow \config{\mathcal{S}}{\mathcal{U}}{\Theta}  $  
\item \textbf{Clash} $\config{\mathcal{S}}{\mathcal{U}\cup \{ y\unif r\}}{\Theta} \Rightarrow \bot$ where $r,s\not \in \Vars$ and $\hd{r} \not = \hd{s}$.

\item \textbf{Store} $\config{\mathcal{S}}{\mathcal{U}\cup \{ y\unif r\}}{\Theta}    \Rightarrow \config{\mathcal{S}\cup \{ y\unif r\}}{\mathcal{U}}{\Theta}  $ where $y\in \Vars$, and one of the following conditions holds:
    \begin{itemize}
        \item[1)] $y \in \Theta$,  and  if $r\in \Vars$, then $r \in \Theta$.
        \item[2)] $y \not \in \Theta$, and there exists $\{x\unif s'\} \in  \mathcal{S}$ such that $y\in \vars{s'}$, $r=x$, or $y=x$.
        \item[3)] $ y \not \in \Theta$, and  there exists $z\in \vars{\mathcal{S}}$ and $k\geq 0$ s.t. $z \in \Theta$ and $y\in \vars{\Theta^{z}(k)}$.
\end{itemize}

% {\color{blue} This rule is an optimization and not the heart of the problem.}
% {\color{red}\item \textbf{Cycle-F} $\config{\mathcal{S}}{\mathcal{U}\cup \{ y\unif r\}}{\Theta} \Rightarrow \bot$ where 
% \begin{itemize}
%     \item $r,y\in \Vars$, $\getcls{y}=\{X\}$, $X \not \in \dom{\Theta}$,$\getcls{r}=\{Y\}$, $Y \in \dom{\Theta}$,
%     \item there exists $k\geq 0$ such that $y\in \vars{\Theta^{r}(k)}$.
% \end{itemize}}
\end{itemize}
    \caption{Inference Rules of $\Theta$-unification}
    \label{tab:infrules}
\end{table}
\begin{table}
    \centering
 \begin{itemize}
\item[$\bullet$] \textit{State}($\mathcal{U},\Theta$) -- Creates a state object and a initial configuration $\config{\emptyset}{\mathcal{U}}{\Theta}$.
\item[$\bullet$] \textsf{state}.\textit{orient1}($\config{\mathcal{S}}{\mathcal{U}}{\Theta}$,  $s\unif x$) -- The \textsf{state} object guarentees that $s\unif x\in \mathcal{U}$. Applies \textbf{Orient-1} to the unification equation $s\unif x$, returns the resulting configuration and adds  $x\unif s$ to \textsf{state}.\textsf{changes}. 
\item[$\bullet$] \textsf{state}.\textit{orient2}($\config{\mathcal{S}}{\mathcal{U}}{\Theta}$,  $y\unif x$) -- The \textsf{state} object guarentees that $y\unif x\in \mathcal{U}$ and $x\unif y\not \in \mathcal{U}$. Applies \textbf{Orient-2} to the unification equation $y\unif x$, returns the resulting configuration and adds  $x\unif y$ to \textsf{state}.\textsf{changes}. 
\item[$\bullet$] \textsf{state}.\textit{decomposition}($\config{\mathcal{S}}{\mathcal{U}}{\Theta}$,  $s\unif t$) -- The \textsf{state} object guarentees that $s\unif t\in \mathcal{U}$. Applies \textbf{Decomposition} to the unification equation $s\unif t$ resulting in the set of unification equations $\mathcal{U}'$. Applies \textbf{Reflexitive} $\mathcal{U}'$ when applicable resulting in $\mathcal{U}''$. Finally, returns the resulting configuration, extended by $\mathcal{U}''$,  and adds $\mathcal{U}''$ to \textsf{state}.\textsf{changes}.
\item[$\bullet$] \textsf{state}.\textit{transitivity}($\config{\mathcal{S}}{\mathcal{U}}{\Theta}$,  $x\unif s$,  $x\unif t$) -- The \textsf{state} object guarentees that $x\unif s,x\unif t\in \mathcal{U}$ and $s\unif t\not\in \mathcal{U}$. Applies \textbf{Transitivity} to the unification equations $x\unif s$ and $x\unif t$, applies \textbf{Reflexitive} if $s = t$, returns the resulting configuration and add $s\unif t$ to   \textsf{state}.\textsf{changes} if $s\not = t$.
\item[$\bullet$] \textit{filter}(\textit{func}, \textsf{state}.\textsf{changes}) -- \textit{func} is a unary function which takes a unification equation as an argument and returns \textbf{True} or \textbf{False}. Returns the subset of \textsf{state}.\textsf{changes}  for which \textit{func} returns \textbf{True}.
\item[$\bullet$]  \textsf{state}.\textit{orient1Cond}($s\unif t$) -- checks if \textbf{Orient-1} may be applied to $s\unif t$
\item[$\bullet$]  \textsf{state}.\textit{orient2Cond}($s\unif t$) -- checks if \textbf{Orient-2} may be applied to $s\unif t$
\item[$\bullet$]   \textsf{state}.\textit{decomCond}($s\unif t$) -- checks if \textbf{Decomposition} may be applied to $s\unif t$
\item[$\bullet$]   \textsf{state}.\textit{leftVarNewEQCond}($s\unif t$) -- checks if $s\in \Vars$ and $s\unif t\not \in \textsf{state}.\textsf{varDict}[s]$.
\item[$\bullet$]   \textsf{state}.\textit{storeCond}($s\unif t$) -- checks if \textbf{Store} may be applied to $s\unif t$ 
\item[$\bullet$] \textsf{state}.\textit{store}($\config{\mathcal{S}}{\mathcal{U}}{\Theta}$,  $s\unif t$) -- The \textsf{state} object guarentees that $s\unif t\in \mathcal{U}$ and $s\unif t\not\in \mathcal{S}$. Applies \textbf{Store} to $s\unif t$ and returns the resulting configuration.
 \end{itemize}
    \caption{Sub-procedures used by Algorithm~\ref{alg:Thetaprocedure}\&~\ref{alg:updateprocedure}}
    \label{tab:subprocedures}
\end{table}
Observe that arbitrary application of these inference rules may lead to non-termination as \textbf{Transitivity} can introduce unification equations that were previously decomposed. To avoid non-termination, $\Theta$-unification is performed using the procedure outlined in Algorithm~\ref{alg:Thetaprocedure} \& Algorithm~\ref{alg:updateprocedure}. However, we will avoid using these Algorithm~\ref{alg:Thetaprocedure} \& Algorithm~\ref{alg:updateprocedure} within examples and for proofs found in the Section~\ref{sec:propTheta} as the procedure requires extensive book-keeping and obfuscates the derivation of a final configuration. 

\begin{algorithm}  
\SetAlgoLined\DontPrintSemicolon
\SetKwProg{myalg}{Procedure}{}{}
 \myalg{$\mathit{thUnif}(\mathcal{U}',\Theta)$\mbox{ \textbf{:}}}{ 
\textsf{state}, $\config{\mathcal{S}}{\mathcal{U}}{\Theta}$ $\leftarrow$ \textit{State}($\mathcal{U}',\Theta$) \tcp*{New state object and configuration}
 \While{\textit{update}\emph{(\textsf{state})}}{
        \While{$s\unif x$ $\leftarrow$ \emph{\textsf{state}.\textsf{orient1}.\textit{pop}()}}{
           $\config{\mathcal{S}}{\mathcal{U}}{\Theta}$ $\leftarrow$ \textsf{state}.\textit{orient1}($\config{\mathcal{S}}{\mathcal{U}}{\Theta}$,  $s\unif x$)
        }
        \While{$f(\overline{r_n})\unif f(\overline{s_n})$ $\leftarrow$ \emph{\textsf{state}.\textsf{decom}.\textit{pop}()}}{
           $\config{\mathcal{S}}{\mathcal{U}}{\Theta}$ $\leftarrow$ \textsf{state}.\textit{decomposition}($\config{\mathcal{S}}{\mathcal{U}}{\Theta}$, $f(\overline{r_n})\unif f(\overline{s_n})$)
        }
        \While{$x\unif y$ $\leftarrow$ \emph{\textsf{state}.\textsf{orient2}.\textit{pop}()}}{
           $\config{\mathcal{S}}{\mathcal{U}}{\Theta}$ $\leftarrow$ \textsf{state}.\textit{orient2}($\config{\mathcal{S}}{\mathcal{U}}{\Theta}$, $y\unif x$)
        }
        \While{$x\unif s,x\unif t$ $\leftarrow$ \emph{\textsf{state}.\textsf{trans}.\textit{pop}()}}{
         $\config{\mathcal{S}}{\mathcal{U}}{\Theta}$ $\leftarrow$ \textsf{state}.\textit{transitivity}($\config{\mathcal{S}}{\mathcal{U}}{\Theta}$,  $x\unif s$, $x\unif t$)
        }
 }
\While{\emph{\textsf{state}.\textsf{Store}} $\leftarrow$ \emph{\textit{filter}(\emph{\textsf{state}.}\textit{storeCond}, $\mathcal{U}$)} $\not = \emptyset$}{
\While{$x\unif t$ $\leftarrow$ \emph{\textsf{state}.\textsf{Store}.\textit{pop}()}}{
              $\config{\mathcal{S}}{\mathcal{U}}{\Theta}$ $\leftarrow$ \textsf{state}.\textit{store}($\config{\mathcal{S}}{\mathcal{U}}{\Theta}$, $x\unif t$)
}
}
\textit{unif}($\mathcal{S}\cup \mathcal{U}$)\tcp*{Throws Exception Clash or Cycle}
\Return  $\config{\mathcal{S}}{\mathcal{U}}{\Theta}$
}
  \caption{Unification Algorithm for $\Theta$-Unification}
\label{alg:Thetaprocedure}
\end{algorithm}

\begin{algorithm}  
\SetAlgoLined\DontPrintSemicolon
\SetKwProg{myalg}{Procedure}{}{}
 \myalg{$\mathit{update}$\emph{(\textsf{state})}\mbox{\textbf{:}}}{ 
  \If{\textit{clash}\emph{(\textsf{state}.\textsf{changes})}}{
  \textbf{Raise} \textsf{clashException}
 }
\textsf{state}.\textsf{orient1} $\leftarrow$ \textit{filter}(\textsf{state}.\textit{orient1Cond}, \textsf{state}.\textsf{changes})\\
\textsf{state}.\textsf{decom} $\leftarrow$ \textit{filter}(\textsf{state}.\textit{decomCond}, \textsf{state}.\textsf{changes})\\
\textsf{state}.\textsf{orient2} $\leftarrow$ \textit{filter}(\textsf{state}.\textit{orient2Cond}, \textsf{state}.\textsf{changes})\\
\For{$x\unif t \in$ \emph{\textit{filter}(\textsf{state}.\textit{leftVarNewEQCond}, \textsf{state}.\textsf{changes})} }{
\textsf{state}.\textsf{trans} $\leftarrow$ [ ($x\unif t$, $eq'$) \textbf{for} $eq'$ \textbf{in}  \textsf{state}.\textsf{varDict}[x] ]\\
\textsf{state}.\textsf{varDict}[x] $\leftarrow$ \textsf{state}.\textsf{varDict}[x] $\cup$ \{$x\unif t$\}
}

\textsf{state}.\textsf{changes} $\leftarrow$ $\emptyset$\\
\If{\emph{\textsf{state}.\textsf{orient1} $\cup$ \textsf{state}.\textsf{decom} $\cup$ \textsf{state}.\textsf{orient2} $\cup$ \textsf{state}.\textsf{trans}} $= \emptyset$}{
\Return \textsf{False}
}
\Return  \textsf{True}
}
  \caption{Update Algorithm for $\Theta$-Unification}
\label{alg:updateprocedure}
\end{algorithm}

By $\config{\mathcal{S}}{\mathcal{U}}{\Theta}\Rightarrow^* \config{\mathcal{S}'}{\mathcal{U}'}{\Theta}$, we denote the derivation of the configuration $\config{\mathcal{S}'}{\mathcal{U}'}{\Theta}$ from $\config{\mathcal{S}}{\mathcal{U}}{\Theta}$ applying the inference rules of Table~\ref{tab:infrules} in accordance with the procedure outlined in Algorithm~\ref{alg:Thetaprocedure}. By $\finCon{\Theta}{\mathcal{U}}$, we denote the \textit{final configuration} returned by Algorithm~\ref{alg:Thetaprocedure} when applied to the initial configuration $\config{\emptyset}{\mathcal{U}}{\Theta}$. The final \textit{store} is denoted by $\storeC{\Theta}{\mathcal{U}}$. If Algorithm~\ref{alg:Thetaprocedure} throws an exception, then $\finCon{\Theta}{\mathcal{U}}=\bot$ ($\storeC{\Theta}{\mathcal{U}}=\bot$). 
\begin{theorem}[Termination]\label{thm:terminationTheta}
    Let $\mathcal{U}$ be a finite set of unification problems and $\Theta$ a schematic substitution. Then, $\mathit{thUnif}(\mathcal{U},\Theta)$ (Algorithm~\ref{alg:Thetaprocedure}) terminates after finitely many steps. 
\end{theorem}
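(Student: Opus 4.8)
The plan is to prove termination by treating separately the three loops of Algorithm~\ref{alg:Thetaprocedure}: the main \textbf{while}\,\textit{update} loop, the subsequent \textbf{Store} loop, and the concluding call to \textit{unif}. The engine of the whole argument is a finiteness invariant on the terms that can ever occur in a configuration. First I would show that every equation $s \unif t$ occurring in $\mathcal{S}$ or $\mathcal{U}$ during the run satisfies $s,t \in \sub{\mathcal{U}'}$, the finite set of subterms of the input terms. This holds initially and is preserved by each rule: \textbf{Decomposition} replaces a term by immediate subterms, \textbf{Orient-1}/\textbf{Orient-2} only swap sides, and \textbf{Transitive} forms $r \unif s$ from right-hand sides $r,s$ that already occurred (hence already subterms), while \textbf{Store}, \textbf{Reflexive}, and \textbf{Clash} introduce no new terms. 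Setting $T := \sub{\mathcal{U}'}$, which is finite and subterm-closed, the set $E$ of equations with both sides in $T$ is finite, and only finitely many distinct variables occur.

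The crux is bounding the number of rule applications in the main loop, where the obstacle flagged in the text — that \textbf{Transitive} can re-introduce previously decomposed equations — is tamed by the \textsf{varDict} bookkeeping. I would observe that \textsf{varDict}$[x]$ only grows and is contained in $\{x \unif t \mid t \in T\}$, so summed over the finitely many variables it can be extended at most $|E|$ times over the whole run. Since \textit{update} enqueues a \textbf{Transitive} pair only for an equation passing \textit{leftVarNewEQCond} (i.e.\ not yet in \textsf{varDict}$[x]$) and then immediately inserts that equation into \textsf{varDict}$[x]$, each distinct variable-left equation triggers \textbf{Transitive} at most once; hence the total number of \textbf{Transitive} applications is bounded by $|E|^2$.

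It then remains to bound the \textbf{Orient-1}, \textbf{Orient-2}, and \textbf{Decomposition} applications. I would organise the run as a finite forest whose roots are the input equations together with the finitely many outputs of \textbf{Transitive}, and whose edges are the orient/decompose steps applied to an equation drawn from \textsf{changes}. Along such edges \textbf{Decomposition} strictly decreases the measure $\dep{s}+\dep{t}$ of each produced equation, while \textbf{Orient-1} and \textbf{Orient-2} are self-limiting: the output of \textbf{Orient-1} is variable-left and no longer orient- or decompose-applicable, and \textbf{Orient-2} cannot re-fire once the reversed binding is present, which no main-loop rule removes. Thus each root spawns a tree of depth and branching bounded in terms of $T$, so finitely many such steps occur. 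Combining the three bounds, the total number of rule applications — and hence of insertions into \textsf{changes} — is finite; since \textit{update} returns \textsf{True} only when some queue is non-empty, and each non-empty queue forces at least one application, the main loop performs finitely many iterations.

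Finally, the \textbf{Store} loop moves one equation from $\mathcal{U}$ to $\mathcal{S}$ per application and never returns an equation to $\mathcal{U}$, so $|\mathcal{U}|$ strictly decreases and the loop halts after at most $|\mathcal{U}|$ steps; and the concluding \textit{unif}$(\mathcal{S}\cup\mathcal{U})$ is ordinary first-order unification on a finite equation set, which terminates. I expect the main obstacle to be exactly the interaction between \textbf{Transitive} and \textbf{Decomposition}: the argument works only because the \textsf{varDict} mechanism is shown to be a global cap on \textbf{Transitive} while the $\dep{\cdot}$ measure controls the orient/decompose cascade, so that the derivation forest is finite despite equations being repeatedly removed and re-created.
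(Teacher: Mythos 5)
Your proposal is correct and follows essentially the same route as the paper's proof: a finite, subterm-closed universe of possible equations (no rule introduces fresh variables or new subterms), the \textsf{varDict} saturation argument capping \textbf{Transitive}, the depth measure bounding \textbf{Decomposition} and the orient rules, the strictly shrinking $\mathcal{U}$ in the \textbf{Store} loop, and termination of the final first-order \textit{unif} call. You simply make the paper's sketch more explicit (the subterm-closure invariant and the counting bounds), which is a welcome refinement rather than a different argument.
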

\begin{proof}
If an exception is thrown (Line 22 of Algorithm~\ref{alg:Thetaprocedure} and Line 3 of Algorithm~\ref{alg:updateprocedure}), the procedure terminates.
Observe that the inference rules of Table~\ref{tab:infrules} do not introduce fresh variables. Thus, the number of variables remains constant during the application of Algorithm~\ref{alg:Thetaprocedure} to the set of unification equations $\mathcal{U}$. Also, the set of all subterms of terms contained in $\mathcal{U}$ is finite. Thus, for any variable $x\in \vars{\mathcal{U}}$, there is a finite set of unification equations $x\unif t$ which may be introduced during the application of Algorithm~\ref{alg:Thetaprocedure} to $\mathcal{U}$. Thus, at some point, $state.varDict[x]$ will contain all possible unification equations and no pairs of the form $(x\unif t,x\unif s)$
will be added to \textsf{state}.\textsf{trans}.

Similarly, \textbf{Decomposition} can only be applied finitely many terms as the maximum term depth is finite. Furthermore, \textbf{Orient-1} and \textbf{Orient-2} can only be applied finitely many times for the same reason as \textbf{Transitivity}; this implies that after finitely many steps, Algorithm~\ref{alg:updateprocedure} will return \textbf{False}. Observe that the loop on Lines 17-21 of Algorithm~\ref{alg:Thetaprocedure} will terminate after finitely many steps as the number of unification equations in $\mathcal{U}$ is finite and \textbf{Store} removes equations from $\mathcal{U}$.
\end{proof}

\begin{theorem}[Soundness]
\label{Thm:soundnessTheta}
    Let $\mathcal{U}$ be a unifiable set of finite unification problems. Then $\storeC{\Theta}{\mathcal{U}} \not =\bot$.
\end{theorem}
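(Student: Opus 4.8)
The plan is to prove the contrapositive: if $\mathit{thUnif}(\mathcal{U},\Theta)$ fails, so that $\storeC{\Theta}{\mathcal{U}} = \bot$, then $\mathcal{U}$ is not unifiable. The backbone of the argument is a single invariant maintained along every derivation $\config{\emptyset}{\mathcal{U}}{\Theta} \Rightarrow^* \config{\mathcal{S}'}{\mathcal{U}'}{\Theta}$ produced by Algorithm~\ref{alg:Thetaprocedure}: the combined equation set $\mathcal{S}' \cup \mathcal{U}'$ has exactly the same set of first-order unifiers as the initial set $\mathcal{U}$. In particular, $\mathcal{S}' \cup \mathcal{U}'$ is unifiable if and only if $\mathcal{U}$ is. By Theorem~\ref{thm:terminationTheta} the procedure halts, so it either raises an exception or reaches the terminal $\mathit{unif}$ call, and the notion of ``failure'' is well defined.

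First I would establish this invariant by induction on the length of the derivation, checking each inference rule of Table~\ref{tab:infrules} in turn. Decomposition, Orient-1, Orient-2, Transitive, and Reflexive are the standard solution-preserving transformations of syntactic first-order unification: a substitution $\sigma$ satisfies $f(\overline{r_n}) \unif f(\overline{s_n})$ iff it satisfies every $r_i \unif s_i$; orienting an equation or adjoining its symmetric form leaves the unifiers unchanged; if $\sigma$ solves $x\unif r$ and $x\unif s$ then $r\sigma = x\sigma = s\sigma$, so the adjoined equation $r\unif s$ is a logical consequence; and $t\unif t$ is satisfied by every substitution, so deleting it is harmless. The only genuinely new rule, Store, merely relocates an equation $y\unif r$ from $\mathcal{U}$ to $\mathcal{S}$, so the union $\mathcal{S} \cup \mathcal{U}$ --- and hence its unifier set --- is literally unchanged. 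Note that $\Theta$ only gates the side-conditions of Store and never alters the equational content, so it is irrelevant to solution preservation. Since the algorithm realizes only configurations reachable by these rules (as dictated by the bookkeeping of Algorithms~\ref{alg:Thetaprocedure} and~\ref{alg:updateprocedure}), the invariant propagates to every reachable configuration.

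Next I would enumerate the two ways $\mathit{thUnif}$ can produce $\bot$ and show each forces non-unifiability of the current combined set. First, the Clash rule and the \textsf{clashException} raised inside Algorithm~\ref{alg:updateprocedure} fire only on an equation whose two sides are non-variable terms with distinct heads; such an equation has no unifier, so $\mathcal{S} \cup \mathcal{U}$ is non-unifiable. Second, the terminal call $\mathit{unif}(\mathcal{S} \cup \mathcal{U})$ throws Clash or Cycle precisely when $\mathcal{S} \cup \mathcal{U}$ is non-unifiable, by correctness of the underlying Martelli--Montanari/Robinson procedure --- a Cycle being an occurs-check failure, which in the finite-term algebra $\Tcal$ again witnesses non-unifiability. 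In either failing case the combined set at the point of failure is non-unifiable, so by the invariant $\mathcal{U}$ itself is non-unifiable; contraposing yields the theorem.

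I expect the main obstacle to lie not in the rule-by-rule solution preservation, which is routine, but in pinning down the relationship between the abstract rewrite relation $\Rightarrow$ of Table~\ref{tab:infrules} and the concrete control flow of Algorithms~\ref{alg:Thetaprocedure}--\ref{alg:updateprocedure}, so that ``every reachable configuration is rule-derivable'' is genuinely justified and the invariant transfers. A related subtlety is that Store freezes an equation into $\mathcal{S}$, where Decomposition and Transitive no longer act on it, and one must confirm this cannot mask a clash. This is exactly why the final line applies $\mathit{unif}$ to $\mathcal{S} \cup \mathcal{U}$ rather than to $\mathcal{U}$ alone: any non-unifiability hidden by freezing is still caught by the terminal standard-unification check, so no genuine clash is ever lost --- and, conversely, when $\mathcal{U}$ is unifiable there is no clash to lose and the algorithm cannot reach $\bot$, giving $\storeC{\Theta}{\mathcal{U}} \neq \bot$.
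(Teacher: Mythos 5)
Your proposal is correct and is essentially an expanded version of the paper's own (one-sentence) argument: the paper simply observes that all transformations applied by Algorithm~\ref{alg:Thetaprocedure} are valid equational transformations, which is precisely the unifier-preservation invariant you establish rule by rule before contraposing on the two failure modes. Your write-up adds detail the paper leaves implicit, but introduces no new idea or different route.
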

\begin{proof}
 Observe that all transformations applied to $\mathcal{U}$ by Algorithm~\ref{alg:Thetaprocedure} are valid equational transformations. %Furthermore, Line 22 of  Algorithm~\ref{alg:Thetaprocedure} applies $\mathit{unif}$ to the set of produced unification equations to perform \textit{occurs check}. 
\end{proof}
\begin{theorem}[Completeness]
    Let $\mathcal{U}$ be a finite set of unification problems. Then if   $\storeC{\Theta}{\mathcal{U}} \not =\bot$, then $\mathcal{U}$ is unifiable. 
\end{theorem}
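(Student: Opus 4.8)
The plan is to establish the converse of soundness (Theorem~\ref{Thm:soundnessTheta}) by exhibiting a single invariant maintained by every step of Algorithm~\ref{alg:Thetaprocedure}: no inference rule ever changes the set of first-order unifiers of $\mathcal{S}\cup\mathcal{U}$. Writing $\mathrm{Unif}(\cdot)$ for the set of first-order unifiers of a set of equations and $\mathcal{U}_0$ for the input problem (so the initial configuration is $\config{\emptyset}{\mathcal{U}_0}{\Theta}$), I would show that $\mathrm{Unif}(\mathcal{S}\cup\mathcal{U})=\mathrm{Unif}(\mathcal{U}_0)$ holds after every rule application. The observation that ties this invariant to the hypothesis is that $\storeC{\Theta}{\mathcal{U}}\neq\bot$ can arise only when Algorithm~\ref{alg:Thetaprocedure} terminates normally, which in particular means the final call $\mathit{unif}(\mathcal{S}\cup\mathcal{U})$ on Line~22 did not raise a Clash or Cycle exception.

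First I would verify that each rule of Table~\ref{tab:infrules} preserves $\mathrm{Unif}(\mathcal{S}\cup\mathcal{U})$. Decomposition replaces $f(\overline{r_n})\unif f(\overline{s_n})$ by $\{r_i\unif s_i\}$, which has the same solutions; \textbf{Orient-1} and \textbf{Orient-2} only rewrite or symmetrically duplicate an equation; \textbf{Transitive} adds $r\unif s$, a logical consequence of $x\unif r$ and $x\unif s$ that every unifier already satisfies; \textbf{Reflexive} deletes the trivially satisfied $t\unif t$; and \textbf{Store} merely relocates an equation from $\mathcal{U}$ into $\mathcal{S}$, leaving the union $\mathcal{S}\cup\mathcal{U}$ (hence its unifiers) literally unchanged. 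As these are the classical solution-preserving transformations of first-order unification problems, an induction on the length of the derivation $\config{\emptyset}{\mathcal{U}_0}{\Theta}\Rightarrow^*\finCon{\Theta}{\mathcal{U}_0}$ carries the invariant to the final configuration.

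With the invariant in hand the conclusion is immediate. If $\storeC{\Theta}{\mathcal{U}}\neq\bot$, then Algorithm~\ref{alg:Thetaprocedure} returned a configuration rather than $\bot$, so no exception was raised; in particular $\mathit{unif}(\mathcal{S}\cup\mathcal{U})$ on Line~22 succeeded on the final store together with the remaining equations. By the specification of the underlying Martelli--Montanari (or Robinson) procedure this means $\mathcal{S}\cup\mathcal{U}$ is first-order unifiable, i.e. $\mathrm{Unif}(\mathcal{S}\cup\mathcal{U})\neq\emptyset$. By the invariant $\mathrm{Unif}(\mathcal{U}_0)=\mathrm{Unif}(\mathcal{S}\cup\mathcal{U})\neq\emptyset$, so the input $\mathcal{U}$ is unifiable.

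The rule-by-rule solution-preservation checks are routine and classical. The only genuine subtlety, and the point I would treat most carefully, is bridging the procedural presentation of Algorithms~\ref{alg:Thetaprocedure} and~\ref{alg:updateprocedure} (with their state object, \textsf{varDict}, and queue-driven scheduling) to the abstract rewrite relation $\Rightarrow$ of Table~\ref{tab:infrules}: I must argue that each concrete sub-procedure invocation realizes exactly one solution-preserving rule application, so that the net effect on $\mathcal{S}\cup\mathcal{U}$ is precisely a derivation in the sense of $\Rightarrow^*$. In particular I would note that the bookkeeping never deletes an equation from $\mathcal{S}\cup\mathcal{U}$ except through \textbf{Reflexive}, and that the guards on \textbf{Store} (\textit{conditions} 1--3) affect only which rule fires, never the solution set, so the invariant is insensitive to the scheduling order.
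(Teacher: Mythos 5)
Your proof is correct and follows essentially the same route as the paper's: the rules of Table~\ref{tab:infrules} are solution-preserving equational transformations, and a non-$\bot$ store means the final call $\mathit{unif}(\mathcal{S}\cup\mathcal{U})$ succeeded, certifying unifiability of a problem equivalent to the input. You merely make explicit the invariant $\mathrm{Unif}(\mathcal{S}\cup\mathcal{U})=\mathrm{Unif}(\mathcal{U}_0)$ that the paper's two-sentence argument leaves implicit, which is a welcome tightening rather than a different approach.
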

\begin{proof}
    If $\mathit{unif}(\mathcal{U})\not = \bot$, then no  \textbf{Clash} was discovered and \textit{occurs check} did not fail. Observe that Algorithm~\ref{alg:Thetaprocedure} computes the transitive-symmetric closure modulo decomposition and will find a \textbf{Clash} if one exists in $\mathcal{U}$.
\end{proof}
An essential observation for the results presented below is as follows:
\begin{proposition}
    Let $\mathcal{U}$ be a set of  unification equations such that $\finCon{\Theta}{\mathcal{U}} =\config{\mathcal{S}'}{\mathcal{U}'}{\Theta}$. Then for all $t\unif s\in \mathcal{U}'\cup \mathcal{S}'$, $t\in \Vars$. 
\end{proposition}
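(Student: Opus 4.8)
My plan is to split the claim along the two components of the final configuration $\config{\mathcal{S}'}{\mathcal{U}'}{\Theta}$ and to reduce everything to a single statement about left-hand sides. First I would dispose of the store: an equation enters $\mathcal{S}'$ only through an application of \textbf{Store}, whose side condition requires its left-hand side $y$ to satisfy $y\in\Vars$, so every equation of $\mathcal{S}'$ is variable-left by construction. Since \textbf{Store} merely transfers equations from $\mathcal{U}$ to $\mathcal{S}$ and never inserts anything into $\mathcal{U}$, the store loop (Lines 17--21 of Algorithm~\ref{alg:Thetaprocedure}) can only shrink $\mathcal{U}$. It therefore suffices to prove that when the update loop (Lines 3--16) exits, every $t\unif s\in\mathcal{U}$ already has $t\in\Vars$; the store loop then inherits this property for the residual $\mathcal{U}'$.

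The heart of the argument is thus this crux about the exit state of the update loop, and I would establish it from two observations. The first is that being a non-variable-left equation is exactly the same as being a redex for \textbf{Orient-1} (shape $r\unif x$ with $r\notin\Vars$, $x\in\Vars$), for \textbf{Decomposition} (shape $f(\overline{r_n})\unif f(\overline{s_n})$), or for \textbf{Clash} (shape $f(\ldots)\unif g(\ldots)$ with $f\neq g$); crucially each of these side conditions depends only on the equation in isolation and is invariant as the configuration evolves. Because $\finCon{\Theta}{\mathcal{U}}\neq\bot$, no \textbf{Clash} ever fired, so every non-variable-left equation that ever appears is in fact an \textbf{Orient-1} or \textbf{Decomposition} redex. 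The second observation is the worklist invariant: every equation inserted into $\mathcal{U}$ — whether an initial equation or the output of \textbf{Orient-1}, \textbf{Orient-2}, \textbf{Decomposition}, or \textbf{Transitive} — is simultaneously recorded in \textsf{state}.\textsf{changes}, whence the next call to \textit{update} dispatches it to \textsf{state}.\textsf{orient1} or \textsf{state}.\textsf{decom}, and the ensuing main-loop pass rewrites it and removes the offending equation from $\mathcal{U}$. Since the update loop exits only when \textsf{orient1} $\cup$ \textsf{decom} $\cup$ \textsf{orient2} $\cup$ \textsf{trans} $=\emptyset$ after \textsf{changes} has been drained, no \textbf{Orient-1}/\textbf{Decomposition} redex can survive, and by the first observation no non-variable-left equation remains in $\mathcal{U}$, as required.

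The step I expect to be the main obstacle is making the worklist invariant precise, i.e. verifying that the \textsf{state}.\textsf{changes} discipline never discards a non-variable-left equation before it is rewritten (in particular when \textbf{Transitive} re-introduces a previously decomposed equation). This is exactly where the bookkeeping of Algorithm~\ref{alg:Thetaprocedure} that the paper otherwise suppresses must be tracked carefully; once the locality and time-invariance of the \textbf{Orient-1}/\textbf{Decomposition}/\textbf{Clash} conditions are combined with the fact that every insertion into $\mathcal{U}$ is mirrored in \textsf{state}.\textsf{changes}, the conclusion for $\mathcal{U}'\cup\mathcal{S}'$ follows at once.
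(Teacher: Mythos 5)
Your proposal is correct and follows essentially the same route as the paper: the store part is discharged by the side condition of \textbf{Store}, and the active set is handled by observing that any equation with a non-variable left side would be a redex for \textbf{Orient-1}, \textbf{Decomposition}, or \textbf{Clash} (the paper also lists \textbf{Reflexive}), contradicting finality of the configuration. Your additional care about the \textsf{state}.\textsf{changes} worklist discipline merely makes explicit what the paper compresses into the phrase ``final configuration,'' so it is a more detailed rendering of the same argument rather than a different one.
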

\begin{proof}
Observe that \textbf{Store} adds unification equations to the store if the left side is a variable. Thus, $\mathcal{S}'$ will only contain unification equations of the required form. If $\mathcal{U}'$ contained a unification equation $t\unif s$ such that $t\not \in \Vars$, then either we can apply (i)  \textbf{Orient 1} , (ii)  \textbf{Decomposition}, (iii) \textbf{Reflexivity}, (iv) or \textbf{Clash}. In any of these cases, the application of the rule contradicts the assumption that  $\config{\mathcal{S}'}{\mathcal{U}'}{\Theta}$ is a final configuration.
\end{proof}
\begin{example}\label{ex:oneTheta}
    Consider the schematic unification problem $\mathcal{U}=\{L_0 \unif h(Y_0,h(Y_1,$ $Y_0))\}$ where $\Theta = \{L_i \mapsto h(h(X_i,h(X_{i+1},X_i)),L_{i+1})\mid i\geq 0\}$.  Computation of $\storeC{\Theta}{\mathcal{U}_1}$ (See Definition~\ref{def.lupunif}) proceeds as follows:  $\mathcal{U}_1=\{h(h(X_0,h(X_{1},X_0)),L_{1}) \unif h(Y_0,h($ $Y_1,Y_0))\}$ and 
    $$ \config{\emptyset}{\mathcal{U}_1}{\Theta}\Rightarrow^{\textbf{Decomposition}} \config{\emptyset}{\mathcal{U}^1}{\Theta}$$ Where  $\mathcal{U}^1 = \{ h(X_0,h($ $X_1,X_0))$ $ \unif Y_0,\ L_1 \unif h(Y_1,Y_0)\}$. Next we can apply \textbf{Orient 1}
 to $h(X_0,h(X_1,X_0$ $)) \unif Y_0$ resulting in 
  $$ \config{\emptyset}{\mathcal{U}^1}{\Theta}\Rightarrow^{\textbf{Orient 1}} \config{\emptyset}{\mathcal{U}^2}{\Theta}$$
 where  $\mathcal{U}^2= \{ Y_0 \unif h(X_0,h(X_1,X_0)),\ L_1 \unif h(Y_1,$ $Y_0)\}$.
At this point we apply \textbf{Store} to the contents of $\mathcal{U}^2$ 
  $$ \config{\emptyset}{\mathcal{U}^2}{\Theta}\Rightarrow^{\textbf{Store}} \config{\mathcal{S}^1}{\{ Y_0 \unif h(X_0,h(X_1,X_0))\}}{\Theta}\Rightarrow^{\textbf{Store}}\config{\mathcal{S}^2}{\emptyset}{\Theta}$$
Where $\mathcal{S}^1= \{ L_1 \unif h(Y_1,Y_0)\}$, $\mathcal{S}^2 = \{ Y_0 \unif h(X_0,h(X_1,X_0)),\ L_1 \unif h(Y_1,Y_0)\}$. No rules are applicable to $\config{\mathcal{S}^2}{\emptyset}{\Theta}$ and thus $\finCon{\Theta}{\mathcal{U}_1} = \config{\mathcal{S}^2}{\emptyset}{\Theta}$ and $\storeC{\Theta}{\mathcal{U}_1} = \{ Y_0 \unif h(X_0,h(X_1,X_0)),\ L_1 \unif h(Y_1,Y_0)\}$. 
\end{example}
\begin{example}
\label{ex:theta2}
Continuing with Example~\ref{ex:oneTheta}, we compute $\storeC{\Theta}{\storeC{\Theta}{\mathcal{U}_1}\Theta^{L_1}(1)}$ as follows:  
$\storeC{\Theta}{\mathcal{U}_1}\Theta^{L_1}(1) = \{ Y_0 \unif h(X_0,h(X_1,X_0)),\ h(h(X_1,h(X_{2},X_1)),L_{2}) \unif h(Y_1$ $,Y_0)\}$ and
   $$ \config{\emptyset}{\storeC{\Theta}{\mathcal{U}_1}\Theta^{L_1}(1)}{\Theta}\Rightarrow^{\textbf{Decomposition}} \config{\emptyset}{\mathcal{U}^1}{\Theta}$$
 where  $\mathcal{U}^1 = \{ Y_0 \unif h(X_0,h($ $X_1,X_0)),$ $ L_2 \unif Y_0,\ h(X_1,h(X_2,X_1)) \unif Y_1\}$. Next we can apply \textbf{Orient 2}
 to $L_2 \unif Y_0$ resulting in 
  $$ \config{\emptyset}{\mathcal{U}^1}{\Theta}\Rightarrow^{\textbf{Orient 2}} \config{\emptyset}{\mathcal{U}^2}{\Theta}$$
where $\mathcal{U}^2=  \{ Y_0 \unif h(X_0,h(X_1,X_0)),$ $ h(X_1,h(X_2,X_1)) \unif Y_1,\ L_2 \unif Y_0,\ Y_0 \unif L_2\}$.
Next we can apply \textbf{Orient 1}
 to $ h(X_1,h(X_2,X_1)) \unif Y_1$ resulting in 
$$ \config{\emptyset}{\mathcal{U}^2}{\Theta}\Rightarrow^{\textbf{Orient 1}} \config{\emptyset}{\mathcal{U}^2}{\Theta}$$
where  $\mathcal{U}^2=  \{Y_0 \unif h(X_0,h(X_1,X_0)), Y_1 \unif h(X_1,h(X_2,X_1)) $ $,\ L_2 \unif Y_0,\ Y_0 \unif L_2\}$, $r_1 = h(X_1,h(X_2,X_1))$. Next we apply \textbf{Transitivity} between $Y_0 \unif L_2$ and $Y_0 \unif h(X_0,h(X_1,X_0))$ resulting in 
$$ \config{\emptyset}{\mathcal{U}^2}{\Theta}\Rightarrow^{\textbf{Transitive}} \config{\emptyset}{\mathcal{U}^3}{\Theta}$$
where  $\mathcal{U}^3=  \{L_2 \unif h(X_0,h(X_1,X_0)),\ Y_0 \unif h(X_0,h(X_1,X_0)), L_2 \unif Y_0,\ $ $ Y_1 \unif h(X_1,h(X_2,X_1)),\ Y_0 \unif L_2\}$.
At this point we apply \textbf{Store} three times resulting in $\storeC{\Theta}{\storeC{\Theta}{\mathcal{U}_1}\Theta^{L_1}(1)} = \{ Y_0 \unif L_2,\ L_2 \unif h(X_0,h(X_1,X_0)),\ $ $Y_0 \unif h(X_0,h(X_1,X_0))\}$.
\end{example}
Using $\Theta$-unification to compute $\storeC{\Theta}{\mathcal{U}_2}$ we get 
$\{L_2 \unif h(X_1,h(X_2,X_1)),Y_0\unif L_2, \ Y_0 \unif h(X_1,h(X_2,X_1))\}$ which is equivalent to  $\storeC{\Theta}{\storeC{\Theta}{\mathcal{U}_1}\Theta^{L_1}(1)}$. Furthermore, $\storeC{\Theta}{\mathcal{U}_3} = \{ X_1 \unif h(X_2,h(X_3,X_2)),\ L_3 \unif h(X_1,X_0)\}$. Observe that $\storeC{\Theta}{\mathcal{U}_3}\sigma = \storeC{\Theta}{\mathcal{U}_1}$ where  $\sigma=\{ X_4\mapsto X_2 , L_3\mapsto L_1 , X_2\mapsto Y_1 , X_3\mapsto X_1 , X_1\mapsto Y_0\}$. Essentially, we map $\storeC{\Theta}{\mathcal{U}_3}$ to $\storeC{\Theta}{\mathcal{U}_1}$ implying that if $\storeC{\Theta}{\mathcal{U}_1}$ is unifiable, then $\storeC{\Theta}{\mathcal{U}_3}$  is unifiable. At this point, we can deduce that for all $i\geq 3$, $\storeC{\Theta}{\mathcal{U}_i}$ is unifiable. This is key to  Algorithm~\ref{alg:procedure} and Theorem~\ref{Thm:algTerm}\ \& \ \ref{thm:algsoundness}.

\section{Properties of $\Theta$-Unification}
\label{sec:propTheta}
We now present important properties of $\Theta$-Unification essential to the termination and correctness of Algorithm~\ref{alg:procedure}.

\begin{definition}[Irrelevant Set]\label{def:irrelevantSet}
     Let $\uProb{\Theta}$ be a schematic unification problem, $i\geq 0$ such that $\storeC{\Theta}{\useq{\Theta}{i}}\not = \bot$, and $\finCon{\Theta}{\useq{\Theta}{i}} =\config{\storeC{\Theta}{\useq{\Theta}{i}}}{\mathcal{U}}{\Theta}$. Then we define \emph{the i$^{th}$ irrelevant set} as $\irr{i} = \{r_1\unif r_2 \mid r_1\unif r_2\in \mathcal{U}\ \& \ r_1\not\in\Theta\}.$ Furthermore, let $\irrSub{i}= \mathit{unif}(\irr{i})$, that is the unifier derived from $\irr{i}$.
\end{definition}

Observe that the definition of irrelevant set avoids equations of the form $L_2\unif X_5$; such equations are not members of $\storeC{\Theta}{\useq{\Theta}{i}}$ (\textbf{Store}, \textit{condition 2}).
\begin{example}
Consider the derivation presented in Example~\ref{ex:theta2}. The irrelevant set computed from $\storeC{\Theta}{\storeC{\Theta}{\mathcal{U}_1}\Theta^{L_1}(1)}$ is  $\irr{2} = \{Y_1 \unif h(X_1,h(X_2,X_1))\}$. We prove below  that $\storeC{\Theta}{\storeC{\Theta}{\mathcal{U}_1}\Theta^{L_1}(1)} = \storeC{\Theta}{\useq{\Theta}{2}}$ (Lemma~\ref{lem:delayedSubs}).
\end{example}
\begin{definition}[Step Substitution]\label{def:stepSub}
Let $\uProb{\Theta}$ be a schematic unification problem and $i\geq 0$ such that $\storeC{\Theta}{\useq{\Theta}{i}}\not = \bot$. Then the \emph{step substitution} is defined as
$\stepsubs{i}{U}=\bigcup_{t\in C}\Theta^{t}(1)$ where $C = \{x\mid x\in \Theta \ \& \ x\in \vars{\storeC{\Theta}{\useq{\Theta}{i}}}\}$.
\end{definition}
\begin{example} 
Consider the schematic unification problem  $\mathcal{U}$
        $$f(X_4,L_0) \unif f(f(f(A_{0},f(R_3,B_{0})),R_0),f(S_0,f(C_0,D_0)))$$
  with schematic substitution $\Theta=$ 
    $$\begin{array}{l}
          \left\lbrace \begin{array}{l|c} L_i \Leftarrow f(f(X_{i},f(Z_{i+1},f( E_{i},f(X_{i+1},f(B_{i+1},X_{i+1}))))),L_{i+1}) \ & \ i\geq 0\end{array}\right\rbrace \cup \\
          \left\lbrace \begin{array}{l|c} S_i \Leftarrow  f(f(F_{i},f( G_{i},H_{i})),S_{i+1})\ & \ i\geq 0\end{array}\right\rbrace\cup \\
         \left\lbrace \begin{array}{l|c}  R_i \Leftarrow  f(f(A_{i},B_{i}),R_{i+1})\ & \ i\geq 0\end{array}\right\rbrace
    \end{array}$$
Observe that $\storeC{\Theta}{\useq{\Theta}{0}}= $
$$\{ X[4]\unif f(f(A_{0},f(R_3,B_{0})),R_0),\ L_0\unif f(S_0,f(C_0,D_0))\}$$
The step substitution derived from $\storeC{\Theta}{\useq{\Theta}{0}}$ is 

$$\stepsubs{0}{U} = \left\lbrace \begin{array}{l}
          L_0 \Leftarrow f(f(X_{0},f(Z_{1},f( E_{0},f(X_{1},f(B_{1},X_{1}))))),L_{1})\\
          S_0 \Leftarrow  f(f(F_{0},f( G_{0},H_{0})),S_{1})\\
          R_0 \Leftarrow  f(f(A_{0},B_{0}),R_{1})\\
          R_3 \Leftarrow  f(f(A_{3},B_{3}),R_{4})\\
    \end{array}\right\rbrace$$

\end{example}
Using the above definitions, we show that equations in the irrelevant sets, modulo substitution, remain irrelevant for large instances. 

\begin{lemma}\label{lem:cumulativeI}
     Let $\uProb{\Theta}$ be a schematic unification problem, $i\geq 0$ s.t. $\storeC{\Theta}{\useq{\Theta}{i}}\not = \bot$ and $\storeC{\Theta}{\useq{\Theta}{i+1}}\not = \bot$. Then $\irr{i}\stepsubs{i}{U}\subseteq \irr{i+1}$.
\end{lemma}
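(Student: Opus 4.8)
The plan is to compare the $\Theta$-unification run on $\useq{\Theta}{i}$ with the one on $\useq{\Theta}{i+1}$, exploiting that the second problem is a single uniform $\Theta$-step of the first. Let $\vartheta_i$ be the substitution with $x\vartheta_i=\Theta^{x}(1)$ for every $x\in\Theta$ occurring in $\useq{\Theta}{i}$ and $x\vartheta_i=x$ otherwise. Unfolding Definition~\ref{def.sequence} gives $\Theta^{t}(i+1)=\Theta^{t}(i)\vartheta_i$ for every $t$, because the $s$-substitution $\Theta^{s}$ invoked at the inductive step (with $s=\Theta^{t}(i)$) agrees with $\vartheta_i$ on $\vars{s}$; hence $\useq{\Theta}{i+1}=\useq{\Theta}{i}\vartheta_i$. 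Since $\Theta$-unification never introduces fresh variables (Theorem~\ref{thm:terminationTheta}), the set $C$ defining $\stepsubs{i}{U}$ is a subset of the $\Theta$-variables of $\useq{\Theta}{i}$, so $\stepsubs{i}{U}$ is precisely the restriction of $\vartheta_i$ to the store variables.

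First I would translate membership in $\irr{i}$ into the failure of the \textbf{Store} conditions. By the Proposition every equation surviving in the working-set component of $\finCon{\Theta}{\useq{\Theta}{i}}$ has a variable on the left; an $e=(r_1\unif r_2)\in\irr{i}$ is such a survivor with $r_1\not\in\Theta$, and being unstored it must violate all three conditions — notably condition~2 (no stored equation mentions $r_1$, and $r_1$ is equated to no stored variable) and condition~3 (no store variable $z\in\Theta$ has $r_1\in\vars{\Theta^{z}(k)}$ for any $k$). I would then establish the reconciliation lemma that makes the statement well-posed: every $\Theta$-variable occurring in $e$ already occurs in $\storeC{\Theta}{\useq{\Theta}{i}}$, hence lies in $C$, so that $e\stepsubs{i}{U}=e\vartheta_i$. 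This is the point where the restriction of $\stepsubs{i}{U}$ to the store must be shown harmless on the irrelevant equations.

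Granting these, the containment itself is obtained by transporting the derivation. Since $r_1\not\in\Theta$ we have $r_1\vartheta_i=r_1$, so the left-hand side of $e\vartheta_i$ is again a non-$\Theta$ variable. Because the inference rules compute the transitive--symmetric closure modulo decomposition (the completeness result) and $\useq{\Theta}{i+1}=\useq{\Theta}{i}\vartheta_i$, applying $\vartheta_i$ pointwise to the derivation that yields $e$ from $\useq{\Theta}{i}$ gives a derivation of $\useq{\Theta}{i+1}$ in which $e\vartheta_i$ occurs; the only rule whose behaviour changes under $\vartheta_i$ is \textbf{Decomposition}, which may descend further into the now-expanded right-hand side but only adds equations and cannot consume $e\vartheta_i$, whose left-hand side is a variable. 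It then remains to prove that $e\vartheta_i$ is left unstored at level $i+1$, i.e. that $e\vartheta_i\in\irr{i+1}$.

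This last step is the main obstacle: showing that advancing the recursive core by $\vartheta_i$ does not connect the disconnected variable $r_1$ to the level-$(i+1)$ store, so that \textbf{Store} conditions~2 and~3 still fail. For condition~3 I expect a monotonicity argument to suffice: the family of variables $\{\vars{\Theta^{z}(k)}\mid z\in\Theta\cap\vars{\mathcal{S}},\ k\geq 0\}$ only enlarges under a $\Theta$-step and is stable under advancing indices, while $r_1=r_1\vartheta_i$ stayed outside it at level $i$. For condition~2 I would set up a correspondence sending each stored $x\unif s'$ at level $i$ to $x\vartheta_i\unif s'\vartheta_i$ at level $i+1$ and argue that the level-$(i+1)$ store is, modulo the new decompositions, the $\vartheta_i$-image of the level-$i$ store together with purely recursive material that never reaches $r_1$; any level-$(i+1)$ stored equation mentioning $r_1$ would then pull back to one mentioning $r_1$ at level $i$, contradicting $e\in\irr{i}$. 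Making this correspondence precise — so that the genuinely new equations created by the deeper decompositions are confined to the recursive part — is where primitivity of $\Theta$ and the level-$i$ failure of conditions~2 and~3 must be combined, and it is the technically delicate heart of the proof.
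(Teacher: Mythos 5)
Your strategy is the same as the paper's: identify $\useq{\Theta}{i+1}$ with $\useq{\Theta}{i}$ under the one-step substitution, transport the level-$i$ derivation through a configuration of the form $\config{\emptyset}{\mathcal{U}\stepsubs{i}{U}}{\Theta}$, and argue that the image of each irrelevant equation survives unstored because its left-hand side is an unchanged non-$\Theta$ variable and the level-$i$ failure of the \textbf{Store} conditions persists. The difference is one of execution rather than of ideas: the step you flag as ``the technically delicate heart'' and leave open is exactly the step the paper closes, and it does so with a single observation that your sketch circles around but never quite states --- namely that $r\irrSub{i}=r$ for every $r\in\ran{\stepsubs{i}{U}}$, because if any variable of $\irr{i}$ occurred in some $\Theta^{z}(k)$ for a store variable $z\in\Theta$, then \textbf{Store} condition~3 would already have fired at level $i$ and the equation would not be irrelevant. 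This is the precise form of your ``condition-3 monotonicity'' remark, and it is what guarantees that the material injected by $\stepsubs{i}{U}$ (and its subsequent decompositions) is variable-disjoint from $\irr{i}$, so that neither condition~2 nor condition~3 can newly connect $r_1$ to the level-$(i{+}1)$ store; combined with the fact that $\stepsubs{i}{U}$ fixes the left-hand sides of the irrelevant bindings, this yields $\irr{i}\stepsubs{i}{U}\subseteq\mathcal{U}''$ and hence $\irr{i}\stepsubs{i}{U}\subseteq\irr{i+1}$. So your proposal is the right argument in outline, with the correct supporting observations identified, but as written it is an annotated plan: to count as a proof you would need to promote the condition-3 separation remark from ``I expect a monotonicity argument to suffice'' to the explicit statement above and then discharge the store-correspondence for condition~2 as a consequence of that disjointness, which is all the paper itself does (tersely).
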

\begin{proof}
   Let $\finCon{\Theta}{\useq{\Theta}{i}} =\config{\storeC{\Theta}{\useq{\Theta}{i}}}{\mathcal{U}}{\Theta}$. Observe, for every $r\in \ran{\stepsubs{i}{U}}, r\irrSub{i} $ $= r$ as otherwise for some unification equations in $\irr{i}$ we could have applied \textbf{Store}, \textit{condition 3}. Furthermore, $\useq{\Theta}{i+1} = \useq{\Theta}{i}\stepsubs{i}{U}$. Using $\Theta$-unification, we construct the  derivation $\config{\emptyset}{\useq{\Theta}{i+1}}{\Theta}\Rightarrow^* \config{\emptyset}{\mathcal{U}\stepsubs{i}{U}}{\Theta}\Rightarrow^* \config{\storeC{\Theta}{\useq{\Theta}{i}}}{\mathcal{U}'}{\Theta} \Rightarrow^* \config{\storeC{\Theta}{\useq{\Theta}{i+1}}}{\mathcal{U}''}{\Theta}$. Note that $\irr{i}\stepsubs{i}{U}\subseteq \mathcal{U}'$ and thus, is contained in $\mathcal{U}''$ as applying $\stepsubs{i}{U}$ to $\irr{i}$ does not change the left side of bindings contained in $\irr{i}$, i.e. \textbf{Store} does not apply.  Thus, $\irr{i}\stepsubs{i}{U}\subseteq \irr{i+1}$.
\end{proof}

The recursive construction presented in the following Lemma captures what was discussed in Example~\ref{ex:theta2} at the end of Section~\ref{sec:findepunif}.
\begin{lemma}\label{lem:delayedSubs}
    Let $\uProb{\Theta}$ be a schematic unification problem and $i\geq 0$ such that $\storeC{\Theta}{\useq{\Theta}{i}}\not = \bot$. Then $
    \storeC{\Theta}{\storeC{\Theta}{\useq{\Theta}{i}}\stepsubs{i}{U}}=  \storeC{\Theta}{\useq{\Theta}{i+1}}$.
\end{lemma}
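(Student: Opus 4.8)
The plan is to reduce the claim to a statement purely about irrelevant equations and then discharge it with Lemma~\ref{lem:cumulativeI}. First I would recall from the proof of Lemma~\ref{lem:cumulativeI} that $\useq{\Theta}{i+1}=\useq{\Theta}{i}\stepsubs{i}{U}$, so that $\storeC{\Theta}{\useq{\Theta}{i+1}}=\storeC{\Theta}{\useq{\Theta}{i}\stepsubs{i}{U}}$. This turns the goal into the ``locality'' statement
$$\storeC{\Theta}{\useq{\Theta}{i}\stepsubs{i}{U}}=\storeC{\Theta}{\storeC{\Theta}{\useq{\Theta}{i}}\stepsubs{i}{U}},$$
i.e.\ that routing the computation through the $i$-th store loses none of the content needed to build the $(i+1)$-th store and introduces nothing spurious.

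The core of the argument is to replay, under $\stepsubs{i}{U}$, the derivation that produces the $i$-th store. Let $\finCon{\Theta}{\useq{\Theta}{i}}=\config{\storeC{\Theta}{\useq{\Theta}{i}}}{\mathcal{U}}{\Theta}$. I would argue that the \textbf{Decomposition}, \textbf{Orient}, and \textbf{Transitive} steps used to derive this configuration from $\config{\emptyset}{\useq{\Theta}{i}}{\Theta}$ remain applicable after $\stepsubs{i}{U}$: the step substitution only unfolds $\Theta$-variables one step and therefore only deepens terms strictly below the positions at which those rules act, so it neither blocks an applicable rule nor identifies previously distinct subterms. Replaying these steps on $\config{\emptyset}{\useq{\Theta}{i}\stepsubs{i}{U}}{\Theta}$ reaches a configuration whose equation set is the $\stepsubs{i}{U}$-image of $\storeC{\Theta}{\useq{\Theta}{i}}\cup\mathcal{U}$. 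The only change from the $i$-th run is that the stored bindings whose left-hand sides were $\Theta$-variables now carry their one-step unfoldings and are no longer immediately storable; completing the store from here is therefore precisely a $\Theta$-unification run on $\storeC{\Theta}{\useq{\Theta}{i}}\stepsubs{i}{U}$ in the presence of the leftover equations $\mathcal{U}\stepsubs{i}{U}$, which contain $\irr{i}\stepsubs{i}{U}$.

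From here I would close the argument using the irrelevant set. By Lemma~\ref{lem:cumulativeI}, $\irr{i}\stepsubs{i}{U}\subseteq\irr{i+1}$, so every equation carried along from $\useq{\Theta}{i}\stepsubs{i}{U}$ but absent from $\storeC{\Theta}{\useq{\Theta}{i}}\stepsubs{i}{U}$ is irrelevant: its left-hand side is a non-$\Theta$ variable whose right-hand side cannot trigger \textbf{Store} condition~2 or~3 against the stored part. Hence these equations never enter the final store and never produce, via \textbf{Transitive} or \textbf{Decomposition}, a new storable equation that interacts with the stored bindings. Consequently the final store obtained from $\useq{\Theta}{i}\stepsubs{i}{U}$ coincides with the one obtained from $\storeC{\Theta}{\useq{\Theta}{i}}\stepsubs{i}{U}$, which is the desired equality.

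I expect the main obstacle to be the replay/segregation step. Two points need genuine care: (i) that the final store is independent of the order in which the inference rules of Table~\ref{tab:infrules} are applied, so that the derivation may legitimately be arranged to process the ``old'' equations first (this order-independence is tacitly what lets the examples use ad-hoc $\Rightarrow^*$ derivations, but it must be stated and invoked explicitly here); and (ii) a stable invariant guaranteeing that no equation of $\irr{i}\stepsubs{i}{U}$ becomes entangled with a stored binding during the continuation. Point (ii) is exactly the role of $\irr{i}\stepsubs{i}{U}\subseteq\irr{i+1}$ together with the definition of the irrelevant set, but verifying that this segregation is preserved through \emph{every} \textbf{Transitive} step, rather than only at the endpoints, is the delicate part of the proof.
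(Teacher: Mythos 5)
Your overall strategy is the same as the paper's: use $\useq{\Theta}{i+1}=\useq{\Theta}{i}\stepsubs{i}{U}$ to rewrite the goal, replay the derivation of $\finCon{\Theta}{\useq{\Theta}{i}}$ on the substituted problem to reach the $\stepsubs{i}{U}$-image of its final configuration, and then invoke Lemma~\ref{lem:cumulativeI} to argue that the residual equations outside $\storeC{\Theta}{\useq{\Theta}{i}}\stepsubs{i}{U}$ cannot influence the $(i+1)$-th store. Your added remarks about order-independence of the rules are reasonable (the paper glosses over this too), but they are not where your argument actually breaks.

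The genuine gap is in your claim that ``every equation carried along from $\useq{\Theta}{i}\stepsubs{i}{U}$ but absent from $\storeC{\Theta}{\useq{\Theta}{i}}\stepsubs{i}{U}$ is irrelevant: its left-hand side is a non-$\Theta$ variable.'' This two-way split (store versus irrelevant set) misses a third class of residual equations, namely $N=\{r_1\unif r_2\mid r_1,r_2\in\Vars\ \&\ r_1\in\Theta\ \&\ r_2\notin\Theta\}$. Such equations are kept out of the store by \textbf{Store} condition~1 (which requires that a variable right-hand side also be in $\Theta$), and they are \emph{not} members of $\irr{i}$, since Definition~\ref{def:irrelevantSet} explicitly requires $r_1\notin\Theta$. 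These are exactly the equations for which your argument is most delicate: after applying $\stepsubs{i}{U}$ the left-hand side $r_1$ becomes a non-variable term, \textbf{Orient-1} flips the equation into $r_2\unif r_1\stepsubs{i}{U}$ with $r_2\notin\Theta$, and this is a candidate for storage via conditions~2 or~3 --- so it could in principle inject new content into $\storeC{\Theta}{\useq{\Theta}{i+1}}$ that is absent from a run started on $\storeC{\Theta}{\useq{\Theta}{i}}\stepsubs{i}{U}$ alone. The paper closes this case by arguing that any such $r_2\unif r_1\stepsubs{i}{U}$ only reproduces equations already present in $\storeC{\Theta}{\useq{\Theta}{i}}\stepsubs{i}{U}$ (because the transitive-symmetric closure computed at stage $i$ already propagated $r_1$'s bindings to $r_2$). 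Your proof needs this additional case; without it the asserted equality of the two final stores does not follow.
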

\begin{proof}
    Let $\finCon{\Theta}{\useq{\Theta}{i}}=\config{\storeC{\Theta}{\useq{\Theta}{i}}}{\mathcal{U}}{\Theta}$. From the 
    construction of  $\useq{\Theta}{i}$ and $\useq{\Theta}{i+1}$ there exists a sequence of rules such that  $\config{\emptyset}{\useq{\Theta}{i+1}}{\Theta}\Rightarrow^* \config{\emptyset}{\mathcal{U}\stepsubs{i}{U}}{\Theta}$ and 
    $\mathcal{U}\stepsubs{i}{U}= \irr{i}\stepsubs{i}{U}\cup \storeC{\Theta}{\useq{\Theta}{i}}\stepsubs{i}{U} \cup N\stepsubs{i}{U} $ where 
    $N=\{r_1\unif r_2\mid r_1,r_2\in \Vars\ \&\ r_1\in \Theta \ \& \ r_2\not \in 
    \Theta\}$. We know $\irr{i}\stepsubs{i}{U}\subseteq \irr{i+1}$ and thus it can be ignored. Concerning $N$, it contains 
    unification equations which \textbf{Store}, \textit{condition 1} avoids adding to the store. Thus, after applying $\stepsubs{i}{U}$, 
    $x\stepsubs{i}{U}\unif y$ would be removed in favor of $y\unif x\stepsubs{i}{U}$ which may already be in $\storeC{\Theta}{\useq{\Theta}
    {i}}\stepsubs{i}{U}$. Thus, $N\stepsubs{i}{U}$ can be ignored as it will only result in unification equations already in  
    $\storeC{\Theta}{\useq{\Theta}{i}}\stepsubs{i}{U}$. Furthermore,  $\config{\emptyset}{\storeC{\Theta}{\useq{\Theta}
    {i}}\stepsubs{i}{U}}{\Theta}\Rightarrow^* \config{\storeC{\Theta}{\useq{\Theta}{i+1}}}{\mathcal{U}'}{\Theta}$ Thus, $\storeC{\Theta}{\storeC{\Theta}{\useq{\Theta}{i}}\stepsubs{i}{U}} = \storeC{\Theta}{\useq{\Theta}{i+1}}$. 
\end{proof}
There are two important corollaries of Lemma~\ref{lem:delayedSubs}: 

\begin{corollary}\label{cor:alwaysempty}
    Let $\uProb{\Theta}$ be a schematic unification problem and $i\geq 0$ such that $\storeC{\Theta}{\useq{\Theta}{i}}= \emptyset$. Then for all $j\geq i$, $\storeC{\Theta}{\useq{\Theta}{j}}=  \emptyset$.
\end{corollary}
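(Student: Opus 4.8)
The plan is to prove the statement by induction on $j \geq i$, leaning entirely on Lemma~\ref{lem:delayedSubs} for the inductive step. The base case $j = i$ is exactly the hypothesis $\storeC{\Theta}{\useq{\Theta}{i}} = \emptyset$, so nothing is required there.

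For the inductive step, I would assume $\storeC{\Theta}{\useq{\Theta}{j}} = \emptyset$ for some $j \geq i$ and establish $\storeC{\Theta}{\useq{\Theta}{j+1}} = \emptyset$. Since $\emptyset \neq \bot$, the hypothesis of Lemma~\ref{lem:delayedSubs} is satisfied at index $j$, so I may invoke it to obtain $\storeC{\Theta}{\useq{\Theta}{j+1}} = \storeC{\Theta}{\storeC{\Theta}{\useq{\Theta}{j}}\stepsubs{j}{U}}$. Substituting the inductive hypothesis $\storeC{\Theta}{\useq{\Theta}{j}} = \emptyset$ into the right-hand side reduces the task to evaluating $\storeC{\Theta}{\emptyset\,\stepsubs{j}{U}}$.

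The two facts I would then verify are both immediate from the definitions. First, applying the step substitution $\stepsubs{j}{U}$ to the empty set of unification equations yields the empty set, so $\emptyset\,\stepsubs{j}{U} = \emptyset$. Second, running $\Theta$-unification on the empty set produces the empty store: the initial configuration $\config{\emptyset}{\emptyset}{\Theta}$ admits no applicable inference rule from Table~\ref{tab:infrules}, since every rule requires at least one equation in the $\mathcal{U}$ component, and the terminating call to \textit{unif} on the empty set throws no exception; hence $\finCon{\Theta}{\emptyset} = \config{\emptyset}{\emptyset}{\Theta}$ and thus $\storeC{\Theta}{\emptyset} = \emptyset$. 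Chaining these gives $\storeC{\Theta}{\useq{\Theta}{j+1}} = \storeC{\Theta}{\emptyset} = \emptyset$, closing the induction.

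There is no genuine obstacle here; the corollary is essentially a direct unfolding of Lemma~\ref{lem:delayedSubs}. The only point requiring any care is the well-definedness check needed to apply the lemma at each step, namely that $\storeC{\Theta}{\useq{\Theta}{j}} \neq \bot$, which is guaranteed precisely because the inductive hypothesis delivers the empty store rather than $\bot$.
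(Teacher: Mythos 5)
Your proposal is correct and follows exactly the route the paper intends: the paper gives no explicit proof, presenting the statement as an immediate corollary of Lemma~\ref{lem:delayedSubs}, and your induction on $j$ with the observations that $\emptyset\,\stepsubs{j}{U}=\emptyset$ and $\storeC{\Theta}{\emptyset}=\emptyset$ is precisely the unfolding of that lemma. Nothing is missing.
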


\begin{corollary}\label{cor:alwaysfail}
    Let $\uProb{\Theta}$ be a schematic unification problem and $i\geq 0$ such that $\storeC{\Theta}{\useq{\Theta}{i}}= \bot$. Then for all $j\geq i$, $\storeC{\Theta}{\useq{\Theta}{j}}=  \bot$.
\end{corollary}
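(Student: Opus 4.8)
The plan is to bypass the step substitution $\stepsubs{i}{U}$ — which presupposes $\storeC{\Theta}{\useq{\Theta}{i}}\neq\bot$ and is therefore unavailable under our hypothesis — and instead argue directly at the level of unifiability, using Soundness (Theorem~\ref{Thm:soundnessTheta}) and Completeness as a bridge. By the contrapositive of Theorem~\ref{Thm:soundnessTheta}, the hypothesis $\storeC{\Theta}{\useq{\Theta}{i}}=\bot$ says exactly that the instance problem $\useq{\Theta}{i}$ is \emph{not} unifiable; and by the contrapositive of the Completeness theorem, it suffices to show that $\useq{\Theta}{j}$ is non-unifiable for every $j\geq i$, since non-unifiability of $\useq{\Theta}{j}$ forces $\storeC{\Theta}{\useq{\Theta}{j}}=\bot$. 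Thus the corollary reduces to the purely semantic statement: once an instance problem fails to unify, every later instance problem fails as well.

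The key structural fact I would establish is that passing from the $i$-th instance to the $(i{+}1)$-th instance is the application of a single substitution to every equation. Unfolding Definition~\ref{def.sequence}, for each side $v$ of an equation we have $\Theta^v(i+1)=\Theta^v(i)\,\Theta^{\Theta^v(i)}$, and the relevant $t$-substitution acts on a schematic variable $L_k$ by the term-independent rule $L_k\mapsto\shiftd{k}{\base{\Theta}{L}}$ while fixing every non-schematic variable. Consequently, letting $\delta_i$ be the restriction of $\Theta$ to the finitely many schematic variables occurring in $\useq{\Theta}{i}$, one gets $v\,\delta_i=\Theta^v(i+1)$ for each side $v$ of each equation, so that $\useq{\Theta}{i+1}=\useq{\Theta}{i}\,\delta_i$. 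Note that $\delta_i$ is a genuine substitution (finite domain) and that this identity holds unconditionally, in particular without any assumption on the store.

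With these ingredients the argument closes by a short induction on $j$, with base case $j=i$ given by the hypothesis. For the inductive step I invoke the standard preservation of non-unifiability under instantiation: if $E$ is not unifiable, then $E\sigma$ is not unifiable for any $\sigma$, because a unifier $\mu$ of $E\sigma$ would make $\sigma\mu$ a unifier of $E$. Applying this with $E=\useq{\Theta}{j}$ (non-unifiable by the induction hypothesis) and $\sigma=\delta_j$ shows that $\useq{\Theta}{j+1}=\useq{\Theta}{j}\,\delta_j$ is non-unifiable, whence $\storeC{\Theta}{\useq{\Theta}{j+1}}=\bot$ by Completeness. The main obstacle is the middle paragraph: one must argue carefully that the term-dependent $t$-substitutions of Definition~\ref{def.sequence} collapse into one common restriction $\delta_i$ of $\Theta$. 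This rests on the fact that the image $\shiftd{k}{\base{\Theta}{L}}$ of a schematic variable $L_k$ under $\Theta$ is determined by $L_k$ alone, independently of the term in which it occurs, so that the various $t$-substitutions differ only in their domains and agree wherever both act; everything else is routine.
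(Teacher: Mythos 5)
Your proof is correct, but it takes a genuinely different route from the paper. The paper presents this corollary (without an explicit proof) as a consequence of Lemma~\ref{lem:delayedSubs}, i.e.\ of the recursive store identity $\storeC{\Theta}{\storeC{\Theta}{\useq{\Theta}{i}}\stepsubs{i}{U}}=\storeC{\Theta}{\useq{\Theta}{i+1}}$, so the intended argument lives at the level of the $\Theta$-unification derivations: the computation on instance $i+1$ factors through the computation on instance $i$, so a clash or cycle detected at stage $i$ reappears at stage $i+1$. You instead argue purely semantically: you combine the soundness and completeness theorems of $\Theta$-unification (so that $\storeC{\Theta}{\useq{\Theta}{j}}=\bot$ is equivalent to non-unifiability of $\useq{\Theta}{j}$) with the observation that $\useq{\Theta}{i+1}=\useq{\Theta}{i}\,\delta_i$ for a single finite-domain substitution $\delta_i$, and then invoke preservation of non-unifiability under instantiation. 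Your identification of $\delta_i$ is sound --- the image of $L_k$ under a $t$-substitution depends only on $L_k$, so the various term-dependent restrictions of $\Theta$ agree wherever they both act --- and your insistence on using the restriction of $\Theta$ to the variables of $\useq{\Theta}{i}$ rather than the step substitution $\stepsubs{i}{U}$ is exactly right, since the latter is defined from the store and is unavailable when $\storeC{\Theta}{\useq{\Theta}{i}}=\bot$. Indeed, this is a real advantage of your route: Lemma~\ref{lem:delayedSubs} formally presupposes $\storeC{\Theta}{\useq{\Theta}{i}}\neq\bot$, which is precisely what fails under the corollary's hypothesis, so the paper's derivation needs an additional (unstated) argument about exception propagation, whereas yours does not. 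The trade-off is that your argument leans on the soundness and completeness theorems of $\Theta$-unification as black boxes, whereas the paper's route stays internal to the rewrite-system view of the algorithm.
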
 
\subsection{Bounding Term Depth}

The recursive application of $\Theta$-unification to a schematic unification problem $\uProb{\Theta}$ greatly simplifies the computation of $\storeC{\Theta}{\useq{\Theta}{i}}$ for $i\geq 0$; we only need to consider bindings which are relevant to larger instances. In addition, we can bound the depth of terms that occur in a \textit{normalimalized} form of $\storeC{\Theta}{\useq{\Theta}{i}}$.

\begin{definition}[$\Theta$-normalization]
 Let $\Theta$ be a schematic substitution and $r\in\Tcal$. We define the \emph{$\Theta$-normalization} of the term $r$, abbreviated $r\uparrow_\Theta$,  inductively as follows: \emph{(i)} if $r\in \Vars$ then  $r\uparrow_{\Theta} =r$, \emph{(ii)} if there exists $d,k> 0$ and $L\in \dom{\Theta}$ such that $r= \Theta^{L_d}(k)$ , then $r\uparrow_\Theta = L_d$, \emph{(iii)} otherwise, if $r=f(r_1,\dots,r_n)$ for $f\in \Sigma$, then $r\uparrow_t =f(r_1\uparrow_\Theta,\dots,r_n\uparrow_\Theta)$.
\end{definition}
\begin{example}
    Consider  $\Theta = \{L_i \mapsto h(h(X_i,h(X_{i+1},L_{i+1})),X_i)\}$ and the term $r=h(X_0,h(X_{1},h(h(X_1,h(X_{2},L_{2})),X_1)))$. Then $r\uparrow_{\Theta} =h(X_0,h(X_{1},L_1))$  as $r\vert_{2.2} = h(h(X_1,h(X_{2},L_{2})),X_1) = \Theta^{L_1}(1)$.
\end{example}

\begin{definition}[Normalized Store]\label{def:normalized}
    Let $\uProb{\Theta}$ be a schematic unification problem and $i\geq 0$. Then $\NstoreC{\Theta}{\useq{\Theta}{i}} =\{x\unif t\uparrow_\Theta \mid  x\unif t\in \storeC{\Theta}{\useq{\Theta}{i}}\}$.
\end{definition}
By considering $\NstoreC{\Theta}{\useq{\Theta}{i}}$ rather than $\storeC{\Theta}{\useq{\Theta}{i}}$ we can bound the size of terms 
occurring in $\NstoreC{\Theta}{\useq{\Theta}{i}}$. To simplify the following discussion, consider a set of unification equations 
$\mathcal{U}$. Now, let $\mathit{terms}(\mathcal{U})= \{ r_i\mid i\in \{1,2\} \ \&\ r_1\unif r_2\in \mathcal{U}\}$, $\depbnd{\Theta}
{\mathcal{U}} = \max \left\{\dep{t}\ \middle\vert\ t\in  \mathit{terms}(\mathcal{U})\right\}$ when $U\not = \emptyset$, and $\depbnd{\Theta}
{\mathcal{U}} = 0$ otherwise. 

\begin{lemma}
\label{lem:bounddepth}
Let $\uProb{\Theta}$ be a schematic unification problem such that $\Theta$ is primitive and $i\geq 0$ such that $\storeC{\Theta}{\useq{\Theta}{i}}\not = \bot$ . Then $\depbnd{\Theta}{\NstoreC{\Theta}{\useq{\Theta}{i}}} \leq  \depb$.
\end{lemma}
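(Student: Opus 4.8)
The plan is to bound the normalized depth of every term stored in $\storeC{\Theta}{\useq{\Theta}{i}}$ by combining a structural invariant on which terms can ever occur during $\Theta$-unification with a depth bound on normalized subterms of $\Theta$-instances.

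First I would record the invariant that every term occurring on either side of an equation in any configuration reachable from $\config{\emptyset}{\useq{\Theta}{i}}{\Theta}$ is a subterm of some $\Theta^v(i)$, where $v$ occurs in the original problem $\useq{\Theta}{0}$. This is a routine induction over the rules of Table~\ref{tab:infrules}: \textbf{Decomposition} replaces a term by immediate subterms, \textbf{Orient-1}/\textbf{Orient-2} merely reorder, \textbf{Reflexive}/\textbf{Clash}/\textbf{Store} introduce no new terms, and \textbf{Transitive} derives $r\unif s$ from $x\unif r$ and $x\unif s$ with $r,s$ already present, hence already subterms. In particular, every stored binding $x\unif t$ has $t$ a subterm of some $\Theta^v(i)$, so it suffices to bound $\dep{t\uparrow_\Theta}$ for such $t$.

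The technical core is to prove, by induction on $i$, that for every $v$ occurring in $\useq{\Theta}{0}$ and every subterm $r$ of $\Theta^v(i)$ one has $\dep{r\uparrow_\Theta}\leq\depb$. The key structural identity is that $\Theta$-instances distribute over non-schematic heads, $\Theta^{f(\overline{v_n})}(i)=f(\Theta^{v_1}(i),\dots,\Theta^{v_n}(i))$, so a position of $\Theta^v(i)$ either (a) follows the non-schematic skeleton of $v$ to a subterm $\Theta^{v\vert_q}(i)$, or (b) descends into a schematic leaf $L_d$ of $v$, landing inside the expansion $\Theta^{L_d}(i)$. In case (a) the maximal schematic occurrences collapse under $\uparrow_\Theta$, so $r\uparrow_\Theta=v\vert_q\uparrow_\Theta$ has depth at most $\dep{v}$. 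In case (b) I peel off the top base layer: since $\Theta$ is primitive, $R_{\Theta}(L)\subseteq\{0,1\}$ and $\Theta^{L_d}(i)$ equals $\shiftd{d}{\base{\Theta}{L}}$ with each recursive occurrence $L_{d'}$, $d'\in\{d,d+1\}$, replaced by $\Theta^{L_{d'}}(i-1)$; a subterm either lies within the base layer, whose normal form is a subterm of $\shiftd{d}{\base{\Theta}{L}}$ of depth at most $\dep{\base{\Theta}{L}}$ once the recursive parts collapse, or it lies inside one of the $\Theta^{L_{d'}}(i-1)$, to which the induction hypothesis applies.

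It then remains to identify the two bounds $\dep{v}$ and $\dep{\base{\Theta}{L}}$ with the stated quantity $\depb=\depbnd{\Theta}{\useq{\Theta}{0}\stepsubs{0}{U}}$. Since $\stepsubs{0}{U}$ only expands variables, $\dep{v\stepsubs{0}{U}}\geq\dep{v}$; and since $\stepsubs{0}{U}$ unfolds exactly the schematic variables occurring in $\storeC{\Theta}{\useq{\Theta}{0}}$ once, any relevant $L$ contributes a full base layer $\shiftd{d}{\base{\Theta}{L}}$ inside some $v\stepsubs{0}{U}$, whence $\dep{v\stepsubs{0}{U}}\geq\dep{\base{\Theta}{L}}$. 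Both bounds are therefore at most $\depb$. The main obstacle I anticipate is the normalization step underlying cases (a) and (b): the collapse of a recursive occurrence applies only to an instance of the form $\Theta^{L_e}(k)$ in collapsible shape, so I must verify that the recursive occurrences arising in stored terms are always of this form --- in particular that a non-collapsing chain never survives in $\mathcal{S}$, which I expect to follow from the \textbf{Store} conditions, since such an equation fails all three and hence remains in $\mathcal{U}$ rather than entering the store --- and that the schematic variables contributing to the depth of $\NstoreC{\Theta}{\useq{\Theta}{i}}$ are precisely those unfolded by $\stepsubs{0}{U}$, so that the matching with $\depb$ is exact rather than merely an over-approximation.
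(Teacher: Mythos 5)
There is a genuine gap in the technical core of your argument. Your plan reduces the lemma to the claim that \emph{every} subterm $r$ of \emph{every} instance $\Theta^v(i)$ satisfies $\dep{r\uparrow_\Theta}\leq\depb$, and this intermediate claim is false. The instance $\Theta^v(i)$ unfolds \emph{all} schematic variables it encounters, including symbols of $\dom{\Theta}$ that occur only nested inside other bases; but $\depb$ accounts for exactly one unfolding of the schematic variables present in $\storeC{\Theta}{\useq{\Theta}{0}}$. Take $\Theta=\{L_i\mapsto f(X_i,S_{i+1})\}\cup\{S_i\mapsto g(g(g(g(Y_i,Y_i),Y_i),Y_i),S_{i+1})\}$ (primitive, since $R_\Theta(L)=\emptyset$ and $R_\Theta(S)=\{1\}$) and $\uProb{\Theta}=\{L_0\unif f(Z_0,g(W_0,V_0))\}$. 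Then $\stepsubs{0}{U}=\{L_0\mapsto f(X_0,S_1)\}$ and $\depb=3$, yet $\Theta^{L_0}(2)$ contains the subterm $g(g(g(Y_1,Y_1),Y_1),Y_1)$, which is not of the form $\Theta^{L_d}(k)$, survives normalization unchanged, and has depth $4$. So the non-recursive portion of a base reachable only through another base can exceed $\depb$ by an arbitrary amount, and your final identification step ($\dep{\base{\Theta}{L}}\leq\depb$ for every relevant $L$) cannot be carried out. The lemma itself is saved in such examples only because the \textbf{Store} conditions prevent these deep pieces from ever entering the store --- but that is precisely the information your reduction to ``subterms of instances'' discards. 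You partially flagged this risk in your last paragraph (``the matching with $\depb$ is exact rather than merely an over-approximation''), but it is not a verification detail: it is the point at which the strategy breaks.

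The paper avoids this by never analysing the fully unfolded instance. It inducts on $i$ through the recursion $\storeC{\Theta}{\useq{\Theta}{i+1}}=\storeC{\Theta}{\storeC{\Theta}{\useq{\Theta}{i}}\stepsubs{i}{U}}$ of Lemma~\ref{lem:delayedSubs}: the already-normalized, already-bounded store is unfolded by a \emph{single} step substitution $\stepsubs{i}{U}$, $\Theta$-unification (which cannot increase maximal depth) is reapplied, and normalization collapses that one unfolding again. The depth budget is thus consumed one layer at a time and only for the schematic variables that actually survive in the store, which is exactly the control your over-approximation loses. Your opening invariant (stored terms are subterms of instance terms, in the spirit of Lemma~\ref{lem:subbound}) and your case split on positions of $\Theta^v(i)$ are fine as far as they go, but to repair the proof you would need to replace the ``all subterms of all instances'' claim by an induction that tracks only the store, i.e.\ essentially the paper's route via Lemma~\ref{lem:delayedSubs}.
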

\begin{proof} When $i=0$,  $\depbnd{\Theta}{\NstoreC{\Theta}
{\useq{\Theta}{0}}} \leq  \depb$ follows from termination of $
\Theta$-unification (Theorem~\ref{thm:terminationTheta}) and 
Definition~\ref{def.lupunif}. We now assume, as our induction hypothesis, that $\depbnd{\Theta}{\NstoreC{\Theta}{\useq{\Theta}{i}}} \leq  \depb$ holds up to some $i$ and show that it also holds for $i+1$. Now consider $\depbnd{\Theta}{\NstoreC{\Theta}{\useq{\Theta}{i}}\stepsubs{i}{U}} \leq \depbnd{\Theta}{\useq{\Theta}{0}\stepsubs{0}{U}\stepsubs{i}{U}}$ derived from the induction hypothesis. From Theorem~\ref{thm:terminationTheta},  we derive $\depbnd{\Theta}{\storeC{\Theta}{\NstoreC{\Theta}{\useq{\Theta}{i}}\stepsubs{i}{U}}}\leq \depbnd{\Theta}{\NstoreC{\Theta}{\useq{\Theta}{i}}\stepsubs{i}{U}}$. 
%Furthermore, by transitivity, we derive 
%$\depbnd{\Theta}{\storeC{\Theta}{\NstoreC{\Theta}{\useq{\Theta}{i}}\stepsubs{i}{U}}}\leq \depbnd{\Theta}{\useq{\Theta}{0}\stepsubs{0}{U}\stepsubs{i}{U}}$. 
Now, we normalize both sides of the inequality, resulting in $
\depbnd{\Theta}{\NstoreC{\Theta}{\storeC{\Theta}{\useq{\Theta}
{i}}\stepsubs{i}{U}}}\leq \depbnd{\Theta}{\useq{\Theta}{0}\stepsubs{0}{U}\stepsubs{i}{U}\uparrow_{\Theta}}$. Observe that normalization results in the following inequality $\depbnd{\Theta}{\NstoreC{\Theta}{\storeC{\Theta}
{\useq{\Theta}{i}}\stepsubs{i}{U}}}\leq \depbnd{\Theta}{\useq{\Theta}{0}\stepsubs{0}{U}}$. 
Using transitivity and $\depbnd{\Theta}{\NstoreC{\Theta}
{\useq{\Theta}{i+1}}}= \depbnd{\Theta}{\NstoreC{\Theta}{\storeC{\Theta}{\useq{\Theta}{i}}\stepsubs{i}{U}}}$, derived  from  Lemma~\ref{lem:delayedSubs}, we derive the following inequality,
$\depbnd{\Theta}{\NstoreC{\Theta}{\useq{\Theta}{i+1}}}\leq \depbnd{\Theta}{\useq{\Theta}{0}\stepsubs{0}{U}}$.
\end{proof}
\begin{lemma}
\label{lem:subbound}
Let $\uProb{\Theta}$ be a schematic unification problem such that $\Theta$ is primitive. Then for all $i\geq 0$ and for any $r\in 
\mathit{terms}(\NstoreC{\Theta}{\useq{\Theta}{i}})$, there exists $t\in \mathit{terms}(\useq{\Theta}{0}\stepsubs{0}{U})$, $t'\in \sub{t}$, and $k\geq 0$ such that $r = \shiftd{k}{t'}$.
\end{lemma}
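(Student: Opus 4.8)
The plan is to prove the statement by induction on $i$, mirroring the structure of the proof of Lemma~\ref{lem:bounddepth} and relying on the recurrence $\storeC{\Theta}{\useq{\Theta}{i+1}} = \storeC{\Theta}{\storeC{\Theta}{\useq{\Theta}{i}}\stepsubs{i}{U}}$ supplied by Lemma~\ref{lem:delayedSubs}. The first ingredient, used in both the base case and the step, is a closure property of $\Theta$-unification itself: along any derivation the inference rules of Table~\ref{tab:infrules} never create a term that is not already a subterm of a term present in the input. Indeed, \textbf{Decomposition} replaces a term by its immediate arguments, \textbf{Orient-1/2} and \textbf{Reflexive} introduce no new terms, \textbf{Store} merely relocates an equation, and \textbf{Transitive} pairs two right-hand sides $r,s$ that already occur in equations $x\unif r$ and $x\unif s$. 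Hence $\mathit{terms}(\storeC{\Theta}{\mathcal{V}})\subseteq \sub{\mathit{terms}(\mathcal{V})}$ for every input $\mathcal{V}$, and the remaining work is purely about how subterms of $\Theta$-instances look after $\Theta$-normalization.

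For the base case ($i=0$) the closure property gives $\mathit{terms}(\storeC{\Theta}{\useq{\Theta}{0}})\subseteq \sub{\mathit{terms}(\useq{\Theta}{0})}$, so it suffices to show that every subterm of a term of $\useq{\Theta}{0}$, after normalization, is a shift of a subterm of a term of $\useq{\Theta}{0}\stepsubs{0}{U}$. A subterm containing no variable from $\dom{\stepsubs{0}{U}}$ survives unchanged and is a $\shiftd{0}{\cdot}$-instance of itself; a subterm headed by a $\Theta$-variable $L_d$ is exposed by $\stepsubs{0}{U}$ as $\Theta^{L_d}(1)=\shiftd{d}{\base{\Theta}{L}}$, so that its base subterms, and in particular the recursion variable sitting at the index dictated by $R_{\Theta}(L)\subseteq\{0,1\}$, occur in $\useq{\Theta}{0}\stepsubs{0}{U}$. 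Primitivity is exactly what guarantees that the required index offset is $0$ or $1$, so only non-negative shifts $k$ are ever needed.

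For the inductive step I would use Lemma~\ref{lem:delayedSubs} to write the $(i{+}1)$-st store as $\storeC{\Theta}{\storeC{\Theta}{\useq{\Theta}{i}}\stepsubs{i}{U}}$, and show that the family ``shifts of subterms of terms of $\useq{\Theta}{0}\stepsubs{0}{U}$'' — which, by the induction hypothesis, contains every term of $\NstoreC{\Theta}{\useq{\Theta}{i}}$ — is preserved under (i) one application of $\stepsubs{i}{U}$ and (ii) a further round of $\Theta$-unification followed by normalization. Step (ii) is handled by the closure property, which again confines us to subterms. For step (i) I would unfold $\stepsubs{i}{U}$ explicitly: it replaces each occurring $\Theta$-variable $L_d$ by $\Theta^{L_d}(1)=\shiftd{d}{\base{\Theta}{L}}$, so a subterm of the image is either a subterm of $\base{\Theta}{L}$ shifted by $d$ — hence again a shift of a subterm of $\useq{\Theta}{0}\stepsubs{0}{U}$, since $\base{\Theta}{L}$ (suitably shifted) already occurs there — or a full recursive copy $\Theta^{L_{d'}}(k)$ that collapses under $\uparrow_\Theta$ to a single variable $L_{d'}$, whose index is pinned down by the self-similarity of primitive $\Theta$-instances.

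The main obstacle I anticipate is precisely the index bookkeeping across normalization: I must verify that repeatedly unfolding and then collapsing recursive copies never forces the shift parameter $k$ to become negative, i.e.\ that the recursion variables and base subterms available in the fixed set $\useq{\Theta}{0}\stepsubs{0}{U}$ have small enough indices to serve as shift-sources for all later, higher-index stores. This is where primitivity ($R_{\Theta}(X)\subseteq\{0,1\}$) together with the monotone growth of indices along the recurrence of Lemma~\ref{lem:delayedSubs} are essential, and the degenerate minimal-index case — a bare recursion variable appearing already in the first store — is the delicate boundary point that must be matched against $\useq{\Theta}{0}\stepsubs{0}{U}$ directly rather than against a deeper instance.
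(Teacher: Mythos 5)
Your proposal is correct and rests on the same key observation as the paper's (one-sentence) proof: the inference rules of Table~\ref{tab:infrules} never substitute into variables, so every term arising during $\Theta$-unification is a subterm of an input term, and the only changes across instances are index shifts introduced by the step substitution and undone by normalization. You simply make explicit the inductive scaffolding (via Lemma~\ref{lem:delayedSubs}) and the $k\geq 0$ bookkeeping that the paper leaves implicit, and your identification of primitivity as the reason the shifts stay non-negative is exactly right.
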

\begin{proof}
   The rules of Table~\ref{tab:infrules} do not substitute into variables; thus, the term structure is preserved up to the shifting of indices.   
\end{proof}

\subsection{Bounding Variables Count}
In addition to bounding the term depth of terms occurring in $\useq{\Theta}{i}$, for large enough $i$, we can bound the number of \textit{fresh} variables occurring in $\useq{\Theta}{(i+1)}$, but not $i$. We conjecture that there is a bound on the number of variables that remain relevant to a given instance $\useq{\Theta}{i}$. If our conjectured bound is correct and we are considering a large enough $i$, then for all uniform schematic unification problems, there is a finite bound on the number of distinct variables in $\storeC{\Theta}{\useq{\Theta}{i}}$. To aid our presentation of the results  in this section, we introduce the following functions: $\maxbnd{\Theta}{U} = \max\left\{ \max \{i\mid i\in \getidx{t}\} \ \middle\vert\ \ t\in \mathit{terms}(U)\right\}$, and 
$\minbnd{\Theta}{U} = \min\left\{ \min \{i\mid i\in \getidx{t}\}\ \middle\vert\ \ t\in \mathit{terms}(U)\right\}$
where $U$ is a set of unification equations and $\Theta$ is a schematic unification.

\begin{lemma}\label{lem:OneMorePerClass}
Let $\uProb{\Theta}$ be a primitive schematic unification problem. Then for all $i\geq \maxbnd{\Theta}{\useq{\Theta}{0}\stepsubs{0}{U}}$, $\vert \vars{\useq{\Theta}{i+1}}\vert - \vert \vars{\useq{\Theta}{i}}\vert \leq \vert \getcls{\baseSet{\Theta}}\setminus\dom{\Theta}\vert$.
\end{lemma}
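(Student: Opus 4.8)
The goal is to bound the number of genuinely new variables introduced when passing from $\useq{\Theta}{i}$ to $\useq{\Theta}{i+1}$ once $i$ is large enough. I would begin by recalling the construction: $\useq{\Theta}{i+1} = \useq{\Theta}{i}\stepsubs{i}{U}$, so every term in the new instance is obtained from a term in the old instance by applying the step substitution, which replaces each relevant variable $x\in\Theta$ by $\Theta^{x}(1)$. Since $\Theta$ is primitive, each base $\base{\Theta}{X}$ introduces, besides the recursive occurrence of $X$ itself, only variables whose symbols lie in $\getcls{\baseSet{\Theta}}\setminus\dom{\Theta}$. The strategy is therefore to account for the variables appearing in $\useq{\Theta}{i+1}$ but not in $\useq{\Theta}{i}$ and show that, for each domain symbol, only boundedly many fresh non-domain variables can be contributed per step.

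The key steps, in order, are as follows. First I would make precise which variables in $\useq{\Theta}{i+1}$ are ``new'': a variable $v$ is new if $v\in\vars{\useq{\Theta}{i+1}}$ but $v\notin\vars{\useq{\Theta}{i}}$. Because the step substitution only rewrites variables $x$ with $\getcls{x}\in\dom{\Theta}$, the new variables can only arise from unfolding an occurrence of some $x\in\Theta$ via $\Theta^{x}(1) = x\Theta^x = \base{\Theta}{X}$ shifted appropriately. Second, I would analyze the shifted base $\shiftd{j}{\base{\Theta}{X}}$: by primitivity, $R_\Theta(X)\subseteq\{0,1\}$, so the single recursive occurrence of $X$ reappears at index $j$ or $j+1$, which is already accounted for (it is an old/domain variable), while every other variable in the base has a symbol from $\getcls{\baseSet{\Theta}}\setminus\dom{\Theta}$. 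Third, I would invoke the hypothesis $i\geq\maxbnd{\Theta}{\useq{\Theta}{0}\stepsubs{0}{U}}$ together with Lemma~\ref{lem:subbound} and Lemma~\ref{lem:bounddepth}: these guarantee that the index ranges have stabilized so that, for each non-domain symbol $Y\in\getcls{\baseSet{\Theta}}\setminus\dom{\Theta}$, the unfolding at step $i\to i+1$ contributes at most one index of $Y$ that was not already present in $\useq{\Theta}{i}$. Summing over the non-domain symbols yields the bound $\vert\getcls{\baseSet{\Theta}}\setminus\dom{\Theta}\vert$.

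The delicate point — and the main obstacle — is the third step: proving that each non-domain symbol contributes \emph{at most one} new index per step rather than several. This requires the threshold $i\geq\maxbnd{\Theta}{\useq{\Theta}{0}\stepsubs{0}{U}}$ to be exactly what is needed so that the window of active indices slides by one per application of $\stepsubs{i}{U}$; below this threshold the indices introduced by $\stepsubs{0}{U}$ have not yet been ``caught up'' by the uniform shifting, and more than one fresh index per symbol could appear. I would argue that once $i$ exceeds this threshold, every occurrence of a non-domain variable $Y_k$ produced by unfolding $\Theta^{x}(1)$ satisfies $k > \maxbnd{\Theta}{\useq{\Theta}{i}}$ only for the single largest-index occurrence dictated by the base, since primitivity forces $\vert R_\Theta(X)\vert\leq 1$ and hence a single recursive step advances indices by a fixed, symbol-independent amount. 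Making this index-counting argument rigorous — carefully tracking, via the shift operator $\shiftd{}{}$, which indices of each non-domain symbol are freshly exposed versus already present — is where the real work lies; the rest is bookkeeping over the finite set $\getcls{\baseSet{\Theta}}\setminus\dom{\Theta}$.
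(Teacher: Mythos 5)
Your overall decomposition agrees with the paper's: you correctly reduce the claim to showing that one unfolding step contributes at most one fresh variable per symbol in $\getcls{\baseSet{\Theta}}\setminus\dom{\Theta}$, and you correctly note that primitivity disposes of the recursive occurrences of the domain symbols. But the proposal stops exactly where the proof has to happen: you flag the ``at most one new index per symbol'' claim as ``where the real work lies'' and leave it as a plan, and that claim \emph{is} the lemma. Two specific problems with the route you sketch for closing it: first, the assertion that a fresh occurrence $Y_k$ must satisfy $k>\maxbnd{\Theta}{\useq{\Theta}{i}}$ does not follow from freshness --- a variable with a smaller index could a priori simply never have occurred before --- so ``only the largest-index occurrence can be fresh'' needs an argument you do not give; second, Lemmas~\ref{lem:subbound} and~\ref{lem:bounddepth} concern term depth and subterm structure in the \emph{normalized store}, not which variable indices appear in the instance problems $\useq{\Theta}{i}$, so they do not deliver the stabilization you invoke them for.

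The paper closes the gap by contradiction, and the mechanism is the piece you are missing. Suppose two fresh variables $X_j,X_k$ with $j<k$ and the same non-domain symbol $X$ appear in $\useq{\Theta}{i+1}$. Because each domain symbol recurses with shift $0$ or $1$, the index of any $X$-variable introduced at step $i+1$ has the form $i+m$, where $m$ is a fixed offset determined by the initial data (the index of the unfolded domain variable occurrence in $\useq{\Theta}{0}\stepsubs{0}{U}$ plus the index of the occurrence of $X$ inside the relevant base term), and every such offset lies between $\minbnd{\Theta}{\useq{\Theta}{0}\stepsubs{0}{U}}$ and $\maxbnd{\Theta}{\useq{\Theta}{0}\stepsubs{0}{U}}$. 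Writing $j=i+m_1$ and $k=i+m_2$ with $m_1<m_2$, the variable $X_j$ is precisely the variable introduced with offset $m_2$ at the earlier step $i-(m_2-m_1)$, which is a legitimate (nonnegative) instance exactly because of the hypothesis $i\geq\maxbnd{\Theta}{\useq{\Theta}{0}\stepsubs{0}{U}}$; and since non-domain variables are never substituted away, $X_j\in\vars{\useq{\Theta}{i}}$, contradicting its freshness. This is the index-counting argument you would need to supply; note that it also handles the case your sketch does not address, namely a single non-domain symbol being introduced from several base terms or from several occurrences with different offsets.
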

\begin{proof}
If $|\getcls{\useq{\Theta}{0}\stepsubs{0}{U}}|=0$, then $
\vars{\useq{\Theta}{i}}=\emptyset$ for all $i\geq 0$ and the bound 
trivially holds. We assume  for the induction hypothesis that for $i\geq 
\maxbnd{\Theta}{\useq{\Theta}{0}\stepsubs{0}{U}}$, $\vert 
\vars{\useq{\Theta}{i}}\vert -\vert \vars{\useq{\Theta}{i-1}}\vert \leq 
\vert \getcls{\baseSet{\Theta}}\setminus\dom{\Theta}\vert$ and show that 
this statement also holds for $i+1$. In other words, at most, one fresh 
variable per variable symbol. Observe that $\vars{\useq{\Theta}{i+1}} = 
(\vars{\useq{\Theta}{i}}\setminus\mathcal{P}) \cup \vars{\mathcal{Q}}$ 
where $\mathcal{P} = \{ x\mid x\in \Theta\ \& \ x\in \vars{\useq{\Theta}
{i}} \}$ and $\mathcal{Q} = \{\shiftd{j}{\base{\Theta}{X}}\mid X_j\in 
\mathcal{P} \}$. Thus, any variable occurring in $\useq{\Theta}{i+1}$ 
but not in $\useq{\Theta}{i} $ where introduced by $\mathcal{Q}$, i.e.  
at least one variable per symbol in $\getcls{\baseSet{\Theta}}$.

Now let $n_1,n_2,j,k\geq 0$ such that $j<k$, $\shiftd{n_1}{r_1},\shiftd{n_2}{r_2}\in \mathcal{Q}$, $X_j\in \vars{\shiftd{n_1}{r_1}}$, $X_k\in \vars{\shiftd{n_2}{r_2}}$, and
$X_j,X_k\not\in \vars{\useq{\Theta}{i}}$. Observe that $X\not\in \dom{\Theta}$ as we assume that $\Theta$ is primitive, i.e. for all $X\in \dom{\Theta}$, $\Rec{\Theta}{X}\subset\{0,1\}$. 

Observe that there exists $\minbnd{\Theta}{\useq{\Theta}{0}\stepsubs{0}{U}}\leq m_1< m_2\leq \maxbnd{\Theta}{\useq{\Theta}{0}\stepsubs{0}{U}}$ such that  $j=i+m_1$ and $k=i+m_2$, e.g. $m_2 = l_1+l_2$ where $l_1\in \{ j\mid x\in \vars{\ran{\stepsubs{0}{U}}}\ \& \ \getidx{x} =\{j\}\& \ \getcls{x} =\{X\}\}$ and $l_2 =\{j\mid x\in \dom{\stepsubs{0}{U}}\ \& \ $ $X\in \getcls{x\stepsubs{0}{U}} \ \& \ \getidx{x} =\{j\}\}$. Essentially, $l_2$ is an index of a variable symbol from $\dom{\Theta}$ occurring in  $\useq{\Theta}{0}$ and $l_1$ is an index of an occurrence of a variable with variable symbol $X$ in the substitution $\stepsubs{0}{U}$. Together, they define the initial shifting of variables with the symbol $X$.   This implies that  $X_{j} \in \useq{\Theta}{i-(m_2-m_1)}$, contradicting our assumption that 
$X_{j} \not \in \vars{\useq{\Theta}{i}}$ as $\vars{\useq{\Theta}{i-(m_2-m_1)}} \subset \vars{\useq{\Theta}{i}}$. \end{proof}

Lemma~\ref{lem:OneMorePerClass} implies that, for large enough $i$, the instance problems are well-behaved regarding variable introduction.

\begin{definition}[Future Relevant]\label{def:futureRelevant}
Let $\uProb{\Theta}$ be a schematic unification problem,  $i\geq 0$, $\storeC{\Theta}{\useq{\Theta}{i}}\not = \bot$. Then $x\in \vars{\storeC{\Theta}{\useq{\Theta}{i}}}$, such that $x\not\in \Theta$, is \emph{ future relevant} if (i) for some $y\in \dom{\stepsubs{i}{U}}$, $\getidx{x}\geq \getidx{y}$ and (ii)  $X\in\getcls{\baseSet{\Theta}(Y)}$ where $\getcls{y} = \{Y\}$. We denote the  future relevant set of $\storeC{\Theta}{\useq{\Theta}{i}}$ as $\FuR{i}$.
\end{definition}
\begin{example}
    Consider Example~\ref{ex:theta2}. Then $\FuR{2}=\{X_3\}$ in  $\storeC{\Theta}{\useq{\Theta}{2}}$.
\end{example}

\begin{definition}[Irrelevant Variables]\label{def:IrrVars}
Let $\uProb{\Theta}$ be a schematic unification\\ problem,  $i\geq 0$, $\storeC{\Theta}{\useq{\Theta}{i}}\not = \bot$. Then $\irrV{i} = \{x\mid x\in \vars{\irr{i}}\ \& \ x \not \in \Theta \ \& \ x\not \in \vars{\storeC{\Theta}{\useq{\Theta}{i}}}\}$.
\end{definition}
\begin{definition}[EQ]
Let $\uProb{\Theta}$ be a schematic unification problem, $i\geq 0$, $\storeC{\Theta}{\useq{\Theta}{i}}$ $\not = \bot$, and $x\in \vars{\storeC{\Theta}{\useq{\Theta}{i}}}$. Then $\mathit{EQ}_i^x = \{x\}\cup \{ y\mid x\unif y,y\unif x \in \storeC{\Theta}{\useq{\Theta}{i}}\}$.
\end{definition}

We use the equivalence classes to define a simplifying substitution, which is used to define a bound on the variables in $\storeC{\Theta}{\useq{\Theta}{i}}$ for large enough $i$.

\begin{definition}[Simplifying Substitution]\label{def:EQsubstitution}
Let $\uProb{\Theta}$ be a schematic unification problem, $i\geq 0$ s.t.  $\storeC{\Theta}{\useq{\Theta}{i}}\not = \bot$. Then 
$\mathit{EQ}_i\subseteq \vars{\storeC{\Theta}{\useq{\Theta}{i}}}$ s.t. for all $x \in \vars{\storeC{\Theta}{\useq{\Theta}{i}}}$,  $x\in \mathit{EQ}_i$ iff \emph{(i)} for all $y\in \mathit{EQ}_i$, $\mathit{EQ}_i^x\cap \mathit{EQ}_i^y =\emptyset$, \emph{(ii)} for all $y\in \mathit{EQ}_i^x$, $\getidx{x}\geq \getidx{y}$. Let $\subEQ{i} = \{ y\mapsto x\mid x\not = y \ \&\ x\in \mathit{EQ}_i \ \& \ y\in \mathit{EQ}_i^x  \}$.
\end{definition}

The simplifying substitution and the irrelevant variables capture all variables not relevant to $\storeC{\Theta}{\useq{\Theta}{i}}$. We define a ratio between the variables in the store of past instances and variables in the store of the current instance. Past refers instances prior to $\max\{\maxbnd{\Theta}{\useq{\Theta}{0}\stepsubs{0}{U}}$ $,\depbnd{\Theta}{\useq{\Theta}{0}\stepsubs{0}{U}}\}$,  denoted $\mathit{stab}(\uProb{\Theta})$.
\begin{definition}[Stability]\label{def:stability}
    Let $\uProb{\Theta}$ be a primitive schematic unification problem and $i= \mathit{stab}(\uProb{\Theta})$, $j\geq i$. We refer to $\useq{\Theta}{j}$ as \emph{stable} if $\storeC{\Theta}{\useq{\Theta}{j}}= \bot$ or $\mathit{stab}(\uProb{\Theta},j) \leq 1$ where  $$\mathit{stab}(\uProb{\Theta},j) = \min\left\{ \frac{ \vert \vars{\NstoreC{\Theta}{\useq{\Theta}{j}}\subEQ{j}}\vert   - \vert\irrV{j}\vert}{\vert \vars{\NstoreC{\Theta}{\useq{\Theta}{k}}\subEQ{k}}\vert   - \vert\irrV{k}\vert  } \ \middle\vert\ 0\leq k\leq i\right\}.$$
\end{definition}
 \emph{Stability} is based on the following observation: For every $t\in\baseSet{\Theta}$, if there exist positions $p,q\in\pos{t}$ and $i,k\geq 0$ such that $t\vert_p =X_i$ and $t\vert_q =X_{i+k}$, then $\shiftd{k}{t}\vert_p =X_{i+k}$, i.g. variables with large indices occur at the position of variables with lower indices in the shifted term. At instance $\mathit{stab}(\uProb{\Theta})$, all variables occuring in $\useq{\Theta}{0}\stepsubs{0}{U}$ which could shift positions have shifted positions. For large input terms, we have to wait until the terms have been totally decomposed, hence $\depbnd{\Theta}{\useq{\Theta}{0}\stepsubs{0}{U}}$.
\begin{definition}[$\infty$-Stability]\label{def:infstable}
    Let $\uProb{\Theta}$ be a primitive schematic unification\\ problem. Then $\uProb{\Theta}$ is \emph{$\infty$-stable} if for all $j\geq \mathit{stab}(\uProb{\Theta})$, $\storeC{\Theta}{\useq{\Theta}{j}}$ is stable. 
\end{definition}

\begin{theorem}
Let $\uProb{\Theta}$ be a uniform schematic unification problem and $\infty$-stable, $j\geq \mathit{stab}(\uProb{\Theta})$, and $i\geq 0$ is the index of the instance that minimizes $\mathit{stab}(\uProb{\Theta},j)$. Then if for all $k\geq 0$, $\storeC{\Theta}{\useq{\Theta}{k}}\not = \bot$ then  $\vert\vars{\NstoreC{\Theta}{\useq{\Theta}{j}}}\subEQ{j}\vert \leq \vert \vars{\NstoreC{\Theta}{\useq{\Theta}{i}}}\subEQ{i}\vert $.
\end{theorem}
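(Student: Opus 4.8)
The plan is to reduce the claim to a direct unfolding of $\infty$-stability together with the extremal choice of $i$. First I would use the hypothesis that $\storeC{\Theta}{\useq{\Theta}{k}}\neq\bot$ for \emph{every} $k\geq 0$: this both guarantees $\storeC{\Theta}{\useq{\Theta}{j}}\neq\bot$ and keeps every denominator occurring in $\mathit{stab}(\uProb{\Theta},j)$ well defined, since no instance in the window $0\leq k\leq\mathit{stab}(\uProb{\Theta})$ collapses to $\bot$. Because $\uProb{\Theta}$ is $\infty$-stable and $j\geq\mathit{stab}(\uProb{\Theta})$, Definition~\ref{def:infstable} makes $\useq{\Theta}{j}$ stable; as its store is not $\bot$, the disjunct $\storeC{\Theta}{\useq{\Theta}{j}}=\bot$ of Definition~\ref{def:stability} is excluded and we are left with $\mathit{stab}(\uProb{\Theta},j)\leq 1$.

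Next I would unfold $\mathit{stab}(\uProb{\Theta},j)$ as the minimum, over $0\leq k\leq\mathit{stab}(\uProb{\Theta})$, of the ratios that share the fixed numerator $\vert\vars{\NstoreC{\Theta}{\useq{\Theta}{j}}\subEQ{j}}\vert-\vert\irrV{j}\vert$ and whose $k$-th denominator is $\vert\vars{\NstoreC{\Theta}{\useq{\Theta}{k}}\subEQ{k}}\vert-\vert\irrV{k}\vert$. Since the numerator does not depend on $k$, the index $i$ realizing the minimum is exactly the one maximizing the denominator, provided these relevant-variable counts are positive. Establishing that positivity is the first technical ingredient: it follows from the fact that $\subEQ{k}$ only renames store variables while $\irrV{k}$ collects residual variables disjoint from $\vars{\storeC{\Theta}{\useq{\Theta}{k}}}$, so the count cannot be exhausted while the store is non-empty, the empty case being dispatched by Corollary~\ref{cor:alwaysempty}. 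Instantiating the minimum at $k=i$ and using $\mathit{stab}(\uProb{\Theta},j)\leq 1$, cross-multiplication gives
\begin{equation*}
\vert\vars{\NstoreC{\Theta}{\useq{\Theta}{j}}\subEQ{j}}\vert-\vert\irrV{j}\vert\ \leq\ \vert\vars{\NstoreC{\Theta}{\useq{\Theta}{i}}\subEQ{i}}\vert-\vert\irrV{i}\vert .
\end{equation*}

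It remains to pass from this to the stated inequality. As $\subEQ{k}$ maps variables to variables, $\vert\vars{\NstoreC{\Theta}{\useq{\Theta}{k}}}\subEQ{k}\vert=\vert\vars{\NstoreC{\Theta}{\useq{\Theta}{k}}\subEQ{k}}\vert$, so the notation in the goal and in Definition~\ref{def:stability} coincide and only the irrelevant-variable terms need to be discharged. This is the step I expect to be the main obstacle, because the displayed inequality yields the goal only if $\vert\irrV{j}\vert\leq\vert\irrV{i}\vert$, whereas Lemma~\ref{lem:cumulativeI} shows irrelevant sets \emph{accumulate} with the index and $i\leq\mathit{stab}(\uProb{\Theta})\leq j$, pushing the opposite way. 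The likely intended resolution is to read both sides of the theorem as the relevant-variable counts already isolated in Definition~\ref{def:stability}, in which case the goal \emph{is} the displayed inequality and nothing further is required. If the bare counts are genuinely meant, one must instead control $\vert\irrV{j}\vert-\vert\irrV{i}\vert$ directly, arguing that past the index $\mathit{stab}(\uProb{\Theta})$ the irrelevant variables stabilize up to index shifting so that the extra equations supplied by Lemma~\ref{lem:cumulativeI} introduce no net new variables; here the depth bound of Lemma~\ref{lem:bounddepth} and the per-symbol bound of Lemma~\ref{lem:OneMorePerClass} are what confine the irrelevant set, and reconciling them against that accumulation is where the real effort lies.
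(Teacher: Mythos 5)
Your reduction of stability to the single inequality $\vert\vars{\NstoreC{\Theta}{\useq{\Theta}{j}}\subEQ{j}}\vert-\vert\irrV{j}\vert\leq\vert\vars{\NstoreC{\Theta}{\useq{\Theta}{i}}\subEQ{i}}\vert-\vert\irrV{i}\vert$ is sound (modulo the positivity of the denominators you rightly flag), and you have correctly located the crux: this is not the inequality the theorem asserts, and the $\irrV{}$ terms push in the wrong direction because irrelevant sets only accumulate (Lemma~\ref{lem:cumulativeI}). But the proposal stops exactly there. The escape you call ``likely intended'' --- rereading the conclusion as being about the corrected counts --- is not what the paper does; the theorem really is about the bare counts $\vert\vars{\NstoreC{\Theta}{\useq{\Theta}{j}}}\subEQ{j}\vert$, and the branch you defer as ``where the real effort lies'' is precisely the content of the paper's proof. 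As written, the concluding inequality is never established, so there is a genuine gap.

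The paper closes that gap by induction on $j$ rather than by a one-shot appeal to $\mathit{stab}(\uProb{\Theta},j)\leq 1$. It first observes that the minimizing index $i$ is the same for every $j\geq\mathit{stab}(\uProb{\Theta})$, so the right-hand side of the claim is fixed, and the base case $j=i$ is trivial. In the inductive step it splits on whether the count from $j$ to $j+1$ decreases (then transitivity with the induction hypothesis finishes) or increases; in the latter case Lemma~\ref{lem:OneMorePerClass} caps the number of genuinely new variables at $\vert\getcls{\baseSet{\Theta}}\setminus\dom{\Theta}\vert$, and stability of \emph{both} $\useq{\Theta}{j}$ and $\useq{\Theta}{j+1}$ yields $\vert\irrV{j+1}\vert\geq\vert\getcls{\baseSet{\Theta}}\setminus\dom{\Theta}\vert+\vert\irrV{j}\vert$, i.e.\ the irrelevant set absorbs at least as many variables as the step introduces, so the bare count cannot climb above its value at $i$. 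That explicit accounting of $\vert\irrV{j+1}\vert-\vert\irrV{j}\vert$ against the per-step variable budget is exactly the piece missing from your write-up; supplying it (or an equivalent) is necessary to finish, and doing so would essentially reproduce the paper's argument.
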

\begin{proof}
    Observe that the index that minimizes $\mathit{stab}(\uProb{\Theta},j)$ is the same for all  $j\geq \mathit{stab}(\uProb{\Theta})$. By  Theorem~\ref{thm:primitiveEnough}, we can transform  $\uProb{\Theta}$ into a primitive schematic unification problem. When $j=i$, the statement trivially holds. Let us assume that the statement holds for $j$, for the induction hypothesis and show that it holds for $j+1$. If $\vert\vars{\NstoreC{\Theta}{\useq{\Theta}{(j+1)}}}\subEQ{(j+1)}\vert \leq \vert \vars{\NstoreC{\Theta}{\useq{\Theta}{j}}}\subEQ{j}\vert$, then the statement follows by transitivity with the induction hypothesis. If $\vert\vars{\NstoreC{\Theta}{\useq{\Theta}{j}}}\subEQ{j}\vert \leq \vert \vars{\NstoreC{\Theta}{\useq{\Theta}{(j+1)}}}\subEQ{(j+1)}\vert$, then we prove the statement as follows: By Lemma~\ref{lem:OneMorePerClass}, There are $\vert \getcls{\baseSet{\Theta}}\setminus\dom{\Theta}\vert$ variables occuring in $\useq{\Theta}{(j+1)}$ which did not occur in $\useq{\Theta}{j}$. Because $\useq{\Theta}{(j+1)}$ and $\useq{\Theta}{j}$ are stable we can derive the following:
   \scalebox{.95}{
 \begin{minipage}{1\textwidth}
   \begin{align}\label{eq:ineq}
        \vert\irrV{j+1}\vert \geq   (j-i)\cdot\vert \getcls{\baseSet{\Theta}}\setminus\dom{\Theta}\vert+ \vert\irrV{i}\vert\\ 
 \vert\irrV{j}\vert \geq   (j-i-1)\cdot\vert \getcls{\baseSet{\Theta}}\setminus\dom{\Theta}\vert+ \vert\irrV{i}\vert \\
 \vert\irrV{j+1}\vert \geq   \vert \getcls{\baseSet{\Theta}}\setminus\dom{\Theta}\vert+ \vert\irrV{j}\vert
\end{align}
\end{minipage}}

\vspace{.5em}
\noindent Substracting Line (2) from (1) and adding  $\vert\irrV{j}\vert$ to both sides results in Line (3), which entails $\vert\vars{\NstoreC{\Theta}{\useq{\Theta}{(j+1)}}}\subEQ{(j+1)}\vert \leq \vert \vars{\NstoreC{\Theta}{\useq{\Theta}{i}}}\subEQ{i}\vert$.
\end{proof}

\begin{conjecture}\label{conj:infStable}
All uniform schematic unification problems are  $\infty$-stable.
\end{conjecture}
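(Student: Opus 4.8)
The plan is to establish the quantitative statement underlying $\infty$-stability directly. Writing $\rho(j) = \vert \vars{\NstoreC{\Theta}{\useq{\Theta}{j}}\subEQ{j}}\vert - \vert \irrV{j}\vert$ for the count of \emph{relevant} variables, it suffices to show that for every $j \geq \mathit{stab}(\uProb{\Theta})$ we have $\rho(j) \leq \max\{\rho(k) \mid 0 \leq k \leq \mathit{stab}(\uProb{\Theta})\}$. Indeed, choosing $k$ to attain this maximum makes the ratio defining $\mathit{stab}(\uProb{\Theta},j)$ in Definition~\ref{def:stability} at most $1$, which is exactly stability, and ranging over all such $j$ yields $\infty$-stability. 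By Theorem~\ref{thm:primitiveEnough} every uniform problem may be transformed into a primitive one, so I would assume $\Theta$ primitive throughout.

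The core of the argument would be an \emph{eventual periodicity} of the normalized store modulo index shifting. Lemma~\ref{lem:bounddepth} bounds the depth of every term in $\NstoreC{\Theta}{\useq{\Theta}{j}}$ by $\depb$, and Lemma~\ref{lem:subbound} shows each such term is an index-shift $\shiftd{k}{t'}$ of a subterm $t'$ of some term in the fixed finite set $\mathit{terms}(\useq{\Theta}{0}\stepsubs{0}{U})$. Hence, after abstracting away absolute indices, only finitely many term shapes, and therefore only finitely many ``shapes'' of the whole store (its equivalence-class partition under $\subEQ{j}$ together with the marking of irrelevant variables $\irrV{j}$), can ever occur. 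I would then read the passage from instance $j$ to $j+1$ through Lemma~\ref{lem:delayedSubs}, as applying $\stepsubs{j}{U}$ and re-running $\Theta$-unification, and observe via Lemma~\ref{lem:cumulativeI} that the irrelevant set only accumulates. Because the shape alphabet is finite and, once indices are abstracted, this transition is deterministic, some shape must recur and the sequence of shapes becomes periodic.

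The hard part will be proving that $\rho$ does not strictly increase across a period. By Lemma~\ref{lem:OneMorePerClass}, beyond $\maxbnd{\Theta}{\useq{\Theta}{0}\stepsubs{0}{U}}$ each step introduces at most $\vert \getcls{\baseSet{\Theta}}\setminus\dom{\Theta}\vert$ fresh variables, exactly those arising from the set $\mathcal{Q}$ of that lemma. To close the proof I would need a bijection matching every \emph{incoming} relevant variable against one that simultaneously \emph{leaves} the relevant part --- either pushed beyond the bounded-depth window and absorbed into $\irr{j+1}$ (hence into $\irrV{j+1}$, consistent with Lemma~\ref{lem:cumulativeI}), or merged into an existing class by $\subEQ{j+1}$. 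Making this matching precise means tracking, position by position in each base term $\base{\Theta}{X}$, which variable occurrences have already shifted past all indices present in $\useq{\Theta}{0}\stepsubs{0}{U}$ --- precisely the phenomenon motivating the threshold $\mathit{stab}(\uProb{\Theta})$ and the stability observation preceding Definition~\ref{def:infstable}. That observation guarantees qualitatively that no genuinely new kind of relevant variable can be born past the threshold; converting this into the exact balance ``incoming relevant $=$ outgoing relevant per period'' is the step that resists a routine argument, and is why the statement appears here as a conjecture rather than a theorem.
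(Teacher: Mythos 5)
There is no proof to compare against here: the statement you are addressing is stated in the paper as an open \emph{conjecture} (Conjecture~\ref{conj:infStable}), and the paper deliberately leaves it unproved --- the surrounding text only offers informal evidence (the difficulty of constructing binding chains $x^1\unif t_1,\dots,x^n\unif t_n$ that keep old variables relevant under index shifting). Your proposal is likewise not a proof, and you say so yourself: the ``exact balance'' between incoming and outgoing relevant variables per period is precisely the missing content of the conjecture, not a technical detail to be routinely filled in. So the honest assessment is that you have restated the conjecture in a quantitative form ($\rho(j)$ eventually bounded by its values up to the threshold) and assembled the paper's existing machinery around it, which is a sensible plan but leaves the mathematical core untouched.

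Beyond that, one step of your sketch is circular as written. You argue that Lemmas~\ref{lem:bounddepth} and~\ref{lem:subbound} give finitely many term shapes and hence ``only finitely many shapes of the whole store,'' from which you extract eventual periodicity by pigeonhole. But a store is a \emph{set} of equations, and its cardinality is controlled by the number of variables that remain relevant; if that number grows without bound (the binding-chain failure mode the paper describes immediately after the conjecture), the shape alphabet of whole stores is infinite even though each individual term has bounded depth. The paper's own Theorem~\ref{Thm:algTerm} derives the finite set $\mathbf{B}$ of store shapes \emph{from} the assumption of $\infty$-stability together with those lemmas --- so you cannot use finiteness of store shapes as an ingredient in proving stability. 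Any successful attack has to establish the bound on relevant-variable count first (e.g.\ by a positional analysis of how occurrences in $\base{\Theta}{X}$ migrate under shifting, as you gesture at in your last paragraph), and only then does the periodicity and the ratio bound follow.
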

If there exists a uniform schematic unification problem which is not  $\infty$-stable, for large $i$, $\storeC{\Theta}{\useq{\Theta}{i}}$ will contain chains of bindings $x^1\unif t_1,\dots, x^n\unif t_n$ where $t_1,\dots,t_n\not \in \Vars$, $x^{(j+1)}\in \vars{t_j}$ for $0 < j < n$, $x^1$ is relevant to computing $\storeC{\Theta}{\useq{\Theta}{(i+1)}}$, and some variables in $\vars{t_n}$ are not. Thus variables introduced by  $\useq{\Theta}{i-k}$  where $0\leq k\leq i$ may be relevant to $\useq{\Theta}{(i+1)}$. While there are unification problems producing binding chains, such as $\{f(f(f(f(a,X_0),Y_0),Z_0),W_0)\unif f(W_0,f(Z_0,f(Y_0,f(X_0,a))))\}$, the binding chain produced is inversely ordered to what we describe above. With shifting (Definition~\ref{def.subsshift}), the construction of such examples is highly non-trivial; we conjecture that none exist. 

\section{Unifiability of Uniform Schematic Unification Problems}
\label{sec:decide}
Our algorithm accepts input that is potentially not $\infty$-stable and throws an exception if stability is violated; $\mathit{uSchUnif}(\mathcal{V}_{\Xi})$ returns $\bot$ in such cases. The subprocedures of Algorithm~\ref{alg:procedure} are defined in Table~\ref{tab:mainProcedureSubs}.
\begin{algorithm}  
\SetAlgoLined\DontPrintSemicolon
\SetKwProg{myalg}{Algorithm}{}{}
 \myalg{$\mathit{uSchUnif}(\mathcal{V}_{\Xi})$\mbox{ \textbf{:}}}{ $ \uProb{\Theta}\leftarrow \mathit{makePrimitive}(\mathcal{V}_{\Xi})$\\
    $\config{\storeC{\Theta}{\useq{\Theta}{0}}}{\mathcal{U}_0}{\Theta} \leftarrow \mathit{thUnif}(\useq{\Theta}{0}, \Theta)$\tcp*{Throws Exception}
    $\stepsubs{0}{U} \leftarrow stepSubs(\storeC{\Theta}{\useq{\Theta}{0}},\Theta)$\\
        $s\ ,\ \irr{0} \leftarrow\mathit{stab}(\uProb{\Theta})\ ,\ \mathit{irr}(\storeC{\Theta}{\useq{\Theta}{0}},\mathcal{U}_0)$\\
    \For{$i\leftarrow 1$ \KwTo $\infty$ \KwBy 1}{
   $\config{\storeC{\Theta}{\useq{\Theta}{i}}}{\mathcal{U}_i}{\Theta} \leftarrow \mathit{thUnif}(\storeC{\Theta}{\useq{\Theta}{(i-1)}}\stepsubs{(i-1)}{U},\Theta)$         \tcp*{Throws Ex.}
   $\irr{i}\leftarrow \irr{(i-1)}\stepsubs{(i-1)}{U}\cup \mathit{irr}(\storeC{\Theta}{\useq{\Theta}{i}},\mathcal{U}_i)$\\
   $\mathit{cycle}(\irr{i},\Theta)$\tcp*{Throws Exception}
   $\subEQ{i},\FuR{i}\leftarrow \mathit{computeEqFr(\storeC{\Theta}{\useq{\Theta}{i}},\Theta)}$ \\
   $\stepsubs{i}{U} \leftarrow stepSubs(\storeC{\Theta}{\useq{\Theta}{i}},\Theta)$\\
  \If{$i\geq s$}{
  $ \mathit{stability}(i,s,\useq{\Theta}{i},\subEQ{i},\irr{i})$   \tcp*{Throws Exception}
  \For{$j\leftarrow 0$ \KwTo $i-1$ \KwBy 1}{
  $\mu \leftarrow \mathit{checkForMap}(\FuR{i},\NstoreC{\Theta}{\useq{\Theta}{i}}\subEQ{i},\FuR{j},\NstoreC{\Theta}{\useq{\Theta}{j}}\subEQ{j},\Theta)$\\
  \If{$\mu  \wedge \mathit{varDisjoint}(\NstoreC{\Theta}{\useq{\Theta}{i}}\subEQ{i},\NstoreC{\Theta}{\useq{\Theta}{j}}\subEQ{j},\Theta)$}{
    \Return\ $\mu,\ \storeC{\Theta}{\useq{\Theta}{i}},\ \subEQ{i},\ \storeC{\Theta}{\useq{\Theta}{j}},\ \subEQ{j},\ \irr{i}$ \\
}
  }
}
  }}
  \caption{Unification Algorithm for Uniform Schematic Unification}
\label{alg:procedure}
\vspace{-.04em}
\end{algorithm}
\begin{table}
    \centering

\begin{itemize}
    \item  $\mathit{makePrimitive}(\mathcal{V}_{\Xi})$-  The input $\mathcal{V}_{\Xi}$ is a uniform schematic unification problem. The procedure transforms $\Xi$ into a primitive schematic substitution $\Theta$ using the sustitution $\sigma = \sigma_1\sigma_2\dots \sigma_n$ where $\vert \dom{\Xi}\vert= n$ and $\sigma_1,\sigma_2,\dots, \sigma_n$ are the substitution computed using Theorem~\ref{thm:primitiveEnough}. Let $\sigma' =\{ x\mapsto t\mid x\sigma=t \ \&\ \getcls{x}\cap \dom{\Xi} = \emptyset\}$, i.g. $\sigma$ without domain bindings. Finally we compute substitution $\mathcal{U}=\mathcal{V}\sigma'$ and return $\uProb{\Theta}$.
    \item $\mathit{thUnif}(\mathcal{U},\Theta)$- The input is a set of unification equations $\mathcal{U}$ and a schematic substituion $\Theta$. The procedure applies $\Theta$-unification (Section~\ref{sec:findepunif}) to the initial configuration $\config{\emptyset}{\mathcal{U}}{\Theta}$ computing $\finCon{\Theta}{\mathcal{U}} = \config{\mathcal{S}}{\mathcal{\mathcal{U}}}{\Theta}$. If during the computation of $\finCon{\Theta}{\mathcal{U}}$ a clash is detected, the procedure throws an exception. Additionally, $\mathit{unif}(\mathcal{\mathcal{U}})$ is computed. If a cycle is detected, the procedure throws an exception. Otherwise $\finCon{\Theta}{\mathcal{U}}$ is returned.
    \item $\mathit{stepSub}(\mathcal{U},\Theta)$- The input is a set of unification equations $\mathcal{U}$ and a schematic substituion $\Theta$. The procedure computes the object presented in Definition~\ref{def:stepSub}.
    \item $\mathit{irr}(\mathcal{S},\mathcal{U})$- The input is the store $\mathcal{S}$ and active set  $\mathcal{U}$  produced by $\Theta$-unification (Section~\ref{sec:findepunif}). The procedure returns a set of bindings $\mathcal{I}$ (Definition~\ref{def:irrelevantSet}). Line 8 of Algorithm~\ref{alg:procedure} computes the subsequent irrelevant set using Lemma~\ref{lem:cumulativeI}.
   \item $\mathit{cycle}(\mathcal{U},\Theta)$- The input is a set of bindings $\mathcal{U}$ and a schematic substitution $\Theta$. We Check if $\mathcal{U}$ fails occurs check as follows: Let $I_{max}(\mathcal{U}',\Theta)= \max \{\getidx{x}\mid x\unif t \in \mathcal{U}'\}$, $R(\mathcal{U}',\Theta) = \{r\mid x\unif t \in \mathcal{U}'\wedge r\in\vars{t}\wedge r\in \Theta\} $, $R_{min}(\mathcal{U}',\Theta) = \min \{\getidx{r}\mid r\in R(\mathcal{U}',\Theta)\}$, $V_{gap}(\mathcal{U}',\Theta)= I_{max}(\mathcal{U}',\Theta)-R_{min}(\mathcal{U}',\Theta)+1$, $\mathit{chc}_{\Theta}(\mathcal{U}',0) = \mathcal{U}'$, and $\mathit{chc}_{\Theta}(\mathcal{U}',i+1) = \mathcal{U}^*(\bigcup_{r\in C} \Theta^r)$ where $\mathcal{U}^* = \mathit{chc}_{\Theta}(\mathcal{U}',i)$, and $C=R(\mathcal{U}^*,\Theta)$. An exception is thrown if $\mathit{unif}(\mathit{chc}_{\Theta}(\mathcal{U},V_{gap}(\mathcal{U},\Theta)) =\bot$ an exception is thrown. Essentially we are checking for \textit{occurs checks} in the Irrelevant set. 
    \item $\mathit{computeEqFr(\mathcal{S},\Theta)}$- The input is a set of bindings $\mathcal{S}$  and a schematic substitution $\Theta$. The procedure computes a substitution $\sigma$ presented in Definition~\ref{def:EQsubstitution} and the future relevant variables  (Definition~\ref{def:futureRelevant}).
    \item $ \mathit{stability}(i,s,\mathcal{U},\sigma,\mathcal{I})$- Takes as input $i,s\in \mathbb{N}$, variable mapping $\sigma$, and set of bindings $\mathcal{U}$, $\mathcal{I}$. Checks if $i\geq s$ and if this is the case, the procedure checks if $\mathcal{U}$, $\sigma$, and $\mathcal{I}$ are stable with respect to the problem with index minimizing $\mathit{stab}(\uProb{\Theta},i)$  (Definition~\ref{def:stability}). If the stability fails, an exception is thrown. 
    \item $\mathit{checkForMap}(\mathcal{F}_1,\mathcal{U}_1,\mathcal{F}_2,\mathcal{U}_2,\Theta)$ - Takes as input sets of variables $F_1$ and $\mathcal{F}_2$, set of bindings $\mathcal{U}_1$, $\mathcal{U}_2$, and a schematic substitution $\Theta$. The procedure searches for a mapping $\mu$ from $\vars{\mathcal{U}_1}$ to $\vars{\mathcal{U}_2}$ such that \emph{(i)} if $x\in\dom{\mu}$ and $x\in \mathcal{F}_1$ then $x\mu\in \mathcal{F}_2$,  \emph{(ii)} 
    % if $x\in\dom{\Theta}$  then 
    $\getcls{x}=\getcls{x\mu}$,  \emph{(iii)} if $x\in \vars{\mathcal{U}_1}\cap \vars{\mathcal{U}_2}$, the $x=x\mu$ , and \emph{(iv)}  $\mathcal{U}_1\mu=\mathcal{U}_2$. If such a mapping $\mu$ exists,  return $\mu$, otherwise \textbf{None}.
    \item $\mathit{varDisjoint}(\mathcal{F}_1,\mathcal{U}_1,\mathcal{F}_2,\mathcal{U}_2,\Theta)$ - Takes as input sets of variables $F_1$ and $\mathcal{F}_2$, set of bindings $\mathcal{U}_1$, $\mathcal{U}_2$, and a schematic substitution $\Theta$.  The procedure checks if \emph{(i)} $\vars{\mathcal{U}_1}\cap \vars{\mathcal{U}_2}\cap (\mathcal{F}_1\cup \mathcal{F}_2)=\emptyset$ and \emph{(ii)} $\mathcal{F}_1\cap \mathcal{F}_2=\emptyset$ return \textbf{True} otherwise \textbf{False}. 
\end{itemize}
    \caption{Subprocedures of Algorithm~\ref{alg:procedure}.}
    \label{tab:mainProcedureSubs}
\end{table}
\begin{theorem}[Termination]\label{Thm:algTerm}
    Let $\uProb{\Theta}$ be a uniform schematic unification problem. Then  $\mathit{uSchUnif}(\uProb{\Theta})$ terminates.
\end{theorem}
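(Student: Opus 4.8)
The plan is to reduce termination of $\mathit{uSchUnif}$ to termination of its single unbounded loop (the $\textbf{for } i$ loop), since every sub-procedure invoked in one pass of that loop halts: $\mathit{thUnif}$ terminates by Theorem~\ref{thm:terminationTheta}, and $\mathit{stepSub}$, $\mathit{irr}$, $\mathit{cycle}$, $\mathit{computeEqFr}$, $\mathit{stability}$, together with the inner $\textbf{for } j$ loop (which calls $\mathit{checkForMap}$ and $\mathit{varDisjoint}$ for each $0\leq j\leq i-1$) are finite computations over finite sets of bindings and finitely many indices. Hence $\mathit{uSchUnif}$ can diverge only by running the outer loop forever, i.e. by never raising an exception (in $\mathit{thUnif}$, $\mathit{cycle}$, or $\mathit{stability}$) and never reaching the \textbf{Return}. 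I would rule this out by a case analysis on $\uProb{\Theta}$.

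The first two cases produce an exception at a finite stage. If some instance $\useq{\Theta}{i}$ is not unifiable, then $\storeC{\Theta}{\useq{\Theta}{i}}=\bot$; by Corollary~\ref{cor:alwaysfail} there is a least such $i$, and the corresponding call to $\mathit{thUnif}$ — or the subsequent $\mathit{cycle}$ check, which catches occurs-check failures in $\irr{i}$ that only surface across iterations — raises its exception at iteration $i$. If instead every instance is unifiable but $\uProb{\Theta}$ is not $\infty$-stable, then by Definition~\ref{def:infstable} there is a least $j\geq \mathit{stab}(\uProb{\Theta})$ at which $\storeC{\Theta}{\useq{\Theta}{j}}$ is not stable, and the guarded call $\mathit{stability}(j,s,\dots)$ throws. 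In both cases the outer loop is entered at most $i$ (resp. $j$) times, so the algorithm halts.

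The remaining case — every instance unifiable and $\uProb{\Theta}$ $\infty$-stable — is where the loop must instead reach the \textbf{Return}, and this is the heart of the argument. The key claim is that the pairs $(\FuR{i},\NstoreC{\Theta}{\useq{\Theta}{i}}\subEQ{i})$ compared by $\mathit{checkForMap}$ range over a \emph{finite} set modulo the symbol-preserving, index-shifting renaming that $\mathit{checkForMap}$ searches for. I would assemble this finiteness from three ingredients: the depth of every term in $\NstoreC{\Theta}{\useq{\Theta}{i}}$ is bounded by $\depb$ (Lemma~\ref{lem:bounddepth}); every such term is an index-shift of a subterm of a term in $\useq{\Theta}{0}\stepsubs{0}{U}$, so only finitely many term shapes occur up to shifting (Lemma~\ref{lem:subbound}); and, since $\uProb{\Theta}$ is $\infty$-stable, the number of distinct variables surviving in $\NstoreC{\Theta}{\useq{\Theta}{i}}\subEQ{i}$ is bounded uniformly in $i$ by the variable-counting theorem of Section~\ref{sec:propTheta}. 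Together these confine each simplified normalized store, with its future-relevant annotation, to finitely many isomorphism types. As the loop never exits by exception, infinitely many instances are produced, so by the pigeonhole principle some type recurs at an infinite set of indices $j_0<j_1<\cdots$, all $\geq \mathit{stab}(\uProb{\Theta})$.

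Finally I would reconcile the two guards of the inner loop. $\mathit{checkForMap}$ succeeds on any same-type pair because the required $\mu$ is exactly the type-isomorphism, which maps future-relevant variables to future-relevant variables and respects variable symbols. For $\mathit{varDisjoint}$ I would invoke index growth: by Lemma~\ref{lem:OneMorePerClass} the indices introduced at each step increase, so once the gap exceeds $\maxbnd{\Theta}{\useq{\Theta}{0}\stepsubs{0}{U}}$ the variables of a late instance $\useq{\Theta}{i}$ all carry indices strictly above those of a fixed early instance $\useq{\Theta}{j_0}$; choosing $i=j_b$ with $j_b-j_0$ past this threshold makes $\vars{\NstoreC{\Theta}{\useq{\Theta}{i}}\subEQ{i}}$ and $\vars{\NstoreC{\Theta}{\useq{\Theta}{j_0}}\subEQ{j_0}}$ disjoint, which simultaneously renders $\mathit{checkForMap}$'s fixed-point condition vacuous and forces $\FuR{i}\cap\FuR{j_0}=\emptyset$, so $\mathit{varDisjoint}$ returns \textbf{True}. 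Thus a valid pair is available by some finite iteration, the inner loop detects it, and the algorithm returns. The main obstacle is precisely the finiteness-of-types claim: making the three bounds interact correctly — in particular reconciling the index-shifting freedom of Lemma~\ref{lem:subbound} with the variable bound that requires $\infty$-stability, and verifying that the future-relevant annotation is preserved by the type-isomorphism so that condition (i) of $\mathit{checkForMap}$ holds jointly with $\mathit{varDisjoint}$.
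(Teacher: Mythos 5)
Your proposal is correct and follows essentially the same route as the paper: both reduce termination to the $\infty$-stable, all-instances-unifiable case, use Lemmas~\ref{lem:bounddepth}, \ref{lem:subbound}, and the variable bound from $\infty$-stability to confine the simplified normalized stores to finitely many types modulo renaming, and then apply the infinitary pigeonhole principle to guarantee that Line 17 is reached. Your treatment is in fact somewhat more careful than the paper's on two points it leaves implicit --- the finite-stage exception cases and the argument that a recurring type can be chosen far enough apart to satisfy $\mathit{varDisjoint}$.
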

\begin{proof}
    We assume that $\uProb{\Theta}$ is $\infty$-stable. If not, then for $i\geq \mathit{stab}(\uProb{\Theta})$, Line 13 of Algorithm~\ref{alg:procedure} will throw an exception, and the procedure will terminate. Concerning Lines 3 \& 7, we show that $\Theta$-unification terminates in Theorem~\ref{thm:terminationTheta}. By our assumption that  $\uProb{\Theta}$ is $\infty$-stable and Lemma~\ref{lem:bounddepth}, \ref{lem:subbound}, \& \ref{lem:OneMorePerClass}, we show that there exists is a finite set of sets of bindings $\mathbf{B}$ such that for all $i\geq \mathit{stab}(\uProb{\Theta})$, $\NstoreC{\Theta}{\useq{\Theta}{i}}\subEQ{i}\in \mathbf{B}$ modulo variable renaming (Definition~\ref{def:normalized}). Thus, by the infinitary pigeonhole principle, there exists $\mathit{stab}(\uProb{\Theta}) \leq i < j$ such that $\NstoreC{\Theta}{\useq{\Theta}{i}}\subEQ{i}= \NstoreC{\Theta}{\useq{\Theta}{j}}\subEQ{j}$, modulo variable renaming, and are disjoint (Line 16 of Algorithm~\ref{alg:procedure}). Thus, if no exception is thrown, Line 17 is always reached.
    \end{proof}
\begin{theorem}[Soundness]\label{thm:algsoundness}
    Let $\uProb{\Theta}$ be a uniform schematic unification problem. If $\mathit{uSchUnif}(\uProb{\Theta})\not = \bot$, then for all $i\geq 0$, $\useq{\Theta}{i}$ is unifiable.
\end{theorem}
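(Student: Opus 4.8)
The plan is to reduce the statement to a property of $\Theta$-unification and then exploit the periodicity certified by the map returned on Line~17. Combining Theorem~\ref{Thm:soundnessTheta} with the completeness of $\Theta$-unification, one has, for every $k$, that $\useq{\Theta}{k}$ is unifiable if and only if $\storeC{\Theta}{\useq{\Theta}{k}}\not=\bot$. By Theorem~\ref{thm:primitiveEnough}, which underlies $\mathit{makePrimitive}$ on Line~2, we may assume without loss of generality that $\Theta$ is primitive. Hence the goal becomes: if $\mathit{uSchUnif}$ returns a non-$\bot$ value, then $\storeC{\Theta}{\useq{\Theta}{k}}\not=\bot$ for all $k\geq 0$.

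Next I would record what a successful return guarantees about the finite prefix. Reaching Line~17 means the outer loop advanced to some value $i$ and located an index $0\leq j<i$; crucially, no exception was raised, neither by the initial $\mathit{thUnif}$ of Line~3 nor by Lines~7,~9, or~13 across the loop iterations up to $i$. By Lemma~\ref{lem:delayedSubs}, Line~7 correctly recomputes $\storeC{\Theta}{\useq{\Theta}{k}}$ from $\storeC{\Theta}{\useq{\Theta}{(k-1)}}\stepsubs{(k-1)}{U}$, so the absence of a clash or cycle inside $\mathit{thUnif}$ yields $\storeC{\Theta}{\useq{\Theta}{k}}\not=\bot$ for every $0\leq k\leq i$. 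All instances up to $i$ are therefore unifiable, and it remains to extend this to $k>i$.

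The heart of the argument is to turn the collision witnessed by $\mu$ into an eventual periodicity of the normalized stores. The returned map satisfies $\NstoreC{\Theta}{\useq{\Theta}{i}}\subEQ{i}\mu=\NstoreC{\Theta}{\useq{\Theta}{j}}\subEQ{j}$, is class-preserving, fixes shared variables, and sends $\FuR{i}$ into $\FuR{j}$, while $\mathit{varDisjoint}$ guarantees the two future-relevant sets are disjoint. Using Lemma~\ref{lem:delayedSubs} again, one step of the recursion, namely applying $\stepsubs{k}{U}$ and then re-running $\Theta$-unification, depends up to variable renaming only on the pair $(\NstoreC{\Theta}{\useq{\Theta}{k}}\subEQ{k},\FuR{k})$. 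I would therefore show by induction on $t$ that a renaming extending $\mu$ witnesses $\NstoreC{\Theta}{\useq{\Theta}{(i+t)}}\subEQ{(i+t)}\cong\NstoreC{\Theta}{\useq{\Theta}{(j+t)}}\subEQ{(j+t)}$ for all $t\geq 0$, so that the sequence of normalized stores is periodic with period $p=i-j$ from instance $j$ onward. Since a clash or cycle in $\storeC{\Theta}{\useq{\Theta}{(k+1)}}$ would surface either in this normalized-collapsed store or in the irrelevant part, and since the $\mathit{cycle}$ check on Line~9 has already certified, via its $\mathit{chc}_{\Theta}$ expansion up to $V_{gap}$, that the accumulating irrelevant set admits no future occurs-check failure, periodicity transfers non-$\bot$-ness. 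As $\storeC{\Theta}{\useq{\Theta}{k}}\not=\bot$ for $j\leq k\leq i$ and the pattern repeats with period $p$, we obtain $\storeC{\Theta}{\useq{\Theta}{k}}\not=\bot$ for all $k\geq j$, and together with the finite prefix, for all $k\geq 0$, as required.

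The main obstacle is the inductive step above: proving that $(\NstoreC{\Theta}{\useq{\Theta}{k}}\subEQ{k},\FuR{k})$ is a complete invariant for one step of the recursion, that is, that the store-evolution operator commutes with the renaming $\mu$. This demands that neither the normalization $\uparrow_\Theta$ nor the simplifying substitution $\subEQ{k}$ discards information needed to compute the next store, which is precisely what the stability hypothesis together with Lemmas~\ref{lem:bounddepth},~\ref{lem:subbound}, and~\ref{lem:OneMorePerClass} are designed to secure, and that the equations suppressed by $\mu$ and by the irrelevant set cannot later re-enter the relevant structure, which is exactly where the $\mathit{varDisjoint}$ and $\mathit{cycle}$ certificates are indispensable.
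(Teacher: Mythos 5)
Your proposal is correct and follows essentially the same route as the paper: reduce unifiability of $\useq{\Theta}{k}$ to $\storeC{\Theta}{\useq{\Theta}{k}}\not=\bot$ via the soundness/completeness of $\Theta$-unification, observe that the prefix up to the returned index is exception-free, and use the map $\mu$ from Line~15 (together with $\mathit{varDisjoint}$ and the $\mathit{cycle}$ check on the cumulative irrelevant set) to show that one step of store evolution commutes with the renaming, so non-failure propagates beyond the prefix. The only difference is presentational — you run the periodicity induction forward from the certified collision, whereas the paper argues by contradiction, mapping a hypothetical failure at $i+1$ back to an already-checked instance $m+j+1$ — and you correctly isolate the same key invariant (that the normalized store plus future-relevant set determines the next step up to renaming) that the paper's proof also rests on.
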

\begin{proof}
Let us consider $\useq{\Theta}{0}$. Observe that Line 3 of Algorithm~\ref{alg:procedure} applies $\Theta$-unification to $\useq{\Theta}{0}$. Given that $\Theta$-unification is \textit{sound} (Theorem~\ref{Thm:soundnessTheta}) we can conclude that if $\mathit{thUnif}(\useq{\Theta}{0}) =\bot$ then  $\mathit{uSchUnif}(\uProb{\Theta}) = \bot$ (contrapositive). 

For the step case, we assume that for all $i\geq 0$, if $\mathit{uSchUnif}(\uProb{\Theta})\not = \bot$, then $\useq{\Theta}{i}$ is unifiable, and show that if $\mathit{uSchUnif}(\uProb{\Theta})\not = \bot$, then $\useq{\Theta}{(i+1)}$ is unifiable. Observe that, by the construction of $\useq{\Theta}{i}$ and $\useq{\Theta}{i+1}$, if $\mathit{thUnif}(\useq{\Theta}{i}) =\bot$ then $\mathit{thUnif}(\useq{\Theta}{i+1}) =\bot$ (Corollary~\ref{cor:alwaysfail}). In this case, the Theorem trivially follows. We now consider the case where $\mathit{thUnif}(\useq{\Theta}{i}) \not =\bot$ and  $\mathit{thUnif}(\useq{\Theta}{i+1}) =\bot$.  We assume that $\mathit{uSchUnif}(\uProb{\Theta}) \not = \bot$ and reach a contradiction. 

let us assume that $\mathit{uSchUnif}(\uProb{\Theta})$ terminates with a cycle between $\useq{\Theta}{m}$ and $\useq{\Theta}{n}$ for $0\leq m < n$. Now, we need to consider two cases:
\begin{itemize}
    \item $i+1\leq n:$  Observe that  $ \mathit{thUnif}(\useq{\Theta}{i},\Theta) = \config{\storeC{\Theta}{\useq{\Theta}{i}}}{\storeC{\Theta}{\useq{\Theta}{i}}\cup \irr{i}\cup N}{\Theta}$ where $N$ is as defined in the proof of Lemma~\ref{lem:delayedSubs} and can be ignored. Thus,  $\mathit{thUnif}(\useq{\Theta}{i+1}) =\bot$ due to  $\mathit{thUnif}(\storeC{\Theta}{\useq{\Theta}{i}}\stepsubs{i}{U},\Theta)=\bot$ or  $\mathit{cyclic}(\irr{(i+1)},\Theta)$ throwing an exception where $\irr{(i+1)}$ is built from the final configuration of $\mathit{thUnif}(\storeC{\Theta}{\useq{\Theta}{i}}\stepsubs{i}{U},\Theta)$ and $\irr{i}$. Either case contradicts $\mathit{uSchUnif}(\uProb{\Theta}) \not = \bot$. 
    
    \item $n< i+1:$ Note that $i = m+  k\cdot (n-m) +j$ for some $k\geq 1$ and $0\leq j < n-m$. By Algorithm~\ref{alg:procedure}, we know that there is a mapping from the variables of $\NstoreC{\Theta}{\useq{\Theta}{i}}\subEQ{i}$ to $\NstoreC{\Theta}{\useq{\Theta}{(m+j)}}\subEQ{(m+j)}$. Furthermore, Line 16  of Algorithm~\ref{alg:procedure} ensures that  $\NstoreC{\Theta}{\useq{\Theta}{i}}\subEQ{i}$ and $\NstoreC{\Theta}{\useq{\Theta}{(m+j)}}\subEQ{(m+j)}$ are variable disjoint and none of the variables occuring in $\NstoreC{\Theta}{\useq{\Theta}{(m+j)}}\subEQ{(m+j)}$ are future relevant to $\NstoreC{\Theta}{\useq{\Theta}{i}}\subEQ{i}$; this avoids the possibility of unchecked variable cycles. From this observation we can deduce that there is a mapping from $\mathit{irr}(\storeC{\Theta}{\useq{\Theta}{i}},\mathcal{U}_i)$ to $\mathit{irr}(\storeC{\Theta}{\useq{\Theta}{(m+j)}},\mathcal{U}_{(m+j)})$ as the same bindings will be irrelevant modulo variable renaming. 

  As the output of Algorithm~\ref{alg:procedure} defines a cycle, the above deduction implies that a mapping from the variables of $\NstoreC{\Theta}{\useq{\Theta}{(i+1)}}\subEQ{(i+1)}$ to the variables of $\NstoreC{\Theta}{\useq{\Theta}{(m+j+1)}}\subEQ{(m+j+1)}$ exists, thus equating the two unification problems. However, $\mathit{thUnif}(\storeC{\Theta}{\useq{\Theta}{(m+j+1)}}\stepsubs{(m+j+1)}{U},\Theta)\not=\bot$ as we know  
$\mathit{uSchUnif}(\uProb{\Theta})$ terminated with a cycle. Thus, if  $\mathit{thUnif}(\useq{\Theta}{i+1}) =\bot$, then  $\mathit{cyclic}(\irr{(i+1)},\Theta)$ must throw an exception. However, as with the $i^{th}$ case there is a mapping from $\mathit{irr}(\storeC{\Theta}{\useq{\Theta}{(i+1)}},\mathcal{U}_{(i+1)})$ to $\mathit{irr}(\storeC{\Theta}{\useq{\Theta}{(m+j+1)}},\mathcal{U}_{(m+j+1)})$ as the same bindings are introduced modulo variable renaming. Thus, if $\mathit{cyclic}(\irr{(i+1)},\Theta)$ throws an exception, then  $\mathit{cyclic}(\irr{(m+j+1)},\Theta)$ must also throw an exception contradicting our assumption that $\mathit{uSchUnif}(\uProb{\Theta}) \not = \bot$. Thus, we have derived that if $\useq{\Theta}{(i+1)}$ is not  unifiable and $\mathit{uSchUnif}(\uProb{\Theta}) = \bot$, the contrapositive of the induction hypothesis.
\end{itemize}
\end{proof}
\begin{theorem}[Completeness]
    Let $\uProb{\Theta}$ be an $\infty$-stable uniform schematic unification problem. If for all $i\geq 0$, $\useq{\Theta}{i}$ is unifiable, then $\mathit{uSchUnif}(\uProb{\Theta})$ $\not = \bot$.
\end{theorem}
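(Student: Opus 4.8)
The plan is to reduce completeness to excluding the three exception sources of Algorithm~\ref{alg:procedure}: once none of them can fire under the hypotheses, the pigeonhole argument already carried out in the proof of Theorem~\ref{Thm:algTerm} guarantees that the return at Line~17 is reached, so $\mathit{uSchUnif}(\uProb{\Theta})\not=\bot$. The only ways the algorithm can yield $\bot$ are: (a) $\mathit{thUnif}$ throwing at Line~3 (the case $i=0$) or Line~7 (the loop body); (b) $\mathit{cycle}(\irr{i},\Theta)$ throwing at Line~9; and (c) $\mathit{stability}$ throwing at Line~13. The map and disjointness tests at Lines~15--16 raise no exception, they merely decide whether to return, so they are not sources of $\bot$. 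Thus it suffices to rule out (a), (b), (c).

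Cases (a) and (c) are handled directly from the hypotheses. Because each $\useq{\Theta}{i}$ is unifiable, soundness of $\Theta$-unification (Theorem~\ref{Thm:soundnessTheta}) yields $\storeC{\Theta}{\useq{\Theta}{i}}\not=\bot$ for every $i$; hence the call at Line~3 detects neither clash nor cycle. For Line~7, Lemma~\ref{lem:delayedSubs} identifies the store obtained from $\storeC{\Theta}{\useq{\Theta}{(i-1)}}\stepsubs{(i-1)}{U}$ with $\storeC{\Theta}{\useq{\Theta}{i}}$, which is again $\not=\bot$, so $\mathit{thUnif}$ does not throw in the loop either. Exception (c) is excluded by assumption: $\infty$-stability (Definition~\ref{def:infstable}) states that every $\storeC{\Theta}{\useq{\Theta}{j}}$ with $j\geq\mathit{stab}(\uProb{\Theta})$ is stable, and as none of these stores is $\bot$, the test at Line~13 always passes.

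The substantial case is (b), and I expect the cycle test to be the main obstacle. By its definition, $\mathit{cycle}(\irr{i},\Theta)$ throws exactly when $\mathit{unif}(\mathit{chc}_{\Theta}(\irr{i},V_{gap}(\irr{i},\Theta)))=\bot$, i.e.\ when unfolding the $\Theta$-variables on right-hand sides of the irrelevant set finitely many times exposes an occurs-check or clash. The strategy is to witness any such failure inside a genuine instance. Iterating Lemma~\ref{lem:cumulativeI} gives $\irr{i}\stepsubs{i}{U}\cdots\stepsubs{i+k-1}{U}\subseteq\irr{i+k}$, and, since a step substitution performs exactly one level of $\Theta$-unfolding, $\mathit{chc}_{\Theta}(\irr{i},k)$ corresponds --- modulo the index shifts of Definition~\ref{def.subsshift} and variable renaming --- to equations contained in $\irr{i+k}$, hence to constraints that $\Theta$-unification of $\useq{\Theta}{i+k}$ must solve. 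A unification failure among them forces $\storeC{\Theta}{\useq{\Theta}{i+k}}=\bot$, so by completeness of $\Theta$-unification $\useq{\Theta}{i+k}$ is not unifiable, contradicting the hypothesis. The delicate point is showing that $V_{gap}(\irr{i},\Theta)=I_{max}(\irr{i},\Theta)-R_{min}(\irr{i},\Theta)+1$ unfoldings already suffice: I must argue that no latent cycle can hide beyond $V_{gap}$ steps, because once the largest left-hand index $I_{max}$ and the smallest index $R_{min}$ among the right-hand $\Theta$-variables are brought into alignment, every further application of $\bigcup_{r\in C}\Theta^{r}$ only uniformly re-shifts an already present pattern of variables; a cycle therefore surfaces within $V_{gap}$ steps or never. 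Matching this bounded unfolding to the unbounded family of instances is the crux of the argument.

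With (a), (b), and (c) excluded, no call returns $\bot$ and no exception propagates. The pigeonhole argument from the proof of Theorem~\ref{Thm:algTerm} then applies unchanged: the family $\{\NstoreC{\Theta}{\useq{\Theta}{i}}\subEQ{i}\}_{i\geq\mathit{stab}(\uProb{\Theta})}$ assumes only finitely many values modulo variable renaming, so there exist $\mathit{stab}(\uProb{\Theta})\leq i<j$ whose normalized stores match and are variable-disjoint, satisfying the guard at Line~16. Hence Line~17 executes and $\mathit{uSchUnif}(\uProb{\Theta})\not=\bot$, as required.
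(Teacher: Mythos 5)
Your proof is correct and follows essentially the same route as the paper, whose own proof is simply the remark that completeness follows from the termination and soundness arguments: you rule out each exception source (Lines 3/7, 9, and 13) under the hypotheses and then invoke the pigeonhole argument from Theorem~\ref{Thm:algTerm} to guarantee the return at Line~17 is reached. One clarification: the point you single out as the crux --- that $V_{gap}$ unfoldings suffice to expose every latent cycle --- is only needed for soundness; for completeness you need exactly the direction you already sketched, namely that a failure of the bounded unfolding of $\irr{i}$ is witnessed inside some genuine instance $\useq{\Theta}{i+k}$ via Lemma~\ref{lem:cumulativeI}, which then contradicts the unifiability hypothesis.
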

\begin{proof}
Follows from the proofs of Theorem~\ref{Thm:algTerm}\ \& \ref{thm:algsoundness}.
\end{proof}
\begin{conjecture}[Completeness]
    Let $\uProb{\Theta}$ be a uniform schematic unification problem. If for all $i\geq 0$, $\useq{\Theta}{i}$ is unifiable, then $\mathit{uSchUnif}(\uProb{\Theta})\not = \bot$.
\end{conjecture}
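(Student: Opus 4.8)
The plan is to prove this statement by eliminating the only hypothesis that separates it from the preceding $\infty$-stable Completeness theorem: $\infty$-stability itself. Concretely, I would prove Conjecture~\ref{conj:infStable}, that every uniform schematic unification problem is $\infty$-stable, after which the result is immediate from the Completeness theorem for $\infty$-stable problems (whose content is Theorems~\ref{Thm:algTerm} and~\ref{thm:algsoundness}). By Theorem~\ref{thm:primitiveEnough} I may assume without loss of generality that $\Theta$ is primitive, since the passage from a uniform $\Theta$ to an equivalent primitive $\Xi$ merely composes finitely many substitutions into the bases and preserves the instance problems up to renaming.

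To establish $\infty$-stability I would argue by contradiction. Suppose $\uProb{\Theta}$ is primitive but not $\infty$-stable, so by Definitions~\ref{def:stability} and~\ref{def:infstable} there is a $j\geq \mathit{stab}(\uProb{\Theta})$ with $\mathit{stab}(\uProb{\Theta},j)>1$; that is, the relevant-variable count $\vert\vars{\NstoreC{\Theta}{\useq{\Theta}{j}}\subEQ{j}}\vert-\vert\irrV{j}\vert$ strictly exceeds the stabilized baseline. Using the per-step growth bound of Lemma~\ref{lem:OneMorePerClass} together with the recursive identity of Lemma~\ref{lem:delayedSubs}, I would show non-stability cannot be an isolated spike: once the relevant count surpasses the baseline it must keep growing, so this quantity tends to infinity with $i$. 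Since Lemma~\ref{lem:bounddepth} bounds the depth of every normalized store term and Lemma~\ref{lem:subbound} forces each such term to be a shift of a subterm of the fixed finite set $\mathit{terms}(\useq{\Theta}{0}\stepsubs{0}{U})$, an unbounded supply of genuinely relevant variables can only be sustained through binding chains $x^1\unif t_1,\dots,x^n\unif t_n$ of unbounded length $n$, with $x^{k+1}\in\vars{t_k}$, head $x^1$ future relevant (Definition~\ref{def:futureRelevant}), and the tail feeding in fresh relevant variables --- exactly the configuration described after Conjecture~\ref{conj:infStable}.

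The crux, and the step I expect to be the main obstacle, is to rule out such unboundedly long chains using the index-shifting discipline. A chain extension is always created by \textbf{Decomposition} followed by \textbf{Transitive} against a binding whose right-hand side is a shifted base term $\shiftd{k}{\base{\Theta}{X}}$, so the indices along the chain are governed by the fixed offsets of $\Theta$ plus the accumulated shift. The invariant I would isolate is that, starting from a future-relevant head, following the chain forces the threaded variable indices to move strictly monotonically in one direction: primitivity, $\Rec{\Theta}{X}\subset\{0,1\}$, makes the recursive variable advance by exactly one index per unfolding, so the non-recursive variables inherit strictly increasing (respectively decreasing) indices. After $\mathit{stab}(\uProb{\Theta})$ every variable of $\useq{\Theta}{0}\stepsubs{0}{U}$ that can shift position already has (the observation underlying Definition~\ref{def:stability}), so the admissible index window of a single instance is bounded in terms of $\maxbnd{\Theta}{\useq{\Theta}{0}\stepsubs{0}{U}}-\minbnd{\Theta}{\useq{\Theta}{0}\stepsubs{0}{U}}$ through the machinery of Lemma~\ref{lem:OneMorePerClass}. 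A strictly monotone chain confined to a bounded window has bounded length, contradicting the unbounded $n$, and hence non-stability.

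The genuinely delicate point is proving that shifting always forces the \emph{benign} orientation. The chain $\{f(f(f(f(a,X_0),Y_0),Z_0),W_0)\unif f(W_0,f(Z_0,f(Y_0,f(X_0,a))))\}$ shows that binding chains do arise, but there the index ordering is inverse to the one non-stability would require; fabricating a primitive $\Theta$ whose shifted bases reverse this orientation --- so that a high-index variable sustains a lower-index relevant variable in a self-renewing loop --- appears impossible, yet a fully rigorous exclusion of every such schema is what remains open. A plausible fallback that sidesteps the chain analysis is to bound $\vert\FuR{i}\vert$ uniformly in $i$: if one shows the future-relevant set injects into a fixed finite index window determined by $\getcls{\baseSet{\Theta}}\setminus\dom{\Theta}$ and the offsets of $\Theta$, then the relevant-variable count is bounded outright, yielding $\infty$-stability directly.
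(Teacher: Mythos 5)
The statement you are proving is stated in the paper as an open \emph{conjecture}: the paper offers no proof of it, only the remark that it would follow immediately from Conjecture~\ref{conj:infStable} (that every uniform schematic unification problem is $\infty$-stable) combined with the completeness theorem already proved for $\infty$-stable problems. Your opening reduction is therefore exactly the paper's own observation, and the real content of your proposal is an attempted proof of Conjecture~\ref{conj:infStable}. That attempt does not close, so the proposal is not a proof.

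There are two concrete gaps. First, the step ``non-stability cannot be an isolated spike: once the relevant count surpasses the baseline it must keep growing'' is not supported by the lemmas you cite. Lemma~\ref{lem:OneMorePerClass} gives only an \emph{upper} bound on per-step variable growth, and Lemma~\ref{lem:delayedSubs} is a recursion identity for the store; neither forces the quantity $\vert\vars{\NstoreC{\Theta}{\useq{\Theta}{j}}\subEQ{j}}\vert-\vert\irrV{j}\vert$ to diverge once it exceeds the baseline at a single $j$. Since $\infty$-stability fails as soon as \emph{one} instance $j\geq\mathit{stab}(\uProb{\Theta})$ violates the ratio bound, your contradiction argument must handle an isolated violation, and it does not. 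Second, and by your own admission, the crux --- ruling out every primitive $\Theta$ whose shifted bases produce a binding chain oriented so that a high-index variable keeps a lower-index variable relevant --- ``remains open.'' That exclusion \emph{is} Conjecture~\ref{conj:infStable}; the monotone-index-window heuristic and the fallback of bounding $\vert\FuR{i}\vert$ are plausible attack directions (and consistent with the paper's informal discussion following the conjecture), but neither is carried out. As it stands, the proposal restates the paper's conditional observation and leaves the conjecture where the paper left it.
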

If Conjecture~\ref{conj:infStable} is correct, then Line 13 of Algorithm~\ref{alg:procedure} never throws an exception. Thus, completeness is an immediate consequence. Concerning the complexity of Algorithm~\ref{alg:procedure}, observe that Line 15 is essentially a subsumption check (an \emph{NP-complete} problem). Thus,  Algorithm~\ref{alg:procedure} has an exponential running time. 
\section{Conclusion}
\label{sec:conclud}
In this paper, we present \textit{schematic unification problems}, a generalization of first-order unification allowing variables to be associated with a recursive definition. We define a well-behaved fragment and present a sound and terminating algorithm. The algorithm is complete under the additional restriction that the schematic unification problem is $\infty$-stable. We conjecture that all uniform schematic unification problems are $\infty$-stable. Future work includes addressing the open conjecture,  generalizing these results to \textit{simple} schematic unification problems and possibly problems with mutually recursive schematic substitutions. Currently, the algorithm produces the objects necessary for constructing a schematic unifier. We plan to address the construction of the unifier in the near future. Also, we plan to investigate how one could integrate such unification into a resolution calculus, likely a calculus similar to the one presented in~\cite{DBLP:journals/jar/CernaLL21}. This type of unification is essential to developing a cyclic resolution calculus for inexpressive theories of arithmetic~\cite{10.1007/11554554_8,DBLP:conf/fossacs/Simpson17}.

\bibliographystyle{splncs04}

\bibliography{references}

\appendix

\section{Discussion of Implementation}
An implementation of the algorithm presented in this paper may be found in the repository \href{https://github.com/Ermine516/Schematic-Unification}{github.com/Ermine516/Schematic-Unification}. The implementation contains a test suite of 35 problems  and an additional 3 problems that are non-uniform (Examples/simple). 

\section{Example Run}
Consider the Schematic unification problem 
$$ \uProb{\Theta} = \{f(X_4,L_0) \unif f(f(Y_3,Y_3),f(Y_0,f(Y_1,Y_0)))\}$$
where schematic substitution is as follows:
$$\Theta = \{ L_i \mapsto f(f(X_{i+1},f(Z_i,f(X_{i+1},f(X_i,f(Z_{i+1},X_i))))),L_{i+1})\mid i\geq 0 \}$$
and $\mathit{stab}(\uProb{\Theta}) = 8$.
In each subsection below, we present the objects produced by each step of  Algorithm~\ref{alg:procedure}.

\subsection{i=0}
\begin{itemize}
\item $\vert\vars{\useq{\Theta}{0}}\vert  - \vert\dom{\subEQ{0}}\vert - \vert\irrV{0}\vert  = 5$
    \item $\storeC{\Theta}{\useq{\Theta}{0}}$ is as follows
    $$\left\{\begin{array}{l}
      X_4\unif  f(Y_3,Y_3) \\
      L_0\unif f(Y_0,f(Y_1,Y_0))
    \end{array}\right\}$$
    \item $\irr{0}$ is empty
    \item $\FuR{0} =\{X_4\}$
    \item $\subEQ{0}$ is identity substitution

    \item $\stepsubs{0}{U}=\{ L_0 \mapsto f(f(X_{1},f(Z_0,f(X_{1},f(X_0,f(Z_{1},X_0))))),L_{1}) \}$

\end{itemize}

\subsection{i=1}
\begin{itemize}
\item  $\vert\vars{\useq{\Theta}{1}}\vert  - \vert\dom{\subEQ{1}}\vert - \vert\irrV{1}\vert  = 9$
    \item $\storeC{\Theta}{\useq{\Theta}{1}}$ is as follows
    $$\left\{\begin{array}{l}
      Y_0\unif f(X_{1},f(Z_0,f(X_{1},f(X_0,f(Z_{1},X_0))))) \\
      X_4\unif  f(Y_3,Y_3) \\
      L_1\unif f(Y_1,Y_0)
    \end{array}\right\}$$
    \item $\irr{1}$ is empty
    \item $\FuR{1} =\{X_4,Z_1,X_1\}$
    \item $\subEQ{1}$ is identity substitution
    \item $\stepsubs{1}{U}=\{ L_1 \mapsto f(f(X_{2},f(Z_1,f(X_{2},f(X_1,f(Z_{2},X_1))))),L_{2}) \}$
    \end{itemize}

\subsection{i=2}
\begin{itemize}
\item $\vert\vars{\useq{\Theta}{2}}\vert  - \vert\dom{\subEQ{2}}\vert - \vert\irrV{2}\vert  = 7$
    \item $\storeC{\Theta}{\useq{\Theta}{2}}$ is as follows
    $$\left\{\begin{array}{l}
      Y_0\unif f(X_{1},f(Z_0,f(X_{1},f(X_0,f(Z_{1},X_0))))) \\
      Y_0\unif L_2\\
      X_4\unif  f(Y_3,Y_3) \\
      L_2\unif f(X_{1},f(Z_0,f(X_{1},f(X_0,f(Z_{1},X_0)))))
    \end{array}\right\}$$
    \item $\irr{2}$ is as follows
        $$\left\{\begin{array}{l}
        Y_1\unif f(X_{2},f(Z_1,f(X_{2},f(X_1,f(Z_{2},X_1))))) 
    \end{array}\right\}$$
    \item $\FuR{2} =\{X_4\}$
    \item $\subEQ{2}$ is identity substitution
    \item $\stepsubs{2}{U}=\{ L_2 \mapsto f(f(X_{3},f(Z_2,f(X_{3},f(X_2,f(Z_{3},X_2))))),L_{3}) \}$
\end{itemize}

\subsection{i=3}
\begin{itemize}
\item $\vert\vars{\useq{\Theta}{3}}\vert  - \vert\dom{\subEQ{3}}\vert - \vert\irrV{3}\vert  = 10$
    \item $\storeC{\Theta}{\useq{\Theta}{3}}$ is as follows
    $$\left\{\begin{array}{l}
      X_1\unif f(X_{3},f(Z_2,f(X_{3},f(X_2,f(Z_{3},X_2))))) \\
      X_4\unif  f(Y_3,Y_3) \\
      L_3\unif f(Z_0,f(X_{1},f(X_0,f(Z_{1},X_0))))
    \end{array}\right\}$$
    \item $\irr{3}$ is as follows
        $$\irr{2}\stepsubs{2}{U}\cup\left\{\begin{array}{l}
        
        Y_0\unif f(f(X_{3},f(Z_2,f(X_{3},f(X_2,f(Z_{3},X_2))))),L_{3})
    \end{array}\right\}$$
    \item $\FuR{3} =\{X_4,X_3,Z_3\}$
    \item $\subEQ{3}$ is identity substitution
    \item $\stepsubs{3}{U}=\{ L_3 \mapsto f(f(X_{4},f(Z_3,f(X_{4},f(X_3,f(Z_{4},X_3))))),L_{4}) \}$
\end{itemize}
\subsection{i=4}
\begin{itemize}
\item $\vert\vars{\useq{\Theta}{4}}\vert  - \vert\dom{\subEQ{4}}\vert - \vert\irrV{4}\vert  = 10$
    \item $\storeC{\Theta}{\useq{\Theta}{4}}$ is as follows
    $$\left\{\begin{array}{l}
      X_1\unif f(X_{3},f(Z_2,f(X_{3},f(X_2,f(Z_{3},X_2))))) \\
      X_4\unif  f(Y_3,Y_3) \\
      L_4\unif f(X_{1},f(X_0,f(Z_{1},X_0)))
    \end{array}\right\}$$
    \item $\irr{4}$ is as follows
        $$\irr{3}\stepsubs{3}{U}\cup\left\{\begin{array}{l}
        Z_0 \unif f(X_4,f(Z_3,f(X_4,f(X_3,f(Z_4,X_3)))))
    \end{array}\right\}$$
    \item $\FuR{4} =\{X_4\}$
    \item $\subEQ{4}$ is identity substitution
    \item $\stepsubs{4}{U}=\{ L_4 \mapsto f(f(X_{5},f(Z_4,f(X_{5},f(X_4,f(Z_{5},X_4))))),L_{5}) \}$
\end{itemize}
\subsection{i=5}
\begin{itemize}
\item $\vert\vars{\useq{\Theta}{5}}\vert  - \vert\dom{\subEQ{5}}\vert - \vert\irrV{5}\vert  = 5$
    \item $\storeC{\Theta}{\useq{\Theta}{5}}$ is as follows
    $$\left\{\begin{array}{l}  
      L_5 \unif f(X_0,f(Z_{1},X_0))\\
      Z_3 \unif Z_5\\
      Z_5 \unif Z_3\\
      X_5 \unif X_3\\
      X_3 \unif X_5\\
    \end{array}\right\}$$
    \item $\irr{5}$ is as follows
        $$\irr{4}\stepsubs{4}{U}\cup\left\{\begin{array}{l}
          X_1 \unif f(X_{3},f(Z_2,f(X_{3},f(X_2,f(Z_{3},X_2))))) \\
          X_1 \unif f(X_5,f(Z_4,f(X_5,f(X_4,f(Z_5,X_4)))))\\
          X_4 \unif  f(Y_3,Y_3) \\
          X_2 \unif  f(Y_3,Y_3) \\ 
          X_4 \unif X_2\\
          X_2 \unif X_4\\
    \end{array}\right\}$$
    \item $\FuR{5} =\{X_5,Z_5\}$
    \item $\subEQ{5} =\{Z_3\mapsto Z_5,X_3\mapsto X_5\}$
    \item $\stepsubs{5}{U}=\{ L_5 \mapsto f(f(X_{6},f(Z_5,f(X_{6},f(X_5,f(Z_{6},X_5))))),L_{6}) \}$
\end{itemize}

\subsection{i=6}
\begin{itemize}
\item $\vert\vars{\useq{\Theta}{6}}\vert  - \vert\dom{\subEQ{6}}\vert - \vert\irrV{6}\vert  = 8$
    \item $\storeC{\Theta}{\useq{\Theta}{6}}$ is as follows
    $$\left\{\begin{array}{l}  
      L_6 \unif f(Z_{1},X_0)\\
      X_0\unif f(X_{6},f(Z_5,f(X_{6},f(X_5,f(Z_{6},X_5))))\\
      Z_3 \unif Z_5\\
      Z_5 \unif Z_3\\
      X_5 \unif X_3\\
      X_3 \unif X_5\\
    \end{array}\right\}$$
    \item $\irr{6}$ is as follows
        $$\irr{5}\stepsubs{5}{U}$$
    \item $\FuR{6} =\{X_6,Z_6\}$
    \item $\subEQ{6} =\{Z_3\mapsto Z_5,X_3\mapsto X_5\}$
    \item $\stepsubs{6}{U}=\{ L_6 \mapsto f(f(X_{7},f(Z_6,f(X_{7},f(X_6,f(Z_{7},X_6))))),L_{7}) \}$
\end{itemize}
\subsection{i=7}
\begin{itemize}
\item $\vert\vars{\useq{\Theta}{7}}\vert  - \vert\dom{\subEQ{7}}\vert - \vert\irrV{7}\vert  = 7$
    \item $\storeC{\Theta}{\useq{\Theta}{7}}$ is as follows
    $$\left\{\begin{array}{l}  
      L_7 \unif f(X_{6},f(Z_5,f(X_{6},f(X_5,f(Z_{6},X_5)))) \\
      X_0\unif L_7\\
      Z_3 \unif Z_5\\
      Z_5 \unif Z_3\\
      X_5 \unif X_3\\
      X_3 \unif X_5\\
    \end{array}\right\}$$
    \item $\irr{7}$ is as follows
        $$\irr{6}\stepsubs{6}{U}\cup\left\{\begin{array}{l}
          Z_1 \unif f(X_7,f(Z_6,f(X_7,f(X_6,f(Z_7,X_6)))))\\
    \end{array}\right\}$$
    \item $\FuR{7} =\{\}$
    \item $\subEQ{7} =\{Z_3\mapsto Z_5,X_3\mapsto X_5\}$
    \item $\stepsubs{7}{U}=\{ L_7 \mapsto f(f(X_{8},f(Z_7,f(X_{8},f(X_7,f(Z_{8},X_7))))),L_{8}) \}$
\end{itemize}

\subsection{i=8}
\begin{itemize}
\item $\vert\vars{\useq{\Theta}{8}}\vert  - \vert\dom{\subEQ{8}}\vert - \vert\irrV{8}\vert  = 10$
    \item $\storeC{\Theta}{\useq{\Theta}{8}}$ is as follows
    $$\left\{\begin{array}{l}  
      L_8 \unif f(Z_5,f(X_{6},f(X_5,f(Z_{6},X_5)))) \\
      X_6 \unif f(X_{8},f(Z_7,f(X_{8},f(X_7,f(Z_{8},X_7)))))\\
      Z_3 \unif Z_5\\
      Z_5 \unif Z_3\\
      X_5 \unif X_3\\
      X_3 \unif X_5\\
    \end{array}\right\}$$
    \item $\irr{8}$ is as follows
        $$\irr{7}\stepsubs{7}{U}\cup\left\{\begin{array}{l}
         X_0\unif f(X_{6},f(Z_5,f(X_{6},f(X_5,f(Z_{6},X_5))))\\
          X_0\unif f(f(X_{8},f(Z_7,f(X_{8},f(X_7,f(Z_{8},X_7))))),L_{8})
    \end{array}\right\}$$
    \item $\FuR{9} =\{X_8,Z_8\}$
    \item $\subEQ{9} =\{Z_3\mapsto Z_5,X_3\mapsto X_5\}$
    \item $\stepsubs{9}{U}=\{ L_8 \mapsto f(f(X_{9},f(Z_8,f(X_{9},f(X_8,f(Z_{9},X_8))))),L_{9}) \}$
\end{itemize}
\subsection{i=9}
\begin{itemize}
    \item $\vert\vars{\useq{\Theta}{9}}\vert  - \vert\dom{\subEQ{9}}\vert - \vert\irrV{9}\vert  = 5$
    \item $\mathit{stab}(\uProb{\Theta},9)= .5$
    \item $\storeC{\Theta}{\useq{\Theta}{9}}$ is as follows
    $$\left\{\begin{array}{l}  
      L_9 \unif f(X_{6},f(X_5,f(Z_{6},X_5))) \\
      X_6 \unif f(X_{8},f(Z_7,f(X_{8},f(X_7,f(Z_{8},X_7)))))\\
      X_5 \unif X_3\\
      X_3 \unif X_5\\
    \end{array}\right\}$$
    \item $\irr{9}$ is as follows
        $$\irr{8}\stepsubs{8}{U}\cup\left\{\begin{array}{l}
          Z_3 \unif Z_5\\
          Z_5 \unif Z_3\\
          Z_3 \unif f(X_9,f(Z_8,f(X_9,f(X_8,f(Z_9,X_8))))) \\
	       Z_5\unif f(X_9,f(Z_8,f(X_9,f(X_8,f(Z_9,X_8)))))
    \end{array}\right\}$$
    \item $\FuR{9} =\{\}$
    \item $\subEQ{9} =\{X_3\mapsto X_5\}$
    \item $\stepsubs{9}{U}=\{ L_{9} \mapsto f(f(X_{10},f(Z_9,f(X_{10},f(X_9,f(Z_{10},X_9))))),L_{10}) \}$
\end{itemize}
\subsection{i=10}
\begin{itemize}
\item $\vert\vars{\useq{\Theta}{10}}\vert  - \vert\dom{\subEQ{10}}\vert - \vert\irrV{10}\vert  = 5$
    \item $\mathit{stab}(\uProb{\Theta},11)= .5$
    \item $\storeC{\Theta}{\useq{\Theta}{10}}$ is as follows
    $$\left\{\begin{array}{l}  
      L_{10} \unif f(X_5,f(Z_{6},X_5)) \\
      X_5 \unif X_3\\
      X_3 \unif X_5\\
      X_{8} \unif X_{10}\\
	Z_{10} \unif Z_{8}\\
	X_{10} \unif X_{8}\\
	Z_{8} \unif Z_{10}\\
    \end{array}\right\}$$
    \item $\irr{10}$ is as follows
        $$\irr{9}\stepsubs{9}{U}\cup\left\{\begin{array}{l}
        Z_{9}\unif Z_{7}\\
	X_{6}\unif f(X_{10},f(Z_{9},f(X_{10},f(X_{9},f(Z_{10},X_{9})))))\\
	X_{9}\unif X_{7}\\
	X_{7}\unif X_{9}\\
	Z_{7}\unif Z_{9}\\
	X_{6}\unif f(X_{8},f(Z_{7},f(X_{8},f(X_{7},f(Z_{8},X_{7})))))\\
    \end{array}\right\}$$
    \item $\FuR{10} =\{X_{10},Z_{10}\}$
    \item $\subEQ{10} =\{X_3\mapsto X_5,X_8\mapsto X_{10},Z_8\mapsto Z_{10}\}$
    \item $\stepsubs{10}{U}=\{ L_{10} \mapsto f(f(X_{11},f(Z_{10},f(X_{11},f(X_{10},f(Z_{11},X_{10}))))),L_{11}) \}$
\end{itemize}
\subsection{i=11}
\begin{itemize}
\item $\vert\vars{\useq{\Theta}{11}}\vert  - \vert\dom{\subEQ{11}}\vert - \vert\irrV{11}\vert  = 6$
    \item $\mathit{stab}(\uProb{\Theta},11)= .6$
    \item $\storeC{\Theta}{\useq{\Theta}{11}}$ is as follows
    $$\left\{\begin{array}{l}  
      L_{11} \unif f(Z_{6},X_5) \\
      X_3 \unif f(X_{11},f(Z_{10},f(X_{11},f(X_{10},f(Z_{11},X_{10})))))\\
      X_5 \unif f(X_{11},f(Z_{10},f(X_{11},f(X_{10},f(Z_{11},X_{10})))))\\
      X_5 \unif X_3\\
      X_3 \unif X_5\\
      X_{8} \unif X_{10}\\
	Z_{10} \unif Z_{8}\\
	X_{10} \unif X_{8}\\
	Z_{8} \unif Z_{10}\\
    \end{array}\right\}$$
    \item $\irr{11}$ is as follows
       $$\irr{10}\stepsubs{10}{U}$$
    \item $\FuR{11} =\{X_{11},Z_{11}\}$
    \item $\subEQ{11} =\{X_3\mapsto X_5,X_8\mapsto X_{10},Z_8\mapsto Z_{10}\}$
    \item $\stepsubs{11}{U}=\{ L_{11} \mapsto f(f(X_{12},f(Z_{11},f(X_{12},f(X_{11},f(Z_{12},X_{11}))))),L_{12}) \}$
\end{itemize}

\subsection{i=12}
\begin{itemize}
\item $\vert\vars{\useq{\Theta}{12}}\vert  - \vert\dom{\subEQ{12}}\vert - \vert\irrV{12}\vert  = 6$
    \item $\mathit{stab}(\uProb{\Theta},12)= .6$
    \item $\storeC{\Theta}{\useq{\Theta}{12}}$ is as follows
    $$\left\{\begin{array}{l}  
      X_5 \unif L_{12}   \\
       X_3 \unif L_{12}   \\
       L_{12}\unif f(X_{11},f(Z_{10},f(X_{11},f(X_{10},f(Z_{11},X_{10})))))\\
      X_3 \unif f(X_{11},f(Z_{10},f(X_{11},f(X_{10},f(Z_{11},X_{10})))))\\
      X_5 \unif f(X_{11},f(Z_{10},f(X_{11},f(X_{10},f(Z_{11},X_{10})))))\\
      X_5 \unif X_3\\
      X_3 \unif X_5\\
      X_{8} \unif X_{10}\\
	Z_{10} \unif Z_{8}\\
	X_{10} \unif X_{8}\\
	Z_{8} \unif Z_{10}\\
    \end{array}\right\}$$
    \item $\irr{12}$ is as follows
        $$\irr{11}\stepsubs{11}{U}\cup\left\{\begin{array}{l}
 	Z_{6} \unif f(X_{12},f(Z_{11},f(X_{12},f(X_{11},f(Z_{12},X_{11})))))\\
    \end{array}\right\}$$
    \item $\FuR{12} =\{\}$
    \item $\subEQ{12} =\{X_3\mapsto X_5,X_8\mapsto X_{10},Z_8\mapsto Z_{10}\}$
    \item $\stepsubs{12}{U}=\{ L_{12} \mapsto f(f(X_{13},f(Z_{12},f(X_{13},f(X_{12},f(Z_{13},X_{12}))))),L_{13}) \}$
        
\end{itemize}
\subsection{i=13}
\begin{itemize}
\item $\vert\vars{\useq{\Theta}{13}}\vert  - \vert\dom{\subEQ{13}}\vert - \vert\irrV{13}\vert  = 8$
    \item $\mathit{stab}(\uProb{\Theta},13)= .8$
    \item $\storeC{\Theta}{\useq{\Theta}{13}}$ is as follows
    $$\left\{\begin{array}{l}  
       L_{13}\unif f(Z_{10},f(X_{11},f(X_{10},f(Z_{11},X_{10}))))\\
      X_{8} \unif X_{10}\\
	Z_{10} \unif Z_{8}\\
	X_{10} \unif X_{8}\\
	Z_{8} \unif Z_{10}\\
 X_{11} \unif f(X_{13},f(Z_{12},f(X_{13},f(X_{12},f(Z_{13},X_{12})))))\\
    \end{array}\right\}$$
    \item $\irr{13}$ is as follows
        $$\irr{12}\stepsubs{12}{U}\cup\left\{\begin{array}{l}
    X_3 \unif f(X_{11},f(Z_{10},f(X_{11},f(X_{10},f(Z_{11},X_{10})))))\\
     X_5 \unif f(X_{11},f(Z_{10},f(X_{11},f(X_{10},f(Z_{11},X_{10})))))\\
      X_5 \unif X_3\\
      X_3 \unif X_5\\
    X_5 \unif f(f(X_{12},f(Z_{11},f(X_{12},f(X_{11},f(Z_{12},X_{11}))))),L_{12})\\
      X_3 \unif f(f(X_{12},f(Z_{11},f(X_{12},f(X_{11},f(Z_{12},X_{11}))))),L_{12})\\
    \end{array}\right\}$$
    \item $\FuR{13} =\{X_{13},Z_{13}\}$
    \item $\subEQ{13} =\{X_8\mapsto X_{10},Z_8\mapsto Z_{10}\}$
    \item $\stepsubs{13}{U}=\{ L_{13} \mapsto f(f(X_{14},f(Z_{13},f(X_{14},f(X_{13},f(Z_{14},X_{13}))))),L_{14}) \}$
\end{itemize}
\subsection{i=14}
\begin{itemize}
\item $\vert\vars{\useq{\Theta}{14}}\vert  - \vert\dom{\subEQ{14}}\vert - \vert\irrV{14}\vert  = 7$
    \item $\mathit{stab}(\uProb{\Theta},14)= .7$
    \item $\storeC{\Theta}{\useq{\Theta}{14}}$ is as follows
    $$\left\{\begin{array}{l}  
       L_{14}\unif f(X_{11},f(X_{10},f(Z_{11},X_{10})))\\
      X_{8} \unif X_{10}\\
	X_{10} \unif X_{8}\\
 X_{11} \unif f(X_{13},f(Z_{12},f(X_{13},f(X_{12},f(Z_{13},X_{12})))))\\
    \end{array}\right\}$$
    \item $\irr{14}$ is as follows
        $$\irr{13}\stepsubs{13}{U}\cup\left\{\begin{array}{l}
      Z_{8} \unif Z_{10}\\
 	Z_{10} \unif Z_{8}\\
       Z_{8} \unif f(X_{14},f(Z_{13},f(X_{14},f(X_{13},f(Z_{14},X_{13})))))\\
 	Z_{10} \unif f(X_{14},f(Z_{13},f(X_{14},f(X_{13},f(Z_{14},X_{13})))))\\
    \end{array}\right\}$$
    \item $\FuR{14} =\{\}$
    \item $\subEQ{14} =\{X_8\mapsto X_{10}\}$
    \item $\stepsubs{14}{U}=\{ L_{14} \mapsto f(f(X_{15},f(Z_{14},f(X_{15},f(X_{14},f(Z_{15},X_{14}))))),L_{15}) \}$
\end{itemize}
\subsection{i=15}
\begin{itemize}
    \item $\vert\vars{\useq{\Theta}{15}}\vert  - \vert\dom{\subEQ{15}}\vert - \vert\irrV{15}\vert  = 5$
    \item $\mathit{stab}(\uProb{\Theta},15)= .5$
    \item $\storeC{\Theta}{\useq{\Theta}{15}}$ is as follows
    $$\left\{\begin{array}{l}  
       L_{15}\unif f(X_{10},f(Z_{11},X_{10}))\\
      X_{8} \unif X_{10}\\
	X_{10} \unif X_{8}\\
X_{13}\unif X_{15}\\
	X_{15} \unif X_{13}\\
 Z_{15} \unif Z_{13}\\
	Z_{13} \unif Z_{15}\\
    \end{array}\right\}$$
    \item $\irr{15}$ is as follows
        $$\irr{14}\stepsubs{14}{U}\cup\left\{\begin{array}{l}
   X_{11} \unif f(X_{13},f(Z_{12},f(X_{13},f(X_{12},f(Z_{13},X_{12})))))\\
   Z_{14} \unif Z_{12}\\
	X_{11} \unif f(X_{15},f(Z_{14},f(X_{15},f(X_{14},f(Z_{15},X_{14})))))\\
	Z_{12} \unif Z_{14}\\
	X_{12} \unif X_{14}\\
	X_{14} \unif X_{12}\\
    \end{array}\right\}$$
    \item $\FuR{15} =\{X_{15},Z_{15}\}$
    \item $\subEQ{15} =\{X_8\mapsto X_{10},X_{13}\mapsto X_{15},Z_{13}\mapsto Z_{15}\}$
    \item $\stepsubs{15}{U}=\{ L_{15} \mapsto f(f(X_{16},f(Z_{15},f(X_{16},f(X_{15},f(Z_{16},X_{15}))))),L_{16}) \}$
    \item $\NstoreC{\Theta}{\useq{\Theta}{15}}$ is as follows
    $$\left\{\begin{array}{l}  
       L_{15}\unif f(X_{10},f(Z_{11},X_{10}))\\
    \end{array}\right\}$$
     \item $\NstoreC{\Theta}{\useq{\Theta}{5}}$ is as follows
    $$\left\{\begin{array}{l}  
       L_{5}\unif f(X_{0},f(Z_{1},X_{0}))\\
    \end{array}\right\}$$
    \item $\mu = \{X_{10}\mapsto X_{5},Z_{11}\mapsto Z_{1},L_{15}\mapsto L_{5}\}$
\end{itemize}
At this point the algorithm terminates. 

\end{document}